\theoremstyle{plain}
\newtheorem{thm}{Theorem}
\newtheorem{propn}{Proposition}
\newtheorem{lem}{Lemma}
\theoremstyle{remark}
\newtheorem{rem}{Remark}
\newcommand{\sst}{\scriptscriptstyle}
\renewcommand{\1}{\one}
\renewcommand{\2}{\two}
\newcommand{\beq}{\begin{equation}}
\newcommand{\eeq}{\end{equation}}
\newcommand{\pa}{\partial}
\newcommand{\ot}{\otimes}
\newcommand{\ra}{\to}
\newcommand{\SRN}{{\rm N}}
\newcommand{\fsl}{{\mathfrak s}{\mathfrak l}}
\newcommand{\al}{\alpha}
\newcommand{\be}{\beta}
\newcommand{\ga}{\gamma}
\newcommand{\de}{\delta}
\newcommand{\De}{\Delta}
\newcommand{\ep}{\epsilon}
\newcommand{\la}{\lambda}
\newcommand{\bz}{\bar{z}}
\newcommand{\CA}{{\mathcal A}}
\newcommand{\CB}{{\mathcal B}}
\newcommand{\CC}{{\mathcal C}}
\newcommand{\CD}{{\mathcal D}}
\newcommand{\CH}{{\mathcal H}}
\newcommand{\CL}{{\mathcal L}}
\newcommand{\CM}{{\mathcal M}}
\newcommand{\CO}{{\mathcal O}}
\newcommand{\CQ}{{\mathcal Q}}
\newcommand{\CR}{{\mathcal R}}
\newcommand{\CU}{{\mathcal U}}
\newcommand{\SA}{{\mathsf A}}
\newcommand{\SB}{{\mathsf B}}
\newcommand{\SC}{{\mathsf C}}
\newcommand{\SD}{{\mathsf D}}
\newcommand{\SH}{{\mathsf H}}
\newcommand{\SM}{{\mathsf M}}
\newcommand{\SO}{{\mathsf O}}
\newcommand{\SQ}{{\mathsf Q}}
\newcommand{\ST}{{\mathsf T}}
\newcommand{\SU}{{\mathsf U}}
\newcommand{\SW}{{\mathsf W}}
\newcommand{\SY}{{\mathsf Y}}
\renewcommand{\sf}{{\mathsf f}}
\newcommand{\spp}{{\mathsf p}}
\newcommand{\su}{{\mathsf u}}
\newcommand{\sv}{{\mathsf v}}
\newcommand{\0}{{\mathfrak 0}}
\newcommand{\one}{{\mathfrak 1}}
\newcommand{\two}{{\mathfrak 2}}
\newcommand{\BB}{{\mathbb B}}
\newcommand{\BR}{{\mathbb R}}
\newcommand{\BC}{{\mathbb C}}
\newcommand{\BS}{{\mathbb S}}
\newcommand{\BT}{{\mathbf t}}
\newcommand{\BZ}{{\mathbb Z}}
\newcommand{\srn}{{\sst\rm N}}
\newcommand{\SRM}{{\rm M}}
\newcommand{\srm}{{\sst\rm M}}
\newcommand{\rf}[1]{(\ref{#1})}
\renewcommand{\bz}{{\mathbf z}}
\newcommand{\en}{{\rm e}_\SRN}
\newcommand{\aufz}
{\begin{list}{$\bullet$}{\topsep0cm \itemsep0cm \parsep0cm}}
\newcommand{\eaufz}{\end{list}}
\begin{document}


\title{The Sine-Gordon model revisited I}

\author{G. Niccoli$^{(1)}$, J. Teschner$^{(1)}$}

\address{$^{(1)}$ Notkestr. 85, 22603 Hamburg, Germany}

\maketitle

{\vspace{-7cm} \tt {DESY 09-170}}

\vspace{8cm}

{\bf Abstract}
\\
\begin{quotation}
\hspace{-0.7cm} We study integrable lattice regularizations of the Sine-Gordon model with the help of the Separation of Variables method of Sklyanin and the Baxter $\SQ$-operators. This leads us to the complete characterization of the spectrum (eigenvalues and eigenstates), in terms of the solutions to 
the Bethe ansatz equations. The completeness of the set of states that
can be constructed from the solutions to the Bethe ansatz equations
is proven by our approach.
\end{quotation}

\newpage

\tableofcontents

\newpage

\section{Introduction}

\subsection{Motivation}

The study of the Sine-Gordon model has a long history. It has in particular
served
as an important
toy model for interacting quantum field theories. The integrability
of this model gives access to detailed non-perturbative information
about various characteristic quantities, which allows one to
check physical ideas about quantum
field theory against exact quantitative results.

It is particularly fascinating to compare the Sine-Gordon model
with the Sinh-Gordon model. The Hamiltonian density
$h_{SG}$ of the Sine-Gordon model and the
corresponding object $h_{ShG}$ of the Sinh-Gordon model,
\begin{equation}\label{Hdef}
H\,=\,\int_0^{R}\frac{dx}{4\pi}\;h(x)\,,\qquad
\begin{aligned}
 h_{SG}^{} &\,=\,\Pi^2+(\pa_x\phi)^2+8\pi\mu\cos(2\be\phi)\,,\\
 h_{ShG}^{}& \,=\,\Pi^2+(\pa_x\phi)^2+8\pi\mu\cosh(2b\phi)\,,
\end{aligned}
\end{equation}
are related by analytic continuation w.r.t.
the parameter $\beta$ and setting $\beta=ib$.
The integrability of both
models is governed by the same algebraic structure
$\CU_q(\widehat{\fsl}_2)$ with $q=e^{-\pi i \be^2}$.
This leads one to expect that both models should be closely
related, or at least have the
same ``degree of complexity''.

The physics of these two models turns
out to be very different, though. Many of the key objects characteristic
for the respective quantum field theories are not related by analytic
continuation in the usual sense. While the Sine-Gordon
model has much richer spectrum of excitations and scattering
theory in the infrared (infinite $R$) limit, one may observe rather
intricate structures in the UV-limit of the Sinh-Gordon model \cite{Z2},
which turn out to be related to the Liouville theory \cite{ZZ,T,BT}.
These differences can be traced back to the fact that the periodicity
of the interaction term $8\pi\mu\cos(2\be\phi)$ of the Sine-Gordon model
allows one to treat the variable $\phi$ as angular variable parameterizing
a compact space, while $\phi$ is truly non-compact in the Sinh-Gordon model.

The qualitative differences between the Sine-Gordon and the Sinh-Gordon model
can be seen as a simple model for the differences between Nonlinear
Sigma-Models on compact and non-compact spaces respectively. This forms
part of our motivation to revisit the Sine-Gordon model in a way that
makes comparison with the Sinh-Gordon model easier.

\subsection{Open problems}

A lot of important exact results are known about the Sine-Gordon model.
Well-understood
are in particular the scattering theory in the
infinite volume. The spectrum of elementary particle excitations
and the S-matrix of the theory are known exactly  \cite{KT77,Za77,FST,Ko80}. Relatedly, there
is a wealth of information on the form-factors of local fields, see e.g. \cite{Sm92,BFKZ,LZ01} for
the state of the art and further references.
In the case of finite spacial volume, the nonlinear integral equations\footnote{This type of equations were before introduced in a different framework in \cite{KP91,KBP91}}
derived by Destri and De Vega \cite{DDV92,DDV94,DDV97,FMQR97,FRT98,FRT99} 
give a powerful tool for the study of
the finite-size corrections to the spectrum of the Sine-Gordon model.

However, there are several questions, some of them fairly basic,
where our understanding does not seem to be fully satisfactory.
We do not have exact results on correlation functions on the one hand,
or on expectation values of local fields in the finite volume
on the other hand at present.

Even the present level of understanding
of the spectrum of the model does not seem to be fully satisfactory.
The truth of the
commonly accepted hypothesis that the equations derived by
Destri and De Vega describe all of the states of the Sine-Gordon model
has not been demonstrated yet.
The approach of Destri and De Vega is
based on the Bethe ansatz in the fermionized version of
the Sine-Gordon model, the massive Thirring model \cite{DDV87}.
This approach a priori only allows one
to describe the states with even topological charge, and
it inherits from its roots in the algebraic Bethe ansatz some 
basic difficulties
like the issue of its completeness.

In the Bethe ansatz approach it is a long-standing
problem to prove that the set of states that is obtained in this
way is complete. Early attempts to show completeness used the
so-called string hypothesis which is hard to justify, and 
sometimes even incorrect. At the moment there are only a few
examples of integrable models where the completeness of the
Bethe ansatz has been proven, including the XXX Heisenberg model,
see \cite{MTV} and references therein. A similar result has not
been available for the Sine-Gordon model or its lattice
discretizations yet. One of the main results in this paper
is the completeness result for the lattice Sine-Gordon model. We prove a one-to-one correspondence
between eigenstates of the transfer matrix and the solutions to
a system of algebraic equations of the 
Bethe ansatz type. For brevity, we will refer to this result
as {\it completeness} of the Bethe ansatz. 
We furthermore show that the spectrum of the transfer matrix 
is simple in the case of odd number of lattice sites, and find the operator
which resolves the possible double degeneracy of the
spectrum of the transfer matrix in the case of even 
number of lattice sites.


\subsection{Our approach}

We will use a lattice regularization
of the Sine-Gordon model that is different from the
one used by Destri and De Vega. It goes back to \cite{FST,IK},
and it has more recently been studied in \cite{F94,FV94}.
For even number of lattice sites the model is related to the
Fateev-Zamolodchikov model \cite{FZ}, as was observed in \cite{FV94},
or more generally to the Chiral Potts model, as
discussed in the more recent works \cite{BBR,Ba08}.
This allows one to use some powerful algebraic tools developed 
for the study of the chiral Potts model \cite{BS} 
in the analysis of the lattice Sine-Gordon model. 

The issue of completeness of the Bethe ansatz had not 
been solved in any of these models yet. 
What allows us to address this issue is the 
combination of Separation of Variables method (SOV-method) of
Sklyanin \cite{Sk1,Sk2,Sk3}
with the use of the $\SQ$-operators introduced by Baxter \cite{Ba72}.
We will throughout be working with a certain number
of inhomogeneity parameters. It turns out that the SOV-method 
works in the case of generic inhomogeneity parameters where
the algebraic Bethe ansatz method fails.
It replaces the algebraic Bethe ansatz as a tool to 
construct the eigenstates of the transfer matrix
which correspond to the solutions of Bethe's equations.
In a future publication we will show that the results of
our approach are consistent with the
results of Destri and De Vega.

Another advantage of the lattice discretization used in 
this paper which 
may become useful in the future
is due to the fact that one directly works with the
discretized Sine-Gordon degrees of freedom, which is not the case
in the lattice formulation used by Destri and De Vega.
Working more directly with the Sine-Gordon degrees of freedom
should in particular 
be useful for the problem to calculate expectation
values of local fields. This in particular requires the determination of the SOV-representation of local fields analogously to what has been done in the framework of the algebraic Bethe ansatz in \cite{KMT99,MT00}. 
The SOV-method in principle offers a rather
direct way to the construction of the expectation values,
as illustrated in the case of the Sinh-Gordon model by the
work \cite{Lu}.

\vspace*{1mm}
{\par\small
{\em Acknowledgements.} We would like to
thank V. Bazhanov and F. Smirnov
for stimulating discussions, and J.-M. Maillet for interest in our work.

We gratefully acknowledge support
from the EC by the Marie Curie Excellence
Grant MEXT-CT-2006-042695.}

\section{Definition of the model}

\setcounter{equation}{0}

\subsection{Classical Sine-Gordon model}

The classical counterpart of the Sine-Gordon model is a dynamical system
whose degrees of freedom are described by
the field $\phi(x,t)$ defined for
$(x,t)\in[0,R]\times \BR$ with periodic boundary conditions $\phi(x+R,t)=\phi(x,t)$.
The dynamics of this model
may be described in the Hamiltonian form in terms of variables
$\phi(x,t)$, $\Pi(x,t)$, the Poisson brackets being
\[
\{\,\Pi(x,t)\,,\,\phi(x',t)\,\}\,\,=\,2\pi\,\de(x-x')\,.
\]
The time-evolution of an arbitrary observable $O(t)$ is then given as
\[
\pa_tO(t)\,=\,\{\,H\,,\,O(t)\,\}\,,
\]
with Hamiltonian $H$ being defined in \rf{Hdef}.

The equation of motion for the
Sine-Gordon model can be represented as a zero curvature condition,
\begin{equation}
[\,\pa_t-V(x,t;\lambda)\,,\,\pa_x-U(x,t;\lambda)\,]\,=\,0\,,
\end{equation}
with matrices $U(x,t;\lambda)$  and  $V(x,t;\lambda)$ being given by
\begin{equation}
\begin{aligned}\label{ZCC}
&U(x,t;\lambda)\,=\,\left(
\begin{matrix}i\frac{\be}{2}\Pi & -{i}m(\la e^{-i\be\phi}-\la^{-1}e^{i\be\phi})\\
-{i}m(\la e^{i\be\phi}-\la^{-1}e^{-i\be\phi}) & - i\frac{\be}{2}\Pi
\end{matrix}\right)\\
&V(x,t;\la)\,=\,\left(\begin{matrix} i\frac{\be}{2}\phi' & +{i}m(\la e^{-i\be\phi}+\la^{-1}e^{i\be\phi}) \\ +{i}m(\la e^{i\be\phi}+\la^{-1}e^{-i\be\phi})
& -i\frac{\be}{2}\phi'
\end{matrix}\right)
\end{aligned}
\end{equation}
and $m$ related to $\mu$ by $m^2=\pi \be^2\mu$.

\subsection{Discretization and canonical quantization}

In order to regularize the ultraviolet divergences that arise in the quantization of these models
we will pass to integrable lattice discretizations.
First discretize the field variables according to the standard recipe
\begin{equation*}
 \phi_n \equiv \phi(n\Delta) \,, \quad
 \Pi_n \equiv \De\Pi(n\Delta) \,,
\end{equation*}
where $\Delta=R/\SRN$ is the lattice spacing.
In the canonical quantization one would replace $\phi_n$, $\Pi_n$ by
corresponding quantum operators with commutation relations
\begin{equation}
[\,\phi_n\,,\,\Pi_n\,]\,=\,2\pi i\de_{n,m}\,.
\end{equation}
Planck's constant can be identified with $\be^2$ by means of a rescaling of the
fields.

The scheme of quantization of the Sine-Gordon model considered in this paper
will
deviate from the canonical quantization by using
$\su_n\equiv e^{i\frac{\be}{2}\Pi_n}$ and
$\sv_n\equiv e^{-i\beta\phi_n}$ as basic variables. For technical reasons we
will consider representations where both $\su_n$ and $\sv_n$ have discrete
spectrum. Let us therefore take a moment to explain why one may nevertheless
expect that the
resulting quantum theory will describe the quantum Sine-Gordon
model in the continuum limit.

First note (following the discussion in \cite{Za94})
that the periodicity of the potential $8\pi\mu \cos(2\beta\phi)$
in \rf{Hdef} implies that shifting the zero mode
$\phi_0\equiv \frac{1}{R}
\int_0^R dx \,\phi(x)$ by the amount $\pi/\beta$ is a symmetry.
In canonical quantization one could build
the unitary operator $\SW=e^{\frac{i}{2\be}R\spp_\0}$
which generates this symmetry
out of the zero mode $\spp_\0
\equiv\frac{1}{R}\int_0^R dx \,\Pi(x)$ of the conjugate
momentum $\Pi$.
$\SW$ should commute with the Hamiltonian $\SH$.
One may therefore diagonalize $\SW$ and $\SH$ simultaneously, leading
to a representation for the space of states in the form
\begin{equation}
\CH\,\simeq\,\int_{S_1}d\alpha\;\CH_\al\,\qquad{\rm where}\qquad
\SW\cdot\CH_\al\,=\,e^{i\al}\CH_\al\,.
\end{equation}
An alternative
way to take this symmetry into account in the construction of the
quantum theory is to construct the quantum theory separately for each
$\al$-sector. This implies that the field $\phi$ should be treated as
periodic with periodicity $\pi/\beta$, and that the conjugate
variables $\Pi_n$ have eigenvalues quantized in units
of $\beta$, with spectrum
contained in $\{\,2\al\be/ \SRN+4\pi \beta k\,;\,k\in\BZ\,\}$. The
spectrum of $\Pi_n$ is
such that the operator $\SW=e^{\frac{i}{2\be}R\spp_\0}$, with
$R\spp_\0$ approximated
by $\sum_{n=1}^\SRN \Pi_n$, is realized as the operator of
multiplication by $e^{i\al}$.

Let us furthermore note
that it is possible, and technically
useful to assume that the lattice field observable $\phi_n$ has discrete spectrum, which we will take
to be quantized in units of $\beta$. In order to see this,
note that the field $\phi(x)$ is not a well-defined observable due to
short-distance singularities, whereas smeared fields
like $\int_I dx\,\phi(x)$,
$I\subset [0,R]$ may be well-defined.
The observable $\int_I dx\,\phi(x)$
would in the lattice discretization be approximated by
\begin{equation}
\phi[I]\,\sim\,\sum_{n\Delta\in I} \Delta\phi_n\,.
\end{equation}
So even if $\phi_n$ is discretized in units of $\beta$, say, we find that the observable
$\phi[I]$ is quantized in units of $\Delta\beta$, which fills out a continuum for $\Delta\ra 0$.

\subsection{Non-canonical quantization}

As motivated above, we will use a quantization scheme based on the quantum counterparts of the  variables
$u_n$, $v_n$
$n=1,\dots,\SRN$ related to $\Pi_n$, $\phi_n$ as
\begin{equation}
u_n\,=\,e^{i\frac{\be}{2}\Pi_n}\,,\qquad
v_n\,=\,e^{-i\beta\phi_n}\,.
\end{equation}
The quantization of the
variables $u_n$, $v_n$ produces
operators $\su_n$, $\sv_m$ which satisfy the
relations
\begin{equation}\label{Weyl}
\su_n\sv_m=q^{\de_{nm}}\sv_m\su_n\,,\qquad{\rm where}\;\;
q=e^{-\pi i \beta^2}\,.
\end{equation}
We are looking for representations for the
commutation relations \rf{Weyl} which have discrete spectrum
both for $\su_n$ and $\sv_n$.
Such representations exist provided that
the parameter $q$ is a root of unity,
\begin{equation}
\beta^2\,=\,\frac{p'}{p}\,,\qquad p,p\in\BZ^{>0}\,.
\end{equation}
We will restrict our attention to the case $p$ odd and $p'$ even so that $q^{p}=1$.
It will often be convenient to parameterize $p$
as
\begin{equation}
p\;=\;2l+1\,,\qquad l\in\BZ^{\geq 0}\,.
\end{equation}
Let us consider the subset
$\BS_p=\{q^{2n};n=0,\dots,2l\}$ of the unit circle. Note that
$\BS_p=\{q^{n};n=0,\dots,2l\}$ since $q^{2l+2}=q$.
This allows us to
represent the operators $\su_n$, $\sv_n$ on the space of
complex-valued functions $\psi:\BS_p^{\SRN}\ra \BC$ as
\begin{equation}\label{reprdef}
\begin{aligned}
&\su_n\cdot\psi(z_1,\dots,z_\SRN)\,=\,u_n z_n\psi(z_1,\dots,
z_n,\dots,z_\SRN)\,,\\
&\sv_n\cdot\psi(z_1,\dots,z_\SRN)\,=\,v_n\psi(z_1,\dots, q^{-1} z_n,\dots,z_\SRN)\,.
\end{aligned}
\end{equation}
The representation is such that the operator $\su_n$
is represented as a multiplication operator.
The parameters
$u_n$, $v_n$ introduced in \rf{reprdef} can be interpreted as
``classical expectation values'' of the operators $\su_n$ and $\sv_n$.
The discussion in the previous subsection suggests that the $v_n$ will be irrelevant in
the continuum limit, while the average value of $u_n$ will be related to
the eigenvalue $e^{i\al}$ of $\SW$ via $u_n=\exp(i\beta^2{\al}/{\SRN})$.

\subsection{Lattice dynamics}\label{dyn}

There is a beautiful discrete time evolution that can be
defined in terms of the
variables introduced above which reproduces the Sine-Gordon
equation in the classical continuum limit \cite{FV94}.
It is simplest in the case where $u_n=1$, $v_n=1$,
$n=1,\dots,\SRN$. We will mostly\footnote{Except for Section \ref{SOV}.}
restrict to this case in the rest of this paper.

More general cases were treated in
\cite{BBR,Ba08}.

\subsubsection{Parameterization of the initial values}

As a convenient set of variables let us introduce the
observables $f_{k}$ defined as
\begin{equation} f_{2n}\,\equiv\,e^{-2i\beta\phi_n}\,,\qquad
f_{2n-1}\,\equiv\,
e^{i\frac{\beta}{2}(\Pi_n+\Pi_{n-1}-2\phi_n-2\phi_{n-1})}\,.
\end{equation}
These observables turn out to represent the initial data
for time evolution in a particularly convenient way.
The quantum operators $\sf_n$ which
correspond to the classical observables $f_n$ satisfy the
algebraic relations
\begin{equation}\label{funalg}
\sf_{2n\pm 1}\,\sf_{2n}\,=\,q^2\,\sf_{2n}\,\sf_{2n\pm 1}\,,\quad q=e^{-\pi i \beta^2}\,,
\qquad
\sf_{n}\,\sf_{n+m}\,=\,\sf_{n+m}\,\sf_{n}\;\;{\rm for}\;\;m\geq 2\,.
\end{equation}
There
exist simple representations of the algebra
\rf{funalg} which may  be constructed out of the
operators $\su_n$, $\sv_n$,
given by
\begin{equation}\label{f-uv}
\sf_{2n}\,=\,\sv_n^2\,,\qquad\sf_{2n-1}\,=\,\su_n^{}\su_{n-1}\,.
\end{equation}
The change of variables defined in \rf{f-uv} is invertible if $\SRN$ is odd.

\subsubsection{Discrete evolution law}

Let us now describe the discrete
time evolution proposed by
Faddeev and Volkov \cite{FV94}.
Space-time is replaced by the cylindric lattice
\[
\CL\,\equiv\,\big\{\,(\nu,\tau)\,,\,\nu\in\BZ/\SRN\BZ\,,\,\tau\in\BZ\,,\,\nu+\tau={\rm even}\,\big\}\,.
\]
The condition that $\nu+\tau$ is even means that the lattice is rhombic: The lattice points closest
to $(\nu,\tau)$ are $(\nu\pm 1,\tau+1)$ and $(\nu\pm 1,\tau-1)$.
We identify the variables $\sf_n$ with the initial values of a discrete "field" $\sf_{\nu,\tau}$ as
\[ \sf_{2r,0}\,\equiv\,\sf_{2r}\,,\qquad
\sf_{2r-1,1}\,\equiv\,\sf_{2r-1}\,.
\]
One may then extend the definition recursively to all $(\nu,\tau)\in\CL$ by means of the evolution law
\begin{equation}\label{Hirota}
{\sf}_{\nu,\tau+1}\,\equiv\,g_\kappa^{}\big(q\sf_{\nu-1,\tau}^{}\big)\cdot
\sf_{\nu,\tau-1}^{-{1}}\cdot g_\kappa^{}\big(q\sf_{\nu+1,\tau}^{}\big)
\,,
\end{equation}
with function $g$ defined as
\begin{equation}\label{gkappadef}
g_\kappa(z)\,=\,\frac{\kappa^2+z}{1+\kappa^2 z}
\end{equation}
where $\kappa$ plays the role of a scale-parameter of the theory.
We refer to \cite{FV94} for a nice discussion of the relation between the lattice evolution equation \rf{Hirota} and the classical Hirota
equation, explaining in particular how to recover
the Sine-Gordon equation in the classical continuum limit.

\subsubsection{Construction of the evolution operator}

In order to construct the  unitary operators $\SU$ that generate the time evolution \rf{Hirota}
let us introduce the function
\begin{align}\label{W-big}
& W_{\la}(q^{2n})\,=\,\prod_{r=1}^{n}
\frac{1+\la q^{2r-1}}{\la+ q^{2r-1}}\,,
\end{align}
which is cyclic, i.e. defined on $\BZ_p$. The function $W_\la(z)$ is a solution to the functional equation
\begin{align}\label{funrel}
& (z+\la)W_\la(qz)\,=\,(1+\la z)W_\la(q^{-1}z)\,,
\end{align}
which satisfies the unitarity relation
\begin{equation}
(W_\la(z))^*_{}\,=\,(W_{\la^*}^{}(z))^{-1}\,.
\end{equation}
Note in particular that $W_\la(z)$ is "even", i.e. $W_\la(z)=W_{\la}(1/z)$.
Further properties of this function are collected in Appendix A.

Let us then consider the operator $\SU$, defined as
\begin{equation}
\SU\,=\,\prod_{n=1}^{\SRN}W_{\kappa^{-2}}(\sf_{2n})
\cdot\SU_0\cdot\prod_{n=1}^{\SRN}W_{\kappa^{-2}}(\sf_{2n-1})\,,
\end{equation}
where $\SU_0$ is the parity operator that acts as $\SU_0^{}\cdot\sf_k^{}=\sf^{-1}_k\cdot\SU_0^{}$.
It easily follows from \rf{funrel} that $\SU$ is indeed the generator of the time-evolution \rf{Hirota},
\begin{equation}
\sf_{\nu,\tau+1}\,=\,\SU^{-1}\cdot \sf_{\nu,\tau-1}\cdot\SU\,.
\end{equation}
One of our tasks is to exhibit the integrability of this
discrete time evolution.

\section{Integrability}

\setcounter{equation}{0}

The integrability of the lattice Sine-Gordon model is known 
\cite{IK,FV,BKP93,BBR}. The most convenient way to formulate it
uses the Baxter $\SQ$-operators \cite{Ba72}.
These operators have been constructed 
for the closely related Chiral Potts model in \cite{BS}.
By means of the relation between the lattice Sine Gordon
model and the Fateev-Zamolodchikov model summarized in 
Appendix \ref{FZ} one may adapt these constructions 
to the formulation used in this paper. 
For the reader's convenience we will give a self-contained summary of the 
construction of the $\ST$- and $\SQ$-operators and of their 
relevant properties in the following section.

\subsection{$\ST$-operators}\label{T-op}

As usual in the quantum inverse scattering method, we will represent
the family $\CQ$ by means of a Laurent-polynomial $\ST(\la)$
which depends on the spectral parameter $\la$.
The definition of operators $\ST(\la)$
for the models in question is standard. It is of the general form
\begin{equation}\label{Mdef}
\ST^{}(\la)\,=\,{\rm tr}_{\BC^2}^{}\SM(\la)\,,\qquad \SM(\la)\,\equiv\,
L_\SRN^{}(\la/\xi_\SRN)\dots L_1^{}(\la/\xi_1)\,,
\end{equation}
where we have introduced inhomogeneity parameters $\xi_1,\dots,\xi_\SRN$ as a useful technical device.
The Lax-matrix may be chosen as
\begin{equation}\label{Lax}\begin{aligned}
 L^{\rm\sst SG}_n(\la)
  &= \frac{\kappa_n}{i} \left( \begin{array}{cc}i\,\su_n^{}(q^{-\frac{1}{2}}\kappa_n^{}\sv_n^{}+q^{+\frac{1}{2}}\kappa^{-1}_n\sv_n^{-1}) &
\la_n^{} \sv_n^{} - \la^{-1}_n \sv_n^{-1}  \\
 \la_n^{} \sv_n^{-1} - \la^{-1}_n \sv_n^{} &
i\,\su_n^{-1}(q^{+\frac{1}{2}}\kappa^{-1}_n\sv_n^{}+q^{-\frac{1}{2}}\kappa_n^{}\sv_n^{-1})
 \end{array} \right) .
\end{aligned}\end{equation}
An important motivation for the definitions \rf{Mdef}, \rf{Lax} comes from the fact that
the Lax-matrix $L^{\rm\sst SG}_n(\la)$ reproduces the Lax-connection $U(x)$ in the continuum limit.

The elements of the matrix $\SM(\la)$ will be denoted by
\begin{equation}\label{ABCD}
\SM(\la)=\left(\begin{matrix}\SA(\la) & \SB(\la)\\
\SC(\la) & \SD(\la)\end{matrix}\right)\,.
\end{equation}
They satisfy commutation relations that may be summarized
in the form
\begin{equation}\label{YBA}
R(\la/\mu)\,(\SM(\la)\ot 1)\,(1\ot\SM(\mu))\,=\,(1\ot\SM(\mu))\,(\SM(\la)\ot 1)R(\la/\mu)\,,
\end{equation}
where the auxiliary R--matrix is given~by
\begin{equation}\label{Rlsg}
 R(\la) =
 \left( \begin{array}{cccc}
 q^{}\la-q^{-1}\la^{-1} & & & \\ [-1mm]
 & \la-\la^{-1} & q-q^{-1} & \\ [-1mm]
 & q-q^{-1} & \la-\la^{-1} & \\ [-1mm]
 & & &  q\la-q^{-1}\la^{-1}
 \end{array} \right) \,.
\end{equation}
It will be useful for us to regard the definition \rf{Mdef} as the construction of operators
which generate a representation $\CR_\SRN$ of the so-called
Yang-Baxter
algebra defined by the quadratic relations \rf{YBA}. The representation
$\CR_\SRN$ is  characterized by the $4\SRN$ parameters $\kappa=(\kappa_1,\dots,\kappa_\SRN)$,
$\xi=(\xi_1,\dots,\xi_\SRN)$, $u=(u_1,\dots,u_\SRN)$ and $v=(v_1,\dots,v_\SRN)$.

The fact that the elements of $\SM(\la)$ satisfy the commutation relations \rf{YBA} forms the basis
for the application of the quantum inverse scattering method.
The mutual commutativity of the $\ST$-operators,
\begin{equation}
[\,\ST(\la)\,,\,\ST(\mu)\,]\,=\,0\,,
\end{equation}
follows from \rf{YBA} by standard arguments. The expansion of $\ST(\la)$ into powers of $\la$ produces
$\SRN$ algebraically independent operators $\ST_1,\dots,\ST_{\SRN}$.
Our main objective in the following will be the study of the spectral problem
for $\ST(\la)$. The importance of this spectral problem follows from the
fact that the time-evolution operator
$\SU$ of the lattice Sine-Gordon model will be shown to commute with $\ST(\la)$ in the next section.

\subsection{$\SQ$-operators}

Let us now introduce the Baxter $\SQ$-operators
$\SQ(\mu)$. These
operators are mutually commuting for
arbitrary values of the spectral parameters $\la$ and $\mu$, and satisfy a functional relation of
the form
\begin{equation}
\ST(\la)\SQ(\la)\,=\,{\tt a}(\la)\SQ(q^{-1}\la)+{\tt d}(\la)\SQ(q\la)\,,
\end{equation}
with $a(\la)$ and $d(\la)$ being certain model-dependent coefficient functions. The generator of lattice time
evolution will be constructed from the specialization of the $\SQ$-operators to certain values of the spectral
parameter $\la$, making the integrability of the evolution manifest.

\subsubsection{Construction}

In order to construct the $\SQ$-operators let us introduce the following renormalized version of the function $W_{\la}(z)$,
\begin{align}
& w_{\la}(q^{2n})
\,=\,
\prod_{r=1}^{n}\frac{1+\la q^{2r-1}}{\la+ q^{2r-1}}
\prod_{r=1}^{l}\frac{\la+ q^{2r-1}}{1+q^{2r-1}}\,,
\end{align}
The function $w_\la(z)$ is the unique  solution to the functional
equation \rf{funrel} which
is a  polynomial of order $l$ in $\la$ and which satisfies the
normalization condition $w_1(q^{2n})=1$.

The $\SQ$-operators can then be constructed in the form
\begin{equation}
\SQ(\la,\mu)\,=\,\SY(\la)\cdot(\SY(\mu^*))^{\dagger}\,,
\end{equation}
where $\SY(\la)$ is defined by its matrix elements $Y_\la^{}(\bz,\bz')\,\equiv\,\langle\,{\mathbf z}\,|\,\SY(\la)\,|\,{\mathbf z}'\,\rangle$
which read
\begin{equation}\label{Ydef}
Y_\la^{}(\bz,\bz')\,=\,\prod_{n=1}^{\SRN}\overline{w}_{\ep\la/\kappa_n\xi_n}^{}(z_n^{}/z_n')\,w_{\ep\la\kappa_n/\xi_n}^{}(z_n^{}z_{n+1}')\,,
\end{equation}
where $\ep=-iq^{-\frac{1}{2}}$, and $\overline{w}_\la(z)$ is the discrete Fourier transformation of $w(z)$,
\begin{align}\label{FT}
 \overline{w}_{\la}(z)\,=\,\frac{1}{p}\sum_{r=-l}^l
z^r\,w_{\la}(q^{r})\,,\qquad w_\la(y)\,=\,\sum_{r=-l}^{l} y^{-r}\,\overline{w}_{\la}(q^r)\,.
\end{align}
Note in particular the normalization condition $\overline{w}_1(q^r)=\de_{r,0}$.

Despite the fact that $\SQ(\la,\mu)$ is symmetric in $\la$ and  $\mu$, $\SQ(\la,\mu)=\SQ(\mu,\la)$ as follows from the
identity \rf{Yex} proven  in Appendix B, we will mostly consider $\mu$ as a fixed parameter which will later be chosen
conveniently. This being understood we will henceforth write $\SQ(\la)$ whenever the dependence of
$\SQ(\la,\mu)$ on $\mu$ is not of interest.

\subsubsection{Properties}

\begin{thm} \label{Qprop} $\quad$ --- $\quad${\bf Properties of $\ST$- and $\SQ$-operators}$\quad$ ---

{\sc (A) Analyticity}

The operator $\la^{\tilde{\SRN}}\ST(\la)$ is a polynomial in $\la^2$ of degree\footnote{Here, we use the notation ${\rm e}_\SRN=1$ for even $\SRN$, ${\rm e}_\SRN=0$ otherwise.} $\tilde{\SRN}:=\SRN+\en-1$ while the operator $\SQ(\la)$ is a polynomial in $\la$ of maximal degree $2l\SRN$.
In the case $\SRN$ odd the operators $\SQ_{2l\SRN}:=\lim_{\la\ra\infty}\la^{-2l\SRN}\SQ(\la)$ and $\SQ_0:=\SQ(0)$ are invertible operators and the normalization of the $\SQ$-operator can be fixed by $\SQ_{2l\SRN}={\rm id}$.

{\sc (B) Baxter equation}

The operators $\ST(\la)$ and $\SQ(\la)$ are related by the Baxter equation
\begin{equation}\label{BAX}
\ST(\la)\SQ(\la)\,=\,{\tt a}_\SRN^{}(\la)\SQ(q^{-1}\la)+{\tt d}_\SRN^{}(\la)\SQ(q\la)\,,
\end{equation}
with coefficient functions
\begin{equation}\label{addef}\begin{aligned}
& {\tt a}_\SRN^{}(\la)\,=\,(-i)^\SRN\prod_{r=1}^{\SRN}\kappa_r/\la_r(1+iq^{-\frac{1}{2}}\la_r\kappa_r)(1+iq^{-\frac{1}{2}}\la_r/\kappa_r)\,,\\
& {\tt d}_\SRN^{}(\la)\,=\,(+i)^\SRN\prod_{r=1}^{\SRN}\kappa_r/\la_r(1-iq^{+\frac{1}{2}}\la_r\kappa_r)(1-iq^{+\frac{1}{2}}\la_r/\kappa_r)\,.
\end{aligned}
\end{equation}

{\sc(C) Commutativity}
\begin{equation}
\begin{aligned}
&[\, \SQ(\la)\,,\,\SQ(\mu)\,]\,=\,0\,,\\
&[\, \ST(\la)\,,\,\SQ(\mu)\,]\,=\,0\,,
\end{aligned}\qquad \forall \la,\mu\,.
\end{equation}

{\sc (S) Self-adjointness}\\
Under the assumption $\xi _{r}$ and $\kappa _{r}$ real or imaginary numbers, the following holds:
\begin{equation}
(\ST(\la))^{\dagger}\,=\,\ST(\la^*)\,,\qquad(\SQ(\la))^{\dagger}\,=\,\SQ(\la^*)\,.
\end{equation}
\end{thm}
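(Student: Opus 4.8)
The plan is to reduce every assertion to a \emph{local} statement --- a property of a single Lax matrix $L^{\rm\sst SG}_n(\la)$ or of the single-site weight $w_\la$ --- and then to globalize by taking the product over the $\SRN$ sites and tracing over the auxiliary space $\BC^2$.

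For the analyticity in (A), I would first record the parity $L^{\rm\sst SG}_n(-\la)=\sigma_3 L^{\rm\sst SG}_n(\la)\sigma_3$, which is visible from \rf{Lax}: sending $\la\ra-\la$ flips the sign of the two off-diagonal entries and leaves the diagonal ones untouched. Telescoping the inner $\sigma_3^2=\Id$ in \rf{Mdef} gives $\SM(-\la)=\sigma_3\SM(\la)\sigma_3$ and hence $\ST(-\la)=\ST(\la)$, so $\ST$ is a function of $\la^2$. The range of powers of $\la$ is then a counting exercise: in each factor the off-diagonal entries carry $\la^{\pm1}$ and the diagonal ones $\la^0$, and a term survives the trace only if the number of off-diagonal factors is even; this bounds the exponent of $\la$ by $\pm\tilde{\SRN}$ with $\tilde{\SRN}=\SRN+\en-1$, so $\la^{\tilde{\SRN}}\ST(\la)$ is a polynomial in $\la^2$ of degree $\tilde{\SRN}$. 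For $\SQ$ I would use that $w_\la$, and hence its discrete Fourier transform $\overline{w}_\la$, is a polynomial of degree $l$ in $\la$; the kernel \rf{Ydef} is a product over the $\SRN$ sites of two such factors, giving degree $2l\SRN$. The invertibility of $\SQ_0$ and $\SQ_{2l\SRN}$ for $\SRN$ odd is the one point that genuinely uses the parity of $\SRN$: I would compute the top and bottom coefficients of the kernel explicitly and identify them, up to invertible scalar prefactors, with a product of Weyl shift operators whose determinant is nonzero precisely when the change of variables \rf{f-uv} is invertible, i.e. when $\SRN$ is odd.

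The heart of the matter is the Baxter equation (B). Here the plan is to establish a local intertwining relation built from the functional equation \rf{funrel}: one seeks a single-site relation expressing $L^{\rm\sst SG}_n(\la)$ acting on the $w$-kernel as an upper-triangular $2\times2$ matrix in the auxiliary space whose diagonal entries are the local factors that multiply up to ${\tt a}_\SRN(\la)$ and ${\tt d}_\SRN(\la)$. Taking the product over all sites and tracing over $\BC^2$, the trace of the resulting triangular matrix collapses to the two diagonal products, which reproduce the two terms ${\tt a}_\SRN(\la)\SQ(q^{-1}\la)$ and ${\tt d}_\SRN(\la)\SQ(q\la)$ of \rf{BAX}. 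Since $\SQ(\la,\mu)=\SY(\la)(\SY(\mu^*))^{\dagger}$ with the second factor $\la$-independent, it is enough to prove the relation for $\SY(\la)$ and then multiply on the right. I expect the main obstacle to be precisely the derivation and verification of this local relation --- getting the shifts $\la\mapsto q^{\pm1}\la$ to line up with \rf{funrel} and checking that the accumulated prefactors are \emph{exactly} the functions \rf{addef} rather than merely proportional to them.

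For the commutativity (C), I would prove that the operators $\SY(\la)$ for different values of the spectral parameter commute among themselves and with their adjoints; this is the content of the identity \rf{Yex} of Appendix B, which in turn rests on the star-triangle relation obeyed by the weights $w_\la$. Commutativity of all the $\SQ(\la,\mu)$, and in particular $[\SQ(\la),\SQ(\mu)]=0$, then follows at once from the definition. To get $[\ST(\la),\SQ(\mu)]=0$ I would multiply the Baxter equation \rf{BAX} by $\SQ(\mu)$ on either side and use $[\SQ(\la'),\SQ(\mu)]=0$ to obtain $[\ST(\la),\SQ(\mu)]\,\SQ(\la)=0$; for $\SRN$ odd the invertibility of $\SQ_0$ from (A) shows $\det\SQ(\la)\not\equiv0$, so $\SQ(\la)$ is invertible for all but finitely many $\la$, and since $[\ST(\la),\SQ(\mu)]$ is polynomial in $\la$ it must vanish identically. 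Finally, for the self-adjointness (S) I would establish the local conjugation relation $(L^{\rm\sst SG}_n(\la))^{\dagger}=\sigma_1 L^{\rm\sst SG}_n(\la^*)\sigma_1$ under the reality assumptions on $\kappa_n,\xi_n$, using $\su_n^{\dagger}=\su_n^{-1}$, $\sv_n^{\dagger}=\sv_n^{-1}$ in the representation \rf{reprdef} together with $q^*=q^{-1}$; telescoping through \rf{Mdef} and using cyclicity of the trace together with ${\rm tr}(\sigma_1 X\sigma_1)={\rm tr}\,X$ gives $(\ST(\la))^{\dagger}=\ST(\la^*)$. The corresponding statement for $\SQ$ follows from the conjugation property of $w_\la$ --- the renormalized analogue of $(W_\la(z))^*=(W_{\la^*}(z))^{-1}$ --- applied to the kernel \rf{Ydef}, combined with the definition of $\SQ$ and the symmetry \rf{Yex}.
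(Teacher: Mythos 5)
Most of your proposal follows the same route as the paper's Appendix \ref{Qproofs}, and is sound. Your plan for (B) is exactly the paper's: a single-site ``pair-propagation'' relation making the Lax matrices act triangularly on the kernel \rf{Ydef}, collapse of the trace to the two diagonal products, then right-multiplication by the $\la$-independent factor $(\SY(\mu^*))^{\dagger}$; your (A) (parity plus counting of off-diagonal factors surviving the trace, and identification of the $\la\to 0,\infty$ kernels with the shift operators \rf{Y0matel}, whose invertibility is the invertibility of \rf{f-uv}, hence $\SRN$ odd) and your $\SQ$--$\SQ$ commutativity via \rf{Yex} are likewise the paper's arguments. Two implementation details you gloss over, neither fatal: the triangularity in (B) holds not for the bare Lax matrix but only after the $\bz'$-dependent gauge transformation $g_n=\bigl(\begin{smallmatrix}1&0\\ z_n'&1\end{smallmatrix}\bigr)$ used in the paper; and in (S) the correct local relation for real $\kappa_r,\xi_r$ involves $\sigma_2$ rather than $\sigma_1$ (the order reversal produced by the adjoint is harmless only because Lax entries at distinct sites commute, which is worth saying explicitly).

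The genuine gap is in your proof of $[\ST(\la),\SQ(\mu)]=0$. The derivation of $[\ST(\la),\SQ(\mu)]\,\SQ(\la)=0$ from \rf{BAX} and $\SQ$--$\SQ$ commutativity is correct, but passing to the vanishing of the commutator needs $\det\SQ(\la)\not\equiv 0$, which you extract from the invertibility statement in (A) --- a statement available only for $\SRN$ odd. Part (C) of the theorem is asserted for all $\SRN$, and the even case is not a formality: it is used essentially in Section \ref{ap-even}, where commutativity and self-adjointness of $\SQ$ are what resolve the double degeneracy of the $\ST$-spectrum. Nor can your route be patched cheaply: for even $\SRN$ the operators $\SY_0,\SY_\infty$ are genuinely non-invertible (the map $(k_n)\mapsto(k_n+k_{n+1})$ mod $p$ kills alternating vectors), so evaluation at $\la=0,\infty$ gives nothing, and generic invertibility of $\SQ(\la)$ for even $\SRN$ is established nowhere at this stage of the construction. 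The paper proves (C) by a different, parity-independent computation: for real $\la$ it shows $\ST(\la)\,\SY(\la)(\SY(\mu))^{\dagger}=\SY(\mu)(\SY(\la))^{\dagger}\,\ST(\la)$ using the $\SY$-level Baxter relation, the exchange relation \rf{Yex}, and the conjugation properties $({\tt a}(\la))^{*}={\tt d}(\la)$, $(\ST(\la))^{\dagger}=\ST(\la)$. You should either adopt that argument for even $\SRN$, or supply the missing invertibility statement. In fairness, your algebraic argument has a virtue the paper's lacks: the paper's computation literally yields commutativity of $\ST(\la)$ with $\SQ$ at a coincident spectral parameter, and your invertibility trick is precisely the kind of step needed to upgrade this to independent $\la,\mu$ --- but as it stands it delivers this only for odd $\SRN$.
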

For the reader's convenience we have
included a self-contained proof 
in Appendix \ref{Qproofs}.

It follows from these properties that $\ST(\la)$ and $\SQ(\mu)$ can be diagonalized simultaneously for all $\la,\mu$.
The eigenvalues $Q(\la)$ of $\SQ(\la)$ must satisfy
\begin{equation}\label{BaxEV}
t(\la)Q(\la)\,=\,{\tt a}_{\SRN}(\la)Q(q^{-1}\la)+{\tt d}_{\SRN}(\la)Q(q\la)\,.
\end{equation}
It follows from the property (A) of $\SQ(\la)$ that any eigenvalue 
$Q(\la)$ must be a polynomial of order $2l\SRN$ 
normalized by the condition $Q_{2l\SRN}=1$.
Such a polynomial is fully characterized
by its zeros $\la_1,\dots,\la_{2l\SRN}$,
\begin{equation}\label{Qfromzeros}
Q(\la)\,=\,\prod_{k=1}^{2l\SRN}(\la-\la_k)\,.
\end{equation}
It follows from the Baxter equation \rf{BaxEV} 
that the zeros must satisfy the
Bethe equations
\begin{equation}
\label{BAE}
\frac{{\tt a}(\la_r)}{{\tt d}(\la_r)}\,=\,-
\prod_{s=1}^{2l\SRN}\frac{\la_s-\la_rq}{\la_s-\la_r/q}\,.
\end{equation}
What is not clear at this stage is if 
for each solution of the Bethe equations 
\rf{BAE} there indeed exists an eigenstate of 
$\ST(\la)$ and $\SQ(\mu)$. In order to show that this is the case
we need a method to construct eigenstates from solutions to \rf{BAE}.
The Separation of Variables method will give us such a construction,
replacing the algebraic Bethe ansatz in the cases we consider.

\subsection{Integrability}

In order to recover the light-cone dynamics discussed in subsection \ref{dyn}, let us temporarily return
to the homogeneous case where $\xi_n=1$ and $\kappa_n=\kappa$ for $n=1,\dots,\SRN$.
Let us note that the operators $\SY(\la)$ simplify when $\la$ is sent to $0$ or $\infty$. Multiplying by
suitable normalization factors one find the {\it unitary} operators
\[
\SY_0\,\equiv\,\ga_{0}^{\SRN}\,\SY(0)\quad{\rm and}\quad
\SY_\infty\,\equiv\,\lim_{\mu\ra\infty}\ga_{\infty}^{\SRN}\,\mu^{-2l\SRN}\SY(\mu)\,,
\]
where $\ga_{0}=\prod_{r=1}^l(1-q^{4r})$ and $\ga_{\infty}=(-1)^{l}q^{l}\prod_{r=1}^l(1-q^{4r-2})$. The operators $\SY_0$ and $\SY_\infty$ have the simple matrix elements
\begin{equation}\label{Y0matel}
\begin{aligned}
\langle\,{\mathbf z}\,|\,\SY_0\,|\,{\mathbf z}'\,\rangle\,& =\,\prod_{n=1}^{\SRN}q^{-2k_n^{}(k_n'+k_{n+1}')}\,,\\
\langle\,{\mathbf z}\,|\,\SY_\infty\,|\,{\mathbf z}'\,\rangle\,&=\,\prod_{n=1}^{\SRN}q^{+2k_n^{}(k_n'+k_{n+1}')}\,,
\end{aligned}\quad{\rm if} \quad \left\{\;\begin{aligned}
&{\mathbf z}\,=\,(q^{2k_1},\dots,q^{2k_{\SRN}}),\\[1ex]
&{\mathbf z}'\,=\,(q^{2k_1'},\dots,q^{2k_{\SRN}'}),
\end{aligned}\;\right\}\end{equation}
and
\begin{equation}
\SQ^+(\la)\,=\,\SY(\la)\cdot\SY_\infty^\dagger\,,\qquad
\SQ^-(\la)\,=\,\big(\SY(\la)\cdot\SY_0^\dagger\,\big)^{-1}
\end{equation}
Integrability follows immediately from the following observation:
\begin{equation}\label{UfromQ}
\boxed{
\qquad\SU\,=\,\alpha_{\kappa} \,\SU^+\cdot\SU^-,\qquad\SU^+\,=\,\SQ^+(1/\kappa\ep),\qquad\SU^-\,=\,\SQ^-(\kappa/\ep),\qquad
}
\end{equation}
where $\alpha_{\kappa}\equiv \prod_{r=1}^{l}(1-q^{4r-2})^{2\SRN}/( \kappa^2-q^{4r-2})^{2\SRN}$.
The proof can be found in Appendix B.
It is very important to remark that there is of course no problem to construct time
evolution operators in the inhomogeneous cases by specializing the spectral parameter of the
$\SQ$-operator in a suitable way. We are just not able to represent the time evolution
as simple as in \rf{Hirota}. One will still have a lattice approximation to the time evolution in the
continuum field theory as long as the inhomogeneity parameters are scaled to unity in the continuum limit.

\section{Separation of variables I --- Statement of results}\label{SOV}

\setcounter{equation}{0}

The Separation of Variables (SOV) method
of Sklyanin \cite{Sk1}-\cite{Sk3} 
as developed for lattice Sine-Gordon model in this section
will allow us to take an important step towards
the simultaneous diagonalization of the $\ST$- and $\SQ$-operators.

The separation of variables method is based on the observation that the
spectral problem for $\ST(\la)$ simplifies considerably if one works 
in an auxiliary
representation where the commutative family 
$\SB(\la)$ of operators introduced in \rf{ABCD} is diagonal.
In the following subsection we will discuss a family of representations
of the Yang-Baxter algebra \rf{YBA}
that has this property. 
We will refer to this class of representations as the SOV-representations.
We will subsequently show that our original
representation introduced in \rf{Mdef}, \rf{Lax} is indeed equivalent to a
certain SOV-representation.

\subsection{The SOV-representation}

The operators representing  \rf{YBA} in the SOV-representation relevant for the case of a lattice with $\SRN$ sites will be denoted as
\begin{equation}
\SM^{\rm\sst SOV}(\la)=\left(\begin{matrix}\SA_\SRN(\la) & \SB_\SRN(\la)\\
\SC_\SRN(\la) & \SD_\SRN(\la)\end{matrix}\right)\,.
\end{equation}
We will now describe the representation
of the algebra \rf{YBA} in which $\SB_\SRN(\la)$ acts diagonally.

\subsubsection{The spectrum of $\SB_\SRN(\la)$}

By definition, we require that $\SB_\SRN(\la)$ is represented by a diagonal matrix. In order to parameterize the eigenvalues,
let us fix a tuple ${\mathbf \zeta}=(\zeta_1^{},\dots,\zeta_\SRN^{})$ of complex numbers such that $\zeta_a^p\neq\zeta_b^p$ for $a\neq b$.
The vector space $\BC^{p^\SRN}$ underlying the SOV-representation will be identified with the space of
functions $\Psi(\eta)$ defined for $\eta$ taken from
the discrete set
\begin{equation}
{\mathbb B_\SRN}\,\equiv\,\big\{\,(q^{k_1}\zeta_1,\dots,q^{k_\SRN}\zeta_\SRN)\,;\,(k_1,\dots,k_\SRN)\in\BZ_p^\SRN\,\big\}\,.
\end{equation}
The SOV-representation is characterized by the property that $\SB(\la)$ acts on the functions $\Psi(\eta)$,
$\eta=(\eta_1,\dots,\eta_\SRN)\in\BB_\SRN$ as a
multiplication operator,
\begin{equation}\label{Bdef}
\SB_\SRN(\la)\,\Psi(\eta)\,=\,\eta_\SRN^{{\rm e}_\SRN}\,b_\eta(\la)\,\Psi(\eta)\,,\qquad b_\eta(\la)\,\equiv\,
\prod_{n=1}^{\SRN}\frac{\kappa _{n}}{i}\prod_{a=1}^{[\SRN]}\left( \la/\eta_a-\eta_a/\la\right)\,;
\end{equation}
where $[\SRN]\equiv\SRN-\en$. We see that $\eta_1,^{}\dots,\eta_{[\SRN]}^{}$ represent the zeros of $b_\eta(\la)$. In the case of
even $\SRN$ it turns out that we need a supplementary variable $\eta_\SRN$ in order to be able to parameterize the spectrum
of $\SB(\la)$.

\subsubsection{Representation of the remaining operators}

Given that $\SB_\SRN(\la)$ is represented as in \rf{Bdef},
it can be shown \cite{Sk1}-\cite{Sk3}\footnote{See \cite{BT}
for the case of the Sinh-Gordon model
which is very similar to  the case at hand.}
that the representation of the remaining
operators $\SA_\SRN(\la)$, $\SC_\SRN(\la)$ $\SD_\SRN(\la)$
is to a large extend determined by
the algebra \rf{YBA}. First note (see e.g. \cite[Appendix C.2]{BT} for a proof) that the so-called
quantum determinant
\begin{equation}\label{qdetdef}
{\rm det_q}(\SM(\la))\,\equiv\,
\SA(\la)\SD(q^{-1}\la)-\SB(\la)\SC(q^{-1}\la)
\end{equation}
generates central elements of the algebra \rf{YBA}. In
the representation defined by \rf{Mdef}, \rf{Lax} we find that
$\la^{2\SRN}{\rm det_q}(\SM(\la))$ is a polynomial in $\la^2$ of order $2\SRN$.
We therefore require that
\begin{equation}\label{detcond}
\SA_\SRN(\la)\SD_\SRN(q^{-1}\la)-\SB_\SRN(\la)\SC_\SRN(q^{-1}\la)\,=\,\Delta_\SRN(\la)\cdot {\rm id}\,,
\end{equation}
with $\la^{2\SRN}\Delta_\SRN(\la)$ being a
polynomial in $\la^2$ of order  $2\SRN$.

The algebra \rf{YBA} furthermore implies that $\SA_\SRN(\la)$ and
$\SD_\SRN(\la)$ can be represented in the form
\begin{align}\label{SAdef}
\SA_\SRN(\la)\,=\,&\,{\rm e}_{\SRN}^{}\,b_\eta(\la)\left[
\frac{\la}{\eta_\SA^{}}\ST^+_\SRN-\frac{\eta_\SA^{}}{\la}\ST^-_\SRN\right]
+\sum_{a=1}^{[\SRN]}\prod_{b\neq a}\frac{\la/\eta_b-\eta_b/\la}{\eta_a/\eta_b-\eta_b/\eta_a} \,a_\SRN^{}(\eta_a)\,\ST_a^-\,,\\
\SD_\SRN(\la)\,=\,&\,{\rm e}_{\SRN}^{}\,b_\eta(\la)\left[\frac{\la}{\eta_\SD^{}}\ST^-_\SRN-\frac{\eta_\SD^{}}{\la}\ST^+_\SRN\right]
+\sum_{a=1}^{[\SRN]}\prod_{b\neq a}\frac{\la/\eta_b-\eta_b/\la}{\eta_a/\eta_b-\eta_b/\eta_a} \,d_\SRN^{}(\eta_a)\,\ST_a^+\,,
\label{SDdef}\end{align}
where $\ST_a^{\pm}$ are
the operators defined by
\[
\ST_a^{\pm}\Psi(\eta_1,\dots,\eta_\SRN)=\Psi(\eta_1,\dots,q^{\pm 1}\eta_a,\dots,\eta_\SRN)\,.
\]
The expressions  \rf{SAdef} and \rf{SDdef} contain complex-valued
coefficients  $\eta_\SA^{}$, $\eta_\SD^{}$, $a_\SRN(\eta_r)$ and
$d_\SRN(\eta_r)$.
The coefficients $a_\SRN(\eta_r)$ and
$d_\SRN(\eta_r)$ are restricted  by the condition
\begin{equation}\label{addet}
\Delta_\SRN(\eta_r)\,=\,
a_\SRN(\eta_r)d_\SRN(q^{-1}\eta_r)\,, \quad\forall r=1,\dots,\SRN\,,
\end{equation}
as follows from the consistency of \rf{detcond}, \rf{Bdef},
\rf{SAdef} and \rf{SDdef}. This leaves some freedom in the
choice of $a_\SRN(\eta_r)$ and
$d_\SRN(\eta_r)$ that will be further discussed later.

The operator $\SC_\SRN(\la)$ is finally univocally\footnote{Note that the operator $\SB_\SRN(\la)$ is invertible except for $\la$ which coincides with a zero of $\SB_\SRN$, so in general $\SC_\SRN(\la)$ is defined by (4.5) just inverting $\SB_\SRN(\la)$. This is enough to fix in an unique way the operator $\SC_\SRN$ being it a Laurent polynomial of degree [$\SRN$] in $\la$.} defined such that the
quantum determinant condition \rf{detcond}
is satisfied.

\subsubsection{Central elements} \label{center}

For the representations in question, the algebra \rf{YBA} has a large
center. For its description let us,
following \cite{Ta}, define the average value $\CO$ of the elements
of the monodromy matrix $\SM^{\rm\sst SOV}(\la)$ as
\begin{equation}\label{avdef}
\CO(\Lambda)\,=\,\prod_{k=1}^{p}\SO(q^k\la)\,,\qquad \Lambda\,=\,\la^p,
\end{equation}
where $\SO$ can be
$\SA_\SRN$, $\SB_\SRN$,
$\SC_\SRN$ or $\SD_\SRN$.

\begin{propn}
The average values $\CA_\SRN(\Lambda)$, $\CB_\SRN(\Lambda)$,
$\CC_\SRN(\Lambda)$, $\CD_\SRN(\Lambda)$ of the monodromy matrix
$\SM(\la)$ elements are central elements.
\end{propn}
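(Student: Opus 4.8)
The plan is to prove the proposition purely at the level of the Yang-Baxter algebra \rf{YBA}, since ``central'' here means commuting with every generator $\SA_\SRN(\mu),\SB_\SRN(\mu),\SC_\SRN(\mu),\SD_\SRN(\mu)$, and no feature of the specific SOV realization is needed beyond \rf{YBA} and the explicit R-matrix \rf{Rlsg}. First I would record the two-index relations extracted from \rf{YBA}: with weights $\sa(x)=qx-q^{-1}x^{-1}$, $\msb(x)=x-x^{-1}$, $\sfc=q-q^{-1}$ and $x=\la/\mu$, one finds in particular $[\SA(\la),\SA(\mu)]=[\SB(\la),\SB(\mu)]=[\SC(\la),\SC(\mu)]=[\SD(\la),\SD(\mu)]=0$, together with mixed relations of the schematic type
\begin{equation*}
\SA(\la)\SB(\mu)=\frac{\msb(x)}{\sa(x)}\SB(\mu)\SA(\la)+\frac{\sfc}{\sa(x)}\SA(\mu)\SB(\la),
\end{equation*}
and the analogous relations pairing $\SA$ with $\SC$, and $\SD$ with $\SB,\SC$. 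The self-commutativity already guarantees that each product $\CO(\Lambda)=\prod_{k=1}^{p}\SO(q^k\la)$ is independent of the ordering of its factors and---because $q^p=1$ cyclically permutes the arguments $\{q^k\la\}$---depends on $\la$ only through $\Lambda=\la^p$, so the averages are well defined.

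The arithmetic heart of the proof is the root-of-unity identity
\begin{equation*}
\prod_{k=1}^{p}\sa(q^k x)=\prod_{k=1}^{p}\msb(q^k x)=x^p-x^{-p},
\end{equation*}
valid because $p$ is odd and $q^p=1$: writing each factor as $q^{\mp k}(q^{\pm 2k}x^2-1)$ and using that $k\mapsto q^{2k}$ runs over all $p$-th roots of unity collapses the product. I would then commute a generator, say $\SB(\mu)$, through the ordered product $\CA_\SRN(\Lambda)=\SA_\SRN(q\la)\cdots\SA_\SRN(q^p\la)$ one factor at a time. Each step either keeps $\SB(\mu)$ intact, contributing the scalar $\msb(x_k)/\sa(x_k)$ with $x_k=q^k\la/\mu$, or exchanges it for the off-diagonal branch carrying $\sfc/\sa(x_k)$, a factor $\SA_\SRN(\mu)$, and $\SB$ re-evaluated at the shifted point $q^k\la$. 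The ``all-diagonal'' contribution is exactly $\big(\prod_k \msb(x_k)/\sa(x_k)\big)\SB(\mu)\CA_\SRN(\Lambda)=\SB(\mu)\CA_\SRN(\Lambda)$ by the identity above, which is already the desired commuted expression; everything then hinges on the vanishing of the off-diagonal remainder.

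The main obstacle is controlling this off-diagonal remainder, since a $\SB$ that has switched to a shifted argument must itself be pushed through the remaining $\SA_\SRN$-factors and may branch again, a priori producing a proliferating sum. The resolution I would pursue is that, because the product runs over a full period and $q^p\la=\la$, the distinct correction terms cancel in pairs: the boundary contribution generated when the off-diagonal $\SB(q^k\la)$ completes the loop is identified, via $q^p=1$, with the very term it must cancel, leaving zero. This is precisely the mechanism established (by induction on the number of sites, using multiplicativity of the average value over the factorized monodromy) by Tarasov \cite{Ta}, whose argument applies here as it relies only on \rf{YBA} and \rf{Rlsg}. The remaining families $\CB_\SRN,\CC_\SRN,\CD_\SRN$, and commutation against the other three generators, are handled identically: the matching-type cases ($\CA_\SRN$ versus $\SA_\SRN$, and so on) are immediate from self-commutativity, while each mixed case reduces to the same telescoping governed by $\prod_{k=1}^p \sa(q^kx)=\prod_{k=1}^p\msb(q^kx)$.
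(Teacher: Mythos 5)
Your setup is sound: the relation $\sa(x)\,\SA(\la)\SB(\mu)=\msb(x)\,\SB(\mu)\SA(\la)+\sfc\,\SA(\mu)\SB(\la)$ does follow from \rf{YBA} with the R-matrix \rf{Rlsg}, and the root-of-unity identity $\prod_{k=1}^{p}\sa(q^kx)=\prod_{k=1}^{p}\msb(q^kx)=x^p-x^{-p}$ is correct, so the ``all-diagonal'' contribution indeed reproduces $\SB(\mu)\,\CA_\SRN(\Lambda)$. But your proof stops exactly where the proposition begins. Once $\SB(\mu)$ branches into $\SA(\mu)\SB(q^k\la)$, the displaced $\SB$ must itself be pushed through the remaining factors at relative argument $q$, branching again at every step; after one full pass you are left with a sum of exponentially many off-diagonal monomials, and the assertion that these ``cancel in pairs'' because $q^p\la=\la$ is never substantiated by any bookkeeping: no pairing of terms is exhibited, no induction is set up, no telescoping identity is proved. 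Asserting the vanishing of this remainder is asserting the proposition itself.

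Deferring the missing step to \cite{Ta} does not close the gap as you have framed it, for two reasons. First, the mechanism you attribute to Tarasov, ``multiplicativity of the average value over the factorized monodromy'', is not a freely available tool but is precisely the nontrivial statement equivalent to the vanishing of the cross terms: it is Proposition \ref{Avrec} (equivalently \rf{average value-B}--\rf{average value-D}), and in the paper's own proof (Subsection \ref{Avvalapp}) the logic runs the other way --- centrality is established first, and multiplicativity is then \emph{deduced} from it, by the argument that the cross terms $\Delta_\SRN(\la)$ in \rf{Central-D} shift the $\SB_{\2}$-eigenvalues and would therefore contradict centrality unless they vanish. Invoking multiplicativity to obtain centrality is circular unless you prove it independently. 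Second, your premise that ``no feature of the specific realization is needed beyond \rf{YBA} and \rf{Rlsg}'' is unsupported: both \cite{Ta} and the paper's alternative proof rely essentially on the structure of the cyclic representations at the root of unity --- the paper works in the SOV basis, where $\SB_\SRN(\la)$ is diagonal, its average depends only on the invariants $Z_a=\eta_a^p$ (hence acts as a scalar), and $\CA_\SRN(\Lambda)$, $\CD_\SRN(\Lambda)$ are recovered as Laurent polynomials in $\Lambda$ by interpolation through their scalar values \rf{ADaver} at the points $Z_a$, supplemented by asymptotics for $\SRN$ even. A representation-independent computation within the Yang-Baxter algebra is exactly what your approach would have to supply, and it is the one thing it does not contain.
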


The Proposition is proven in \cite{Ta}, see Subsection \ref{Avvalapp}
for an alternative proof.
The average values are of course unchanged by
similarity transformations.
They therefore represent
parameters of the representation. Let us briefly discuss how these parameters
are related to the parameters of the SOV-representation introduced above.

First, let us note that $\CB_\SRN(\Lambda)$ is easily found from \rf{Bdef}
to be given by the formula
\begin{equation}\label{CB}
\CB_\SRN(\Lambda)\,=\,Z_{{\SRN}}^{{\rm e}_{\SRN}}
\prod_{n=1}^{\SRN}\frac{K_n}{i^p}\prod_{a=1}^{{[}\SRN{]}}(\Lambda/Z_a-Z_a/\Lambda)\,,\qquad
\begin{aligned}
& Z_a\equiv \eta_a^p\,,\\
& K_a\equiv\kappa_a^p\,.
\end{aligned}
\end{equation}

The values  $\CA_\SRN^{}(Z_r)$ and
$\CD_\SRN^{}(Z_r)$ are related to the coefficients $a_\SRN(q^k\eta_r)$
and $d_\SRN(q^k\eta_r)$ by
\begin{equation}\label{ADaver}
\CA_\SRN^{}(Z_r)\,\equiv\,\prod_{k=1}^{p}a_\SRN(q^k\eta_r)\,,\qquad
\CD_\SRN^{}(Z_r)\,\equiv\,\prod_{k=1}^{p}d_\SRN(q^k\eta_r)\,.
\end{equation}
Note that the condition \rf{addet} leaves
some remaining arbitrariness in the choice of the coefficients $a_\SRN(\eta)$,
$d_\SRN(\eta)$. The gauge transformations
\begin{equation}
\Psi(\eta)\,\equiv\,\prod_{r=1}^{\SRN}f(\eta_r)\Psi'(\eta)\,,
\end{equation}
induce a change of coefficients
\begin{equation}\label{gauge}
a_\SRN'(\eta_r)\,=\,a_\SRN(\eta_r)\frac{f(q^{-1}\eta_r)}{f(\eta_r)}\,,
\qquad
d_\SRN'(\eta_r)\,=\,d_\SRN(\eta_r)\frac{f(q^{+1}\eta_r)}{f(\eta_r)}\,,
\end{equation}
but clearly leave $\CA_\SRN^{}(Z_r)$ and $\CD_\SRN^{}(Z_r)$ unchanged. The data
$\CA_\SRN^{}(Z_r)$ and $\CD_\SRN^{}(Z_r)$ therefore characterize gauge-equivalence classes
of representations for $\SA_\SRN(\la)$ and $\SD_\SRN(\la)$ in the form \rf{SAdef}.

\subsection{Existence of SOV-representation for the lattice Sine-Gordon model}
\label{SOVex}

We are looking for an invertible transformation $\SW^{\rm\sst SOV}$ that maps the lattice Sine-Gordon model defined in
the previous sections to a SOV-representation,
\begin{equation}\label{inter}
(\SW^{\rm\sst SOV})^{-1}\cdot\SM^{\rm\sst SOV}(\la)\cdot\SW^{\rm\sst SOV}\,=\,\SM(\la)\,.
\end{equation}
Constructing $\SM^{\rm\sst SOV}(\la)$ is of course equivalent to the construction of a
basis for $\CH$ consisting of eigenvectors $\langle\,\eta\,|$ of $\SB(\la)$,
\begin{equation}
\langle\,\eta\,|\,\SB(\la)\,=\,\eta_\SRN^{{\rm e}_\SRN}b_\eta(\la)\,\langle\,\eta\,|\,.
\end{equation}
The transformation $\SW^{\rm\sst SOV}$ is then described in terms of $\langle \,\eta\,|\,z\,\rangle$ as
\begin{equation}\label{Ws}
(\SW^{\rm\sst SOV}\psi)(\eta)\,=\,\sum_{z\in(\BS_p)^{\SRN}} \,\langle \,\eta\,|\,z\,\rangle\,\psi(z)\,.
\end{equation}

The existence of an eigenbasis for $\SB(\la)$
is not trivial since $\SB(\la)$ is not a normal operator.
It turns out that such a similarity transformation exists
for generic values of the parameters $u$, $v$, $\xi$ and $\kappa $.

\begin{thm} \label{SOVthm} $\;\;$ -- $\;\;${\bf Existence of SOV-representation for the lattice Sine-Gordon model}$\;\;$ --\\[1ex]
For generic values of the parameters $u$, $v$, $\xi$ and $\kappa$
there exists an invertible operator
$\SW^{\rm\sst SOV}:\CH\ra\CH^{\rm\sst SOV}$ which satisfies \rf{inter}.
\end{thm}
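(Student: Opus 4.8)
The plan is to reduce the theorem to the single statement that the operator $\SB(\la)$, in the original representation \rf{Mdef}, \rf{Lax}, is diagonalizable with non-degenerate spectrum given precisely by the functions $\eta\mapsto\eta_\SRN^{\,\en}b_\eta(\la)$, $\eta\in\BB_\SRN$, prescribed in \rf{Bdef}. Indeed, as noted after \rf{inter}, producing the intertwiner of \rf{inter} is equivalent to producing a basis of (left) eigenvectors $\langle\,\eta\,|$ of $\SB(\la)$, after which $\SW^{\rm\sst SOV}$ is defined by its kernel $\langle\,\eta\,|\,z\,\rangle$ through \rf{Ws} and is invertible precisely when the $\langle\,\eta\,|$ form a basis. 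The theorem then splits into two tasks: first, to diagonalize $\SB(\la)$ with the correct spectrum, and second, to check that in the resulting eigenbasis the operators $\SA_\SRN(\la)$, $\SC_\SRN(\la)$, $\SD_\SRN(\la)$ take the form \rf{SAdef}--\rf{SDdef} dictated by the algebra.

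For the first task I would begin by recording that, by \rf{YBA}, the coefficients of the Laurent polynomial $\SB(\la)$ mutually commute, so one may diagonalize the whole family simultaneously; the obstruction is that $\SB(\la)$ is not normal, and to control it I would pass to the center. Computing the average value \rf{avdef} of $\SB(\la)$ in the representation \rf{Mdef}, \rf{Lax} produces the central scalar Laurent polynomial $\CB_\SRN(\Lambda)$ of \rf{CB}, whose zeros $Z_a=\zeta_a^p$ and prefactors $K_n=\kappa_n^p$ are explicit functions of the model parameters; for generic $\kappa,\xi,u,v$ these zeros are pairwise distinct, so the condition $\zeta_a^p\neq\zeta_b^p$ ($a\neq b$) required of the SOV-representation holds. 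I would then introduce the operator roots $\hat{\eta}_1,\dots,\hat{\eta}_{[\SRN]}$ of $\SB(\la)$, i.e. the commuting operators with $\SB(\la)=\prod_n(\kappa_n/i)\,\prod_{a=1}^{[\SRN]}(\la/\hat{\eta}_a-\hat{\eta}_a/\la)$, together with the extra factor $\hat{\eta}_\SRN^{\,\en}$ in the even case. Since $\CB_\SRN(\Lambda)=\prod_{k=1}^{p}\SB(q^k\la)$ is central, each root satisfies $\hat{\eta}_a^{\,p}=Z_a\cdot{\rm id}$, so the spectrum of every $\hat{\eta}_a$ lies in $\{q^k\zeta_a;k\in\BZ_p\}$ and the joint spectrum is contained in $\BB_\SRN$.

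The heart of the proof, and the step I expect to be the main obstacle, is to show that this joint spectrum is all of $\BB_\SRN$ with multiplicity one. For this I would construct, for each $a$, a conjugate shift operator implementing $\hat{\eta}_a\mapsto q^{\pm1}\hat{\eta}_a$. Such operators are produced by specializing the quadratic relations \rf{YBA} between $\SA_\SRN(\la)$ (or $\SD_\SRN(\la)$) and $\SB_\SRN(\la)$ at arguments tied to the roots $\hat{\eta}_a$: the relations force $\SA_\SRN$ to intertwine $\SB_\SRN(\la)$ with its $q^{\pm1}$-shift, exactly as reflected in \rf{SAdef}. Genericity enters again to guarantee that the coefficients $a_\SRN(\eta_a)$, $d_\SRN(\eta_a)$, constrained by \rf{addet} with average values \rf{ADaver}, are nonzero, so that these shifts are invertible. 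Each pair $(\hat{\eta}_a,\text{shift}_a)$ then generates a cyclic $p$-dimensional Weyl algebra at the root of unity $q$; since such a representation is irreducible and $\prod_a p=p^\SRN=\dim\CH$, the module $\CH$ must be isomorphic to the standard SOV-module of functions over $\BB_\SRN$. This simultaneously yields simplicity of the spectrum of $\SB(\la)$ and the eigenbasis $\langle\,\eta\,|$. In the even case one must additionally identify the supplementary operator $\hat{\eta}_\SRN$ and verify that it contributes the missing label $q^{k_\SRN}\zeta_\SRN$; this is where the bookkeeping is most delicate.

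Finally, for the second task I would appeal to the structural result quoted after \rf{Bdef}: once $\SB_\SRN(\la)$ is diagonal, the algebra \rf{YBA} together with the quantum-determinant relation \rf{detcond} determines $\SA_\SRN(\la)$ and $\SD_\SRN(\la)$ up to the form \rf{SAdef}--\rf{SDdef} and fixes $\SC_\SRN(\la)$ uniquely; the remaining free data are the gauge-invariant averages $\CA_\SRN(Z_r)$, $\CD_\SRN(Z_r)$, which I would match to the averages of $\SA(\la)$, $\SD(\la)$ computed directly in the original representation \rf{Mdef}, \rf{Lax}. Invertibility of $\SW^{\rm\sst SOV}$ is then immediate from completeness of the eigenbasis, which establishes \rf{inter} and proves the theorem.
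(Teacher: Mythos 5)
Your reduction of the theorem to the diagonalizability of $\SB(\la)$ with simple spectrum contained in $\BB_\SRN$, and your final step (letting the algebra \rf{YBA} plus the quantum determinant fix $\SA_\SRN$, $\SD_\SRN$, $\SC_\SRN$ and matching the gauge-invariant averages), coincide with the paper's framing. The genuine gap is the step where you ``introduce the operator roots'' $\hat\eta_a$, i.e.\ commuting operators with $\SB(\la)=\prod_n(\kappa_n/i)\prod_{a=1}^{[\SRN]}(\la/\hat\eta_a-\hat\eta_a/\la)$. For a Laurent polynomial with commuting but \emph{non-normal} operator coefficients such a factorization need not exist at all: if $N$ is a nonzero nilpotent $2\times2$ matrix, the polynomial $\la^2\,{\rm id}-N$ has no operator square root, since any $\hat\eta$ with $\hat\eta^2=N$ would itself be nilpotent, forcing $\hat\eta^2=0$. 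Existence of diagonalizable commuting roots is essentially \emph{equivalent} to the statement being proven --- the paper explicitly flags that the whole difficulty is that $\SB(\la)$ is not normal. Centrality of the average $\CB_\SRN(\Lambda)$ does not close this hole: it shows that the extreme coefficients of $\SB(\la)$ have scalar $p$-th powers (hence are diagonalizable) and it pins down what joint eigenvalues must look like, but it does not exclude nilpotent Jordan blocks in the intermediate coefficients. Consequently the Weyl-algebra/irreducibility and dimension-count argument, which is built on the $\hat\eta_a$ and on shift operators whose exact relations with them you assert are ``forced by \rf{YBA}'', has no foundation as written: deriving those relations between globally defined operators, without already possessing the SOV basis, is precisely the hard part, and it is nowhere carried out.

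For comparison, the paper proves the theorem by a constructive induction on $\SRN$: eigenvectors of $\SB_\SRN(\la)$ are sought as sums of tensor products of subchain eigenvectors with a kernel $K_\SRN$, the splitting $\SB_\SRN(\la)=\SA_\SRM(\la)\ot\SB_{\SRN-\SRM}(\la)+\SB_\SRM(\la)\ot\SD_{\SRN-\SRM}(\la)$ turns the eigenvalue equation into the recursion relations \rf{recrel1}--\rf{recrelzero} for $K_\SRN$, and the compatibility of these recursions with cyclicity, $(\ST_a^\pm)^p=1$, is verified against the average-value factorization identities of Proposition \ref{Avrec}. If you want to keep a non-recursive, algebraic proof, the repair is to work with honest eigenvectors rather than operator roots: a commuting family over $\BC$ always possesses at least one joint left eigenvector $\langle\,\eta\,|$; centrality of $\CB_\SRN$ forces its eigenvalue to be of the form \rf{Bparam} with $\eta_a^p=Z_a$, the $Z_a$ distinct for generic parameters; the relations \rf{YBA} evaluated at the \emph{numbers} $\mu=\eta_a$ then show that $\langle\,\eta\,|\,\SA(\eta_a)$ and $\langle\,\eta\,|\,\SD(\eta_a)$ are again $\SB$-eigenvectors with $\eta_a$ replaced by $q^{\mp1}\eta_a$, and the quantum determinant relation \rf{detcond} at $\la=\eta_a$ shows these vectors are nonzero whenever $\Delta_\SRN(\eta_a)\neq0$ (a generic condition). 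Iterating yields $p^{[\SRN]}$ eigenvectors with pairwise distinct eigenvalues, hence a basis for odd $\SRN$ (the even case requires the supplementary label $\eta_\SRN$, cf.\ Appendix \ref{Asymp-A-D}). That argument is a legitimate alternative to the paper's induction, but it is not the one you wrote: as it stands, your proposal assumes its main conclusion at the point where the roots $\hat\eta_a$ are introduced.
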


The proof is given in the following Section \ref{SOVapp}.
It follows from \rf{SAdef}, \rf{SDdef} that the
wave-functions $\Psi(\eta)=\langle\,\eta\,|\,t\,\rangle$
of eigenstates $|\,t\,\rangle$
must satisfy the discrete Baxter equations
\begin{equation}\label{SOVBax1}
t(\eta_n)\Psi(\eta)\,=\,a(\eta_n)\ST_n^-\Psi(\eta)
+d(\eta_n)\ST_n^+\Psi(\eta)\,,
\end{equation}
where $n=1,\dots,\SRN$.
Equation \rf{SOVBax1} represents a system of $p^\SRN$ linear
equations for the $p^\SRN$ different components $\Psi(\eta)$ of the
vector $\Psi$. It may be written in the form ${D}_t\cdot\Psi=0$,
where ${D}_t$ is a $p^\SRN\times p^\SRN$-matrix
that depends on $t=t(\la)$.
The condition for existence of solutions
${\rm det}{D}_t=0$ is a polynomial equation
of order $p^\SRN$ on $t(\la)$. We therefore expect to find
$p^\SRN$ different solutions, just
enough to get a basis for $\CH$.

We will return to the analysis of the spectral problem of $\ST(\la)$ in
Section \ref{Spec}. Let us now describe more precisely the
set of values of the parameters for which a SOV-representation
exists.

\subsection{Calculation of the average values}

Necessary condition for the existence of $\SW^{\rm\sst SOV}$ is of
course the equality
\begin{equation}\label{CM=CM}
\CM(\Lambda)\,=\,\CM^{\rm\sst SOV}(\Lambda)\,,
\end{equation}
of the matrices formed out of the average values of
$\SM(\lambda)$ and $\SM^{\rm\sst SOV}(\lambda)$, respectively.
It turns out that $\CM(\Lambda)$ can be calculated recursively from the
average values of the elements of the Lax
matrices $L_n^{\rm\sst SG}(\la)$, which are explicitly given by
\begin{align}\label{L_n}
\CL_n(\Lambda)& 
\,=\,\frac{1}{i^p}
\left(
\begin{matrix} i^p U_n^{}(K_n^{2}V_n^{}+V_n^{-1}) &
K_n(\Lambda V_n/X_n-X_n/V_n\Lambda) \\
K_n(\Lambda/ X_nV_n-X_nV_n/\Lambda) &
i^p U_n^{-1}(K_n^{2}V_n^{-1}+V_n^{})
\end{matrix}
\right)\,,
\end{align}
where we have used the notations
$K_n=\kappa_n^p$,  $X_n=\xi_n^p$,
$U_n=u_n^p$ and $V_n=v_n^p$. Indeed, we have:
\begin{propn}\label{Avrec}
We have \begin{align}\label{RRel1a}
\CM_{\SRN}^{}(\Lambda)\,=\,
\CL_{\SRN}^{}(\Lambda)\,\CL_{\SRN-1}^{}(\Lambda)\,\dots\,\CL_1^{}(\Lambda)\,.
\end{align}
\end{propn}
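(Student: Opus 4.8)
The plan is to prove \rf{RRel1a} by induction on the number of sites $\SRN$, reducing it to a single--site factorization statement. Writing $\SM_\SRN(\la)=L_\SRN^{\rm\sst SG}(\la/\xi_\SRN)\,\SM_{\SRN-1}(\la)$ with $\SM_{\SRN-1}(\la)=L_{\SRN-1}^{\rm\sst SG}(\la/\xi_{\SRN-1})\cdots L_1^{\rm\sst SG}(\la/\xi_1)$, the factor $L_\SRN^{\rm\sst SG}$ acts on the quantum space of site $\SRN$ alone, so its matrix entries commute with those of $\SM_{\SRN-1}$. Granting the inductive hypothesis $\CM_{\SRN-1}(\Lambda)=\CL_{\SRN-1}(\Lambda)\cdots\CL_1(\Lambda)$, it then suffices to establish the one--step identity $\mathrm{av}\big[L_\SRN^{\rm\sst SG}\,\SM_{\SRN-1}\big]=\mathrm{av}\big[L_\SRN^{\rm\sst SG}\big]\,\mathrm{av}\big[\SM_{\SRN-1}\big]$, i.e.\
\[
\prod_{k=1}^{p}\big[L_\SRN^{\rm\sst SG}(q^k\la/\xi_\SRN)\,\SM_{\SRN-1}(q^k\la)\big]\,=\,\CL_\SRN(\Lambda)\,\CM_{\SRN-1}(\Lambda)\,.
\]

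First I would dispose of the base case $\SRN=1$, namely that the average of a single Lax matrix \rf{Lax} equals \rf{L_n}. The off--diagonal entries of $L_n^{\rm\sst SG}$ are Laurent polynomials in the single generator $\sv_n$, so their period products are products of commuting factors and are evaluated by the root--of--unity identity $\prod_{k=1}^{p}\big(q^k\la\,w-q^{-k}\la^{-1}w^{-1}\big)=\Lambda w^{p}-\Lambda^{-1}w^{-p}$ for a commuting variable $w$, which uses $q^{p}=1$ and the fact that $q^{2}$ is again a primitive $p$--th root (here $p$ is odd). The diagonal entries are $\la$--independent, so their averages are the $p$--th powers $[(L_n^{\rm\sst SG})_{11}]^{p}$ and $[(L_n^{\rm\sst SG})_{22}]^{p}$, which I would compute by the quantum binomial theorem at a root of unity; inserting $\su_n^{p}=U_n$ and $\sv_n^{p}=V_n$ reproduces the diagonal of \rf{L_n}. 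In particular the entries of $\CL_n(\Lambda)$ are central.

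The heart of the matter is the one--step identity, whose content is that the average value is multiplicative across the two commuting blocks. The obstacle is that the left--hand side is an \emph{ordered} product of $2\times2$ matrices over the period $k=1,\dots,p$, and although the entries of $L_\SRN^{\rm\sst SG}$ and $\SM_{\SRN-1}$ commute, matrix multiplication does not, so one cannot simply collect all $L_\SRN^{\rm\sst SG}$--factors on the left. My plan is to control the reordering by means of the Yang--Baxter relation \rf{YBA}: specializing the spectral parameters along the period lattice, $\la\mapsto q^{a}\la$ and $\mu\mapsto q^{b}\la$, the argument of $R$ runs over $q^{a-b}$, and at the degeneration points $q^{\pm1}$ the R--matrix \rf{Rlsg} develops rank--one blocks, yielding exchange relations for the shifted monodromy entries that let one resum the cross--terms generated in the reordering. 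The decisive input is the Proposition above asserting that the average values are central: the entries of both $\CL_\SRN(\Lambda)$ and $\CM_{\SRN-1}(\Lambda)$ are central, so the commutator corrections accumulated in the bubble--sort are central Laurent polynomials in $\Lambda$, and comparing their leading and trailing coefficients in $\Lambda$ and evaluating at the special values $\Lambda=Z_a$ through \rf{CB} and \rf{ADaver} forces them to vanish.

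I expect the main difficulty to be exactly this cancellation of the reordering corrections. It is genuinely special: for two arbitrary matrices with commuting entries the naive factorization is false, so the argument must use both the explicit cyclic structure of $L_\SRN^{\rm\sst SG}$ and the condition $q^{p}=1$ in an essential way, rather than commutativity alone. An alternative route I would keep in reserve is to show that both sides of \rf{RRel1a} obey the same classical (commutative) $RTT$--relation inherited from \rf{YBA} once all operators are replaced by their central averages, and that they share the same quantum--determinant and boundary data; uniqueness for that linear system would then identify them. This is the strategy underlying the self--contained proof deferred to Subsection \ref{Avvalapp}.
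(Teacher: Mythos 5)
Your reduction --- induction on $\SRN$ with the one--site identity $\CM_\SRN^{}(\Lambda)=\CL_\SRN^{}(\Lambda)\,\CM_{\SRN-1}^{}(\Lambda)$ as inductive step, and the base case $\SRN=1$ evaluated by root--of--unity product identities --- is the right skeleton, and your base--case outline is correct. The genuine gap is the inductive step itself, exactly where you locate ``the heart of the matter''. Written entrywise (with $\2$ denoting site $\SRN$ and $\1$ the subchain of sites $1,\dots,\SRN-1$), what must be shown is, e.g.,
\[
\prod_{k=1}^{p}\bigl[\SA_\2^{}(q^k\la)\SA_\1^{}(q^k\la)+\SB_\2^{}(q^k\la)\SC_\1^{}(q^k\la)\bigr]
\,=\,\CA_\2^{}(\Lambda)\CA_\1^{}(\Lambda)+\CB_\2^{}(\Lambda)\CC_\1^{}(\Lambda)\,,
\]
i.e.\ that the sum $\Delta_\SRN(\la)$ of all \emph{mixed} monomials (those containing $j$ factors of type $\SA_\2$ with $1\le j\le p-1$) vanishes. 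Neither of your two mechanisms achieves this. The Yang--Baxter/``bubble--sort'' idea attacks a non--problem: the entries of $L^{\rm\sst SG}_\SRN$ already commute with those of $\SM_{\SRN-1}$, so there are no commutator corrections to resum; the obstruction consists of genuine cross terms, which survive even for commuting scalars. The centrality idea is the right starting point --- granting the Proposition of Subsection \ref{center}, $\Delta_\SRN(\la)=\CA_\SRN^{}(\Lambda)-\CA_\2^{}(\Lambda)\CA_\1^{}(\Lambda)-\CB_\2^{}(\Lambda)\CC_\1^{}(\Lambda)$ is a difference of central elements and hence central --- but centrality alone does not force vanishing, and your proposed finish (``comparing leading and trailing coefficients in $\Lambda$ and evaluating at $\Lambda=Z_a$ through \rf{CB} and \rf{ADaver}'') produces no contradiction: those formulas give no handle whatsoever on the value of the mixed--term sum.

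The missing idea --- and the way the paper closes this step in Subsection \ref{Avvalapp} --- is representation--theoretic. Work in a basis $\langle\,\chi_\2^{}\,|$ diagonalizing the family $\SB_\2(\mu)$: for a single site this is simply the $\sv_\SRN^{}$--eigenbasis, in general the SOV basis of the subchain constructed in Section \ref{SOVapp}. Each factor $\SA_\2(q^m\la)$ contains exactly one shift operator (one power of $\su_\SRN^{}$ at a single site, one $\ST_a^-$ in general, cf.\ \rf{SAdef}), so a mixed monomial with $j$ such factors maps $\langle\,\chi_\2^{}\,|$ into a $\SB_\2$--eigenspace whose eigenvalue is shifted $j$ times; since $1\le j\le p-1$ and $q^p=1$, this shift cannot close up, and the image lies entirely in eigenspaces different from that of $\langle\,\chi_\2^{}\,|$. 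On the other hand, centrality makes $\CA_\SRN(\Lambda)$, $\CA_\2(\Lambda)\CA_\1(\Lambda)$ and $\CB_\2(\Lambda)\CC_\1(\Lambda)$ act as multiples of the identity in these (generically irreducible) representations, so $\Delta_\SRN(\la)$ is itself a multiple of the identity --- which is compatible with the eigenvalue--shifting property only if $\Delta_\SRN(\la)=0$. This is precisely the point where the cyclic structure at $q^p=1$ enters, answering the question you correctly raised but left open. Note finally that your fallback route (matching commutative RTT relations, quantum determinant and boundary data, then invoking uniqueness) is \emph{not} the strategy of Subsection \ref{Avvalapp}, and as stated it is not a proof either, since the uniqueness statement it would rely on is neither formulated nor established.
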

This has been proven in \cite{Ta},
see Subsection \ref{Avvalapp} for an alternative proof.

The equality \rf{CM=CM} defines the mapping between the parameters
$u,v,\kappa$ and $\xi$ of the representation defined in
Subsection \ref{T-op}  and the parameters
of  the SOV-representation. Formula \rf{RRel1a}
in particular allows us to calculate $\CB(\Lambda)$
in terms of $u,v,\kappa$ and $\xi$. Equation \rf{CB}
then defines the numbers
$Z_a\equiv\eta_a^p$
uniquely up to permutations of $a=1,\dots,[\SRN]$.

Existence of a SOV-representation in particular requires that
$Z_a\neq  Z_b$ for all $a\neq b$, $a,b=1,\dots,[\SRN]$.
It can be shown (see Subsection \ref{nondegapp} below)
that the subspace of the space of parameters $u$, $v$,
$\kappa$ and $\xi$ for which this is not the case has codimension at
least one. Sufficient
for the existence of a SOV-representation is
the condition that
the representations $\CR_\SRM$ exist
for all $\SRM=1,\dots,\SRN-1$.

\section{Separation of variables II --- Proofs}\label{SOVapp}

\setcounter{equation}{0}

We are now proving Theorem \ref{SOVthm} by constructing a
set of $p^N$ linearly independent vectors $\langle\,\eta\,|$
which are eigenvectors of $\SB(\la)$ with distinct eigenvalues. This will be equivalent to a recursive construction of the matrix of elements $\langle \,\eta\,|\,z\,\rangle$ and so of the invertible operator $\SW^{\rm\sst SOV}:\CH\ra\CH^{\rm\sst SOV}$ by relation \rf{Ws}.

\subsection{Construction of an eigenbasis for $\SB(\la)$}

We will construct the eigenstates $\langle\,\eta\,|$ of $\SB(\la)\equiv \SB_\srn(\la)$
recursively by induction on $\SRN$. The corresponding eigenvalues $B(\la)$ are parameterized by the tuple
$\eta=(\eta_a)^{}_{a=1,\dots,\SRN}$ as
\begin{equation}\label{Bparam}
B(\la)\,=\,\eta_\SRN^{{\rm e}_\SRN}\,b_\eta(\la)\,,\qquad b_\eta(\la)\,\equiv\,
\prod_{n=1}^{\SRN}\frac{\kappa _{n}}{i}\prod_{a=1}^{[\SRN]}\left( \la/\eta_a-\eta_a/\la\right)\,;
\end{equation}

We remind that $e_\srn$ is zero for $\SRN$ odd and $1$ for $\SRN$ even.

In the case $\SRN=1$ we may
simply take $\langle\,\eta_1\,|\,=\,\langle\, v\,|, $ where  $\langle\, v\,|$ is an eigenstate of the operator $\sv_1$ with eigenvalue $v$.
It is useful to note that the inhomogeneity parameter determines the subset of $\BC$ on which the
variable $\eta_1$ lives, $\eta_1\in\xi_1\BS_p$.

Now assume we have constructed the eigenstates $\langle\,\chi\,|$ of $\SB_{\srm}(\la)$ for any $\SRM<\SRN$. The eigenstates
$\langle\,\eta\,|$, $\eta=(\eta_\srn,\dots,\eta_1)$, of $\SB_{\srn}(\la)$ may then be constructed in the following form
\begin{equation}
\langle\,\eta\,|\,=\,\sum_{\chi_{\1}^{}}\sum_{\chi_\2^{}}\,K_\srn^{}(\,\eta\,|\,\chi_{\2}^{};\chi_\1^{}\,)\,
\langle \,\chi_{\2}^{}\,|\ot\langle\,\chi_\1^{}\,|\;,
\end{equation}
where $\langle \,\chi_{\2}^{}\,|$ and $\langle \,\chi_{\1}^{}\,|$ are eigenstates of $\SB_\srm(\la)$ and $\SB_{\srn-\srm}(\la)$
with eigenvalues parameterized as in \rf{Bparam}  by the tuples $\chi_\2^{}=(\chi_{\2 a}^{})_{a=1,\dots ,\srm}^{}$
and $\chi_\1^{}=(\chi_{\1 a}^{})_{a=1,\dots,\srn-\srm}^{}$, respectively. It suffices to consider the cases where $\SRN-\SRM$ is
odd.

It follows from the formula
\begin{equation}\begin{aligned}
\SB_\srn(\la)&\,=\,\SA_\srm(\la)\ot\SB_{\srn-\srm}(\la)+{\SB}_\srm(\la)\ot\SD_{\srn-\srm}(\la)\\
&\,\equiv\,\SA_{\2 \ \srm}(\la)\SB_{\1 \ \srn-\srm}(\la)+{\SB}_{\2 \ \srm}(\la)\SD_{\1 \ \srn-\srm}(\la)
\end{aligned}
\end{equation}
that the matrix elements $K_\srn(\,\eta\,|\,{\chi}_{\2}^{};{\chi}_\1^{}\,)$ have to satisfy the relations
\begin{equation}\begin{aligned}\label{recrel}
\big(\SA_{\2 \ \srm}(\la)\SB_{\1 \ \srn-\srm}(\la) +{\SB}_{\2 \ \srm}(\la)& \SD_{\1 \ \srn-\srm}(\la)\big)^t\,K_\srn^{}(\,\eta\,|\,\chi_{\2}^{};\chi_\1^{}\,)
\\
&\,=\,\eta_\srn^{{\rm e}_\srn}\prod_{n=1}^{\SRN}\frac{\kappa _{n}}{i}\prod_{a=1}^{[\SRN]}\left( \la/\eta_a-\eta_a/\la\right)\,K_\srn^{}(\,\eta\,|\,\chi_{\2}^{};\chi_{\1}^{}\,)\,,
\end{aligned}\end{equation}
where we used the notation $\SO^t$ for the transpose of an operator $\SO$.

Let us assume that
\begin{equation}
\chi_{\1a}q^{h_{1}}\notin \Delta _{\1}, \ \ \chi _{\2b}q^{h_{2}}\notin \Delta
_{\2}\ \ \text{and} \ \ \chi _{\1a}q^{h_{1}}\neq \chi _{\2b}q^{h_{2}},
\end{equation}
where $h_{i}\in \{1,...,p\}$, $a\in \{1,...,\SRN-\SRM\}$\ , $b\in \{1,...,\SRM\}$\ and $\Delta _{i}$ is the set of zeros of the quantum
determinant on the subchain $i$, with $i=\1$,$\2$. Under these assumptions\footnote{%
The subspace within the space of parameters where these
conditions are not satisfied has codimension at least one.} the previous equations yield recursion relations for the dependence of the kernels in
the variables $\chi_{\1 a}$ and $\chi_{\2 b}$ simply by setting $\la=\chi_{\1 a}$ and $\la=\chi_{\2 b}$. Indeed for $\la=\chi_{\1 a}$
the first term on the left of \rf{recrel} vanishes leading to
\begin{equation}\label{recrel1}\begin{aligned}
 {\ST_{\1 a}^{^-}
 K_\srn^{}(\,\eta\,|\,\chi_{\2}^{};\chi_{\1}^{}\,)}\;&{d_\1^{}(q^{-1}\chi_{\1a}^{})}\; \,\chi_\srm^{{\rm e}_\srm}\prod_{n=1}^{\SRN-\SRM}
\frac{i}{\kappa _{n}}\prod_{a=1}^{[\rm M]}(\chi_{\1 a}^{}/\chi_{\2 b}^{}-\chi_{\2 b}^{}/\chi_{\1 a}^{})\, \\[-1ex]
& \,=\, {K_\srn^{}(\,\eta\,|\,\chi_{\2}^{};\chi_{\1}^{}\,)}\; \,\eta_\srn^{{\rm e}_\srn}\prod_{b=1}^{[\SRN]}(\chi_{\1 a}^{}/\eta_b^{}-\eta_b^{}/\chi_{\1 a}^{})
\,,
\end{aligned}\end{equation}
while for $\la=\chi_{\2a}$ one finds similarly
\begin{equation}\label{recrel2}\begin{aligned}
{\ST_{\2 a}^{^+}
 K_\srn^{}(\,\eta\,|\,\chi_{\2}^{};\chi_{\1}^{}\,)}\; &{a_\2^{}(q^{+1}\chi_{\2a}^{})}\;
 \prod_{n=1}^{\rm M}\frac{i}{\kappa _{n}} \prod_{b=1}^{\SRN-\rm M}(\chi_{\2 a}^{}/\chi_{\1 b}^{}-\chi_{\1 b}^{}/\chi_{\2 a}^{} )\,
\\[-1ex] &\,=\,{K_\srn^{}(\,\eta\,|\,\chi_{\2}^{};\eta_{\1}^{}\,)}\; \,\eta_\srn^{{\rm e}_\srn}\prod_{b=1}^{[\SRN]}
(\chi_{\2 a}^{}/\eta_b^{}-\eta_b^{}/\chi_{\2 a}^{})
\,.
\end{aligned}\end{equation}
If ${\rm M}$ is even we find the recursion relation determining the dependence on $\chi_{\2\srm}$ by sending $\la\ra\infty$
in \rf{recrel}, leading to
\begin{equation}\label{recrelzero}
{\ST_{\2 \srm}^{^+}
 K_\srn^{}(\,\eta\,|\,\chi_{\2}^{};\chi_{\1}^{}\,)}\; \frac{1}{\chi_{\2 \SA}^{}}\,\prod_{a=1}^{\rm M-1}\frac{1}{\chi_{\2 a}^{}}\,\prod_{b=1}^{\SRN-\rm M}\frac{1}{\chi_{\1 b}^{}}\,
\,=\,{K_\srn^{}(\,\eta\,|\,\chi_{\2}^{};\eta_{\1}^{}\,)}\;
\prod_{b=1}^{\SRN}\frac{1}{\eta_b^{}}
\,.
\end{equation}
The recursion relations \rf{recrel1}, \rf{recrel2} have solutions compatible with
the requirement of cyclicity,  $ ({\ST_{\1 a}^{^-}})^p=1$ and $({\ST_{\2 a}^{^+})^p=1}$ for all values of $a$, provided that the algebraic equations
\begin{equation}\label{Exrecrel1}\begin{aligned}
& {D_\1^{}(\chi_{\1a}^{})}\; (\chi_{\2\srm}^{{\rm e}_\srm})^{p}\prod_{n=1}^{\SRN-\SRM}
\frac{i^p}{\kappa _{n}^p}\prod_{b=1}^{[{\rm M}]}
 (\chi_{\1 a}^{p}/\chi_{\2 b}^{p}-\chi_{\2 b}^{p}/\chi_{\1 a}^{p})\,
\,=\, (\eta_\srn^{{\rm e}_\srn})^{p}\prod_{b=1}^{[\SRN]}(\chi_{\1 a}^{p}/\eta_b^{p}-\eta_b^{p}/\chi_{\1 a}^{p})\,,\\
& {\rm where}\;\;D_\1(\chi_{\1 a})\,
\equiv\,\prod_{k=1}^p d_\1(q^k\chi_{\1 a})\,,
\end{aligned}\end{equation}
and
\begin{equation}\label{Exrecrel2}\begin{aligned}
&{A_\2^{}(\chi_{\2 a})}\; \prod_{n=1}^{\SRM}
\frac{i^p}{\kappa _{n}^p}
 \prod_{b=1}^{\SRN-\rm M}(\chi_{\2 a}^{p}/\chi_{\1 b}^{p}-\chi_{\1 b}^{p}/\chi_{\2 a}^{p} )\,=\,
 (\eta_\srn^{{\rm e}_\srn})^{p}\prod_{b=1}^{[\SRN]}(\chi_{\2 a}^{p}/\eta_b^{p}-\eta_b^{p}/\chi_{\2 a}^{p})\,,\\
& {\rm where}\;\;A_\2(\chi_{\2 a})\,\equiv\,\prod_{k=1}^p a_\2(q^k\chi_{\2 a}) \,,
\end{aligned}
\end{equation}
are satisfied.  If $\SRM$ is even the recursion relation \rf{recrelzero}
yields the  additional relation
\begin{equation}\label{Exrecrel3}
\frac{1}{\chi_{\2 \SA}^{p}}\,\prod_{a=1}^{\rm M-1}\frac{1}{\chi_{\2 a}^{p}}\,
\prod_{b=1}^{\SRN-\rm M}\frac{1}{\chi_{\1 b}^{p}}\,
\,=\,
\prod_{b=1}^{\SRN}\frac{1}{\eta_b^{p}}
\,.
\end{equation}
We will show in the next subsection that the equations \rf{Exrecrel1}-\rf{Exrecrel3} completely determine
$\eta_a^p$ in terms of $\chi_{\2 a}^p$, $\chi_{\1 a}^p$.

By using \rf{CB} and \rf{AADD} it is easy to see
that the conditions \rf{Exrecrel1} and \rf{Exrecrel2}
are nothing but the equations
\begin{equation}\label{Brecrel}
\mathcal{B}_{\SRN}(\Lambda ) =\mathcal{A}_{\SRM}(\Lambda )\mathcal{B}%
_{\SRN-\SRM}(\Lambda )+\mathcal{B}_{\SRM}(\Lambda )\mathcal{D}_{\SRN-\SRM}(\Lambda ),
\end{equation}
evaluated at $\Lambda=\chi_{\1a}^p$ and $\Lambda=\chi_{\2a}^p$, respectively.
The relation \rf{Exrecrel3} follows from \rf{Brecrel} in the limit
$\la\ra\infty$. The relations \rf{Brecrel}
are implied by \rf{RRel1a}. We conclude that our construction
of $\SB(\la)$-eigenstates will work if the representations
$\CR_{\SRN}$, $\CR_{\SRM}$ and $\CR_{\SRN-\SRM}$
are all non-degenerate. Theorem 2 follows by induction.

\subsection{On average value formulae}\label{Avvalapp}

\begin{propn}
The average values of the Yang-Baxter generators are central elements which
satisfy the following recursive equations:
\begin{eqnarray}
\mathcal{B}_{\SRN}(\Lambda ) &=&\mathcal{A}_{\SRM}(\Lambda )\mathcal{B}%
_{\SRN-\SRM}(\Lambda )+\mathcal{B}_{\SRM}(\Lambda )\mathcal{D}_{\SRN-\SRM}(\Lambda ),
\label{average value-B} \\
\mathcal{C}_{\SRN}(\Lambda ) &=&\mathcal{D}_{\SRM}(\Lambda )\mathcal{C}%
_{\SRN-\SRM}(\Lambda )+\mathcal{C}_{\SRM}(\Lambda )\mathcal{A}_{\SRN-\SRM}(\Lambda ),
\label{average value-C} \\
\mathcal{A}_{\SRN}(\Lambda ) &=&\mathcal{A}_{\SRM}(\Lambda )\mathcal{A}%
_{\SRN-\SRM}(\Lambda )+\mathcal{B}_{\SRM}(\Lambda )\mathcal{C}_{\SRN-\SRM}(\Lambda ),
\label{average value-A} \\
\mathcal{D}_{\SRN}(\Lambda ) &=&\mathcal{D}_{\SRM}(\Lambda )\mathcal{D}%
_{\SRN-\SRM}(\Lambda )+\mathcal{C}_{\SRM}(\Lambda )\mathcal{B}_{\SRN-\SRM}(\Lambda ),
\label{average value-D}
\end{eqnarray}
where $\SRN-\SRM$ or $\SRM$ is odd.
\end{propn}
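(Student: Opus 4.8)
The plan is to read the four scalar identities \rf{average value-B}--\rf{average value-D} as the four entries of the single matrix identity
\[
\CM_\SRN(\Lambda)\,=\,\CM_\SRM(\Lambda)\,\CM_{\SRN-\SRM}(\Lambda)\,,
\]
where $\CM_\SRM$ and $\CM_{\SRN-\SRM}$ are the average-value matrices of the sub-chains on the disjoint sets of sites $\{\SRN-\SRM+1,\dots,\SRN\}$ and $\{1,\dots,\SRN-\SRM\}$. I would start from the factorization of the monodromy \rf{Mdef}, $\SM_\SRN(\la)=\SM^{(\2)}_\SRM(\la)\,\SM^{(\1)}_{\SRN-\SRM}(\la)$, in which the entries of the two blocks act on different sites and therefore commute. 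Taking an $(i,j)$ entry and forming the average \rf{avdef} produces a product over the $q$-cycle $k=1,\dots,p$ of sums of two block-terms; expanding it gives the $2^p$ ordered products indexed by $\ep\in\{0,1\}^p$. The two ``pure'' choices $\ep=(0,\dots,0)$ and $\ep=(1,\dots,1)$ reproduce exactly the right-hand side of the matrix identity, so everything reduces to showing that the mixed terms cancel.

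The engine for this cancellation is a root-of-unity triviality of the averaged $R$-matrix. Using that $q$ is a primitive $p$-th root of unity with $p$ odd, every entry of $R(\la)$ in \rf{Rlsg} factors into linear pieces and the elementary identity $\prod_{k=1}^{p}(q^k\la-a)=\Lambda-a^p$ (with $\Lambda=\la^p$) yields
\[
\prod_{k=1}^{p}R(q^k\la)\,=\,\frac{\Lambda^2-1}{\Lambda}\,\Id\,,
\]
i.e. the off-diagonal coupling $q-q^{-1}$ washes out over a full cycle and the two middle-diagonal products coincide. I would then feed this into the quadratic relations \rf{YBA}: moving one block-generator through the full cycle of the other and collecting the accompanying $R$-matrix factors, the triviality above forces the ratio of ``wanted'' coefficients to equal $1$ and makes the ``exchanged-argument'' corrections telescope to zero. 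This is precisely the statement that the averaged monodromies multiply as ordinary, mutually commuting matrices, which kills all mixed $\ep$ and establishes the displayed matrix identity.

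Centrality then comes essentially for free in the representation \rf{reprdef}: since $z_n^p=1$ on $\BS_p$, one has $\su_n^p=u_n^p\,\id$ and $\sv_n^p=v_n^p\,\id$, so every full-cycle average is built from these scalars and is a $c$-number operator, commuting with all of $\SA(\mu),\dots,\SD(\mu)$. Concretely, the same root-of-unity reduction (now the vanishing of the elementary symmetric functions of the $p$-th roots of unity, equivalently the $q$-binomial coefficients $\binom{p}{j}_q=0$ for $0<j<p$) collapses the single-site product $\prod_k L_n(q^k\la/\xi_n)$ to the explicit scalar matrix $\CL_n(\Lambda)$ of \rf{L_n}; iterating the matrix identity down to single sites reproduces \rf{RRel1a}, and re-bracketing the ordered product of the $\CL_n$ yields the block recursion for any admissible split.

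The main obstacle I anticipate is the bookkeeping of the mixed-term cancellation: one must organize the ordered products over the $q$-cycle and apply \rf{YBA} the right number of times so that the averaged-$R$-matrix triviality can actually be invoked and the corrections genuinely telescope, rather than merely simplify. A secondary point is the parity hypothesis that $\SRM$ or $\SRN-\SRM$ be odd: this is exactly the condition under which the $\en$-dependent normalizations in $b_\eta$ and the $\la\to\infty$ limit leading to \rf{Exrecrel3} are compatible across the split, ruling out the degenerate even--even decomposition in which the supplementary variable $\eta_\SRN$ spoils the naive factorization.
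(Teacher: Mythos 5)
Your reduction of the four identities to the single matrix identity $\CM_\SRN(\Lambda)=\CM_\SRM(\Lambda)\,\CM_{\SRN-\SRM}(\Lambda)$, and the observation that the two ``pure'' terms of the $2^p$-fold expansion reproduce its right-hand side, is a correct restatement of what has to be proven; but the vanishing of the sum of mixed terms, which is the entire content of the proposition, is asserted rather than proven, and the mechanism you propose cannot supply it. When you use \rf{YBA} to reorder generators of the \emph{same} block sitting at different points of the cycle, the $R$-matrix is evaluated at the \emph{ratio} of the two spectral parameters, i.e.\ at the fixed roots of unity $q^{j-k}$, never at the running arguments $q^k\la$; so your (true, but irrelevant) identity $\prod_{k=1}^{p}R(q^k\la)=\frac{\Lambda^2-1}{\Lambda}\,\Id$ never gets a chance to act. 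Worse, the constant-argument matrices that actually occur include $R(q^{\pm1})$, which are degenerate (rank-deficient), and it is exactly this degeneration that obstructs any naive ``telescoping'' of the exchanged-argument corrections; controlling it is the hard combinatorial core of Tarasov's original proof, which your sketch leaves untouched. Your centrality argument is also circular as stated: the scalarity of $\su_n^p$ and $\sv_n^p$ does not by itself make a cycle average scalar --- one must know that only $p$-th powers of the generators survive in its expansion, and you deduce that from the factorization ``iterated down to single sites'', i.e.\ from the very cancellation that was never established.

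Note that the paper proves the statement in the opposite logical order, and that is what makes its proof short. Centrality is established \emph{first}: $\CB_\SRN$ is central because $\SB_\SRN(\la)$ is diagonalizable (the SOV construction of the preceding subsection), with eigenvalues $\eta_\SRN^{{\rm e}_\SRN}b_\eta(\la)$ whose cycle product depends only on the $Z_a=\eta_a^p$, which are constant on the lattice $\BB_\SRN$; the averages $\CA_\SRN,\CD_\SRN$ are then pinned down by interpolation through the values \rf{ADaver}, and $\CC_\SRN$ by its diagonalizability. Given centrality, the recursion follows from a two-line argument: writing $\CA_\SRN(\Lambda)=\CA_\SRM(\Lambda)\CA_{\SRN-\SRM}(\Lambda)+\CB_\SRM(\Lambda)\CC_{\SRN-\SRM}(\Lambda)+\Delta_\SRN(\la)$, the remainder $\Delta_\SRN$ is a difference of central elements, hence central; but by \rf{SAdef} every monomial in $\Delta_\SRN$ contains at least one and fewer than $p$ shift operators coming from the factors $\SA_{\SRM}(q^m\la)$, all shifting in the same direction, so acting on any $\SB_\SRM$-eigenstate $\langle\chi_\2|$ it produces only states with strictly shifted $\SB_\SRM$-eigenvalues; a central operator with no diagonal matrix elements must vanish, so $\Delta_\SRN=0$. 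If you want to repair your write-up, invert its logic along these lines --- derive centrality from the SOV representation, then let centrality kill the mixed terms --- rather than attempting a direct $R$-matrix cancellation.
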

\begin{proof}
In the previous subsection
we have proven the existence of SOV-representations, i.e. the
diagonalizability of the $\SB$-operator.
First of all let us point out that $\SA(\lambda )$, $\SB(\lambda )$, $\SC(\lambda )$ and $\SD(\lambda )$
are one parameter families of commuting operators. This implies that
the corresponding average values are functions of $\Lambda =\lambda ^{p}$.

The fact that $\CB_\SRN(\Lambda)$ is central trivially follows from the fact that
$\SB_\SRN(\la)$ is diagonal in the SOV-representation, while for the operators
$\SA$ and $\SD$ we have that for $\SRN$ odd, $\mathcal{A}_{\SRN}(\Lambda )\Lambda ^{\SRN-1}$ and $\mathcal{D}_{\SRN}(\Lambda )\Lambda ^{\SRN-1}$ are polynomials in $\Lambda
^{2} $ of degree $\SRN-1$. It follows that the special values
given by \rf{ADaver}
characterize them
completely,
\begin{equation}
\begin{aligned}
\mathcal{A}_{\SRN}(\Lambda )&\,=\,
\sum_{a=1}^{[\SRN]}\prod_{b\neq a}\frac{(\Lambda/Z _{b}-Z _{b}/\Lambda )}{(Z_{a}/Z_{b}-Z _{b}/Z_{a})}\emph{A}_{\SRN}(Z_{a})\,,\\
\mathcal{D}_{\SRN}(\Lambda )&\,=\,
\sum_{a=1}^{[\SRN]}\prod_{b\neq a}
\frac{(\Lambda/Z_{b}-Z_{b}/\Lambda)}{(Z_{a}/Z_{b}-Z_{b}/Z_{a})}
\emph{D}_{\SRN}(Z_{a}),
\end{aligned}
\end{equation}
where $\emph{A}_{\SRN}(Z_{a})$ and $\emph{D}_{\SRN}(Z_{a})$ are the average values of the coefficients of the SOV-representation.
In the case of $\SRN$ even we have just to add the asymptotic
property of $\mathcal{A}_{\SRN}(\Lambda )$ and $\mathcal{D}_{\SRN}(\Lambda )$\
discussed in appendix \ref{Asymp-A-D} to complete the statement. Finally, the fact that $%
\mathcal{C}_{\SRN}(\Lambda )$ is central follows by its diagonalizability in
the cyclic representations.

Now the above recursive formulae (\ref{average value-B}-\ref{average value-D}) are
a simple consequence of the centrality of the average values of the monodromy matrix elements. Let us consider only the case of the average value of $\SA_{\SRN}(\lambda )$. We have the expansion:
\begin{equation}
\SA_{\SRN}(\lambda )=\SA_{\2 \ \SRM}(\lambda )\SA_{\1 \ \SRN-\SRM}(\lambda )+\SB_{\2 \ \SRM}(\lambda
)\SC_{\1 \ \SRN-\SRM}(\lambda ),  \label{A-expan}
\end{equation}
in terms of the entries of the monodromy matrix of the subchains $\1$ and $\2$
with $(\SRN-\SRM)$-sites and $\SRM$-sites, respectively.
It follows directly from definition \rf{avdef} of the
average value together with \rf{A-expan} that $\mathcal{A}_{\SRN}(\Lambda )$
can be represented in the form
\begin{equation}
\mathcal{A}_{\SRN}(\Lambda )=\mathcal{A}_{\2 \ \SRM}(\Lambda )\mathcal{A}_{\1 \ \SRN-\SRM}(\Lambda
)+\mathcal{B}_{\2 \ \SRM}(\Lambda )\mathcal{C}_{\1 \ \SRN-\SRM}(\Lambda )+\Delta _{\SRN}(\lambda )
\label{Central-D}
\end{equation}
where $\Delta _{\SRN}(\lambda )$ is a sum over monomials which contain
at least one and at most $p-2$ factors of
$\SA_{\2 \ \SRM}(\la q^m)$. As before, we may work in a representation
where the $\SB_{\2 \ \SRM}(\la q^n)$ are diagonal, spanned by the states
$\langle\,\chi_\2\,|$ introduced in the previous subsection. As
the factors $\SA_{\2 \ \SRM}(\la q^m)$ contained in $\Delta _{\SRN}(\lambda )$
produce states with modified
eigenvalue of $\SB_{\2 \ \SRM}(\la q^n)$, none of
the states produced by acting with $\Delta _{\SRN}(\lambda )$ on
$\langle\,\chi_\2\,|$ can be proportional to
$\langle\,\chi_\2\,|$. This would be in contradiction to the fact that
$\mathcal{A}_{\SRN}(\Lambda )$ is central unless $\Delta _{\SRN}(\lambda )=0$.
\end{proof}

\subsection{Non-degeneracy condition}\label{nondegapp}


\begin{propn}
\label{B-simplicity}The condition $Z_{r}=Z_{s}$ for certain $r\neq s$ with $%
r,s\in \{1,...,[\SRN]\}$ defines a subspace  in the space of the parameters $%
\{\kappa_{1},...,\kappa_{\SRN},\xi_{1},...,\xi_{\SRN}\}\in
\mathbb{C}^{2\SRN}$ of codimension at least one.
\end{propn}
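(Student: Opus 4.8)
The plan is to reduce the claim to the nonvanishing of a single discriminant polynomial and then to produce one parameter point where it does not vanish. First I recall, from \rf{CB}, that the numbers $Z_1,\dots,Z_{[\SRN]}$ occur in $\CB_\SRN(\Lambda)$ exactly as its zeros (in the variable $\Lambda$ they appear in $\pm$ pairs). Writing $w=\Lambda^2$, the average value $\CB_\SRN$ therefore determines, up to a nonzero prefactor, an honest polynomial $B(w)=\prod_{a=1}^{[\SRN]}(w-Z_a^2)$ whose coefficients are the elementary symmetric functions of the $Z_a^2$. By Proposition \ref{Avrec}, i.e. the recursion $\CM_\SRN(\Lambda)=\CL_\SRN(\Lambda)\cdots\CL_1(\Lambda)$ in \rf{RRel1a} together with the explicit Lax entries \rf{L_n}, these coefficients are polynomials in $K_n=\kappa_n^p$ and $X_n=\xi_n^p$, hence polynomials in the parameters $\kappa_1,\dots,\kappa_\SRN,\xi_1,\dots,\xi_\SRN$.

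Next I form the discriminant $D$ of $B(w)$. Being a universal polynomial in the coefficients of $B$, it is a polynomial $D(\kappa,\xi)$ on $\BC^{2\SRN}$, and it vanishes precisely when two of the $Z_a^2$ coincide. Since $Z_a^2\neq Z_b^2$ already forces $Z_a\neq Z_b$, the locus where $Z_r=Z_s$ for some $r\neq s$ is contained in $\{D=0\}$. A proper Zariski-closed subset of $\BC^{2\SRN}$ has codimension at least one, so if $D\not\equiv 0$ the bad locus lies in a proper hypersurface and the statement follows. Thus everything reduces to exhibiting one choice of $(\kappa,\xi)$ for which the $Z_a^2$ are pairwise distinct.

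To produce such a point I work in the slice $u_n=v_n=1$ used throughout. There the Lax matrices \rf{L_n} take the form $\CL_n(\Lambda)=(K_n^2+1)\,I+\beta_n(\Lambda)\,\sigma$ with $\sigma=\big(\begin{smallmatrix}0&1\\ 1&0\end{smallmatrix}\big)$ and $\beta_n(\Lambda)=\tfrac{K_n}{i^p}(\Lambda/X_n-X_n/\Lambda)$, so that all $\CL_n(\Lambda)$ commute and $\CB_\SRN(\Lambda)=\tfrac12\big[\prod_{n=1}^\SRN(K_n^2+1+\beta_n)-\prod_{n=1}^\SRN(K_n^2+1-\beta_n)\big]$. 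Choosing $\xi_1,\dots,\xi_\SRN$ with $\xi_a^{2p}$ pairwise distinct and letting $K_n\to\infty$, the top-degree part of $\CB_\SRN$ dominates; for $\SRN$ odd it is proportional to $\prod_{n=1}^\SRN(\Lambda/X_n-X_n/\Lambda)$, whose zeros give $Z_a^2\to X_a^2=\xi_a^{2p}$, pairwise distinct, so $D\neq0$ at such a point.

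The main obstacle is precisely this last step, namely rigorously establishing distinctness of the $Z_a^2$ at the exhibit. The odd case is transparent, but the even case is more delicate, because there $[\SRN]=\SRN-1$ and the leading part of $\CB_\SRN$ is the degree-$(\SRN-1)$ combination $\sum_m(K_m^2+1)\prod_{n\neq m}\beta_n$ rather than a single product; one must then separately verify that its discriminant is a nonzero polynomial in $\xi$, for instance by a further degeneration or an explicit small base case. An alternative that sidesteps the explicit formula is a genuine induction on $\SRN$: taking $\SRN-\SRM=1$ in the recursion \rf{average value-B} expresses $\CB_\SRN$ through $\CB_{\SRN-1}$, $\CA_{\SRN-1}$ and $\CD_{\SRN-1}$, and, using continuity of the roots of $B(w)$ in the parameters, one shows that for generic added data $(\kappa_\SRN,\xi_\SRN)$ the $[\SRN]$ zeros stay distinct once those at level $\SRN-1$ are, in line with the stated sufficient condition that the subchain representations $\CR_\SRM$ exist for all $\SRM<\SRN$. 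Either way the essential content is to produce one nondegenerate configuration, after which the discriminant argument closes the proof.
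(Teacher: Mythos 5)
Your reduction of the claim to the non-vanishing of a single polynomial, followed by the exhibition of one good parameter point, is sound and is essentially the same reduction the paper makes (the paper uses non-vanishing of the Jacobian $\det(\pa Z_r/\pa X_s)$ rather than a discriminant, but both arguments hinge on polynomiality in the parameters plus one explicit nondegenerate point). The genuine gap is that your exhibit point is wrong, and this is exactly the step carrying the content of the proof. Writing $a_n=K_n^2+1$ and $\beta_n(\Lambda)=\frac{K_n}{i^p}(\Lambda/X_n-X_n/\Lambda)$, your own commuting-matrix formula expands as
\begin{equation*}
\CB_\SRN(\Lambda)\,=\,\sum_{|S|\ {\rm odd}}\ \prod_{n\in S}\beta_n(\Lambda)\prod_{n\notin S}a_n\,,
\end{equation*}
and a term with $|S|$ factors of $\beta$ has total degree $2\SRN-|S|$ in the $K$'s, since $a_n\sim K_n^2$ while $\beta_n\sim K_n$. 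Hence as $K_n\to\infty$ it is the terms with the \emph{fewest} $\beta$-factors, $|S|=1$, that dominate (degree $2\SRN-1$), not the full product $\prod_n\beta_n$ (degree $\SRN$), which you claim dominates. Concretely, the subleading coefficient of the monic polynomial $B(w)=\prod_a(w-Z_a^2)$, namely $-\sum_a Z_a^2$, blows up like $K^2$ in this limit, so the zeros cannot converge to the bounded distinct values $\xi_a^{2p}$; the would-be limit is governed by $\sum_m\beta_m\prod_{n\neq m}a_n$, which has only a single pair of zeros in $\Lambda$. So the $K_n\to\infty$ degeneration establishes nothing, in either parity, and your fallback suggestions (further degeneration, induction with continuity of roots) are not carried out.

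The correct degeneration goes in the opposite direction: instead of trying to make the $\beta_n$ dominate the $a_n$, kill the $a_n$ exactly. Since $p$ is odd, the choice $\kappa_n=\pm i$, i.e. $K_n=i^p$, for $n=1,\dots,[\SRN]$ gives $K_n^2+1=0$, so in the sum above only the single term $S=\{1,\dots,[\SRN]\}$ survives and one gets, exactly and not just asymptotically,
\begin{equation*}
\CB_\SRN(\Lambda)\,=\,\left(K_\SRN^2+1\right)^{{\rm e}_\SRN}\prod_{n=1}^{[\SRN]}\left(\Lambda/X_n-X_n/\Lambda\right)\,,
\end{equation*}
which is precisely the paper's special point. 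The zeros are then $Z_a^2=X_a^2=\xi_a^{2p}$ on the nose, pairwise distinct for your choice of $\xi$, and this works uniformly for $\SRN$ even and odd (for even $\SRN$ only the $[\SRN]=\SRN-1$ parameters $X_1,\dots,X_{\SRN-1}$ enter), so the separate treatment of the even case that worried you is unnecessary. With this exhibit your discriminant argument closes, up to one small technical repair: the coefficients of the monic $B(w)$ are rational (not polynomial) in $(\kappa,\xi)$, so you should take the discriminant/resultant of the polynomial obtained after clearing the leading coefficient and the Laurent denominators in the $X_n$; this multiplies the bad locus into a genuine algebraic hypersurface and does not affect the codimension count.
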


\begin{proof}
The parameters $Z_r$ are related to the expectation
value $\CB_\SRN(\Lambda)$ by means of the equation
\begin{equation}\label{CB'}
\CB_\SRN(\Lambda)\,=\,Z_{{\SRN}}^{{\rm e}_{\SRN}}
\prod_{n=1}^{\SRN}\frac{K_n}{i^p}
\prod_{a=1}^{{[}\SRN{]}}(\Lambda/Z_a-Z_a/\Lambda)\,.
\end{equation}
It follows from \rf{RRel1a} and \rf{L_n} that
$\CB_\SRN(\Lambda)$ is a Laurent polynomial in $X_n$
that depends polynomially on each of the parameters $K_n$.
Equation \rf{CB'} defines the tuple $Z=(Z_1^{},\dots,Z_{[\SRN]}^{})$
uniquely up to permutations of $Z_1^{},\dots,Z_{[\SRN]}^{}$
as function of the
parameters $X=(X_1,\dots,X_\SRN)$ and $K=(K_1,\dots,K_\SRN)$.
We are going to show that\footnote{It should be noted that for even
$\SRN$ it is indeed sufficient to consider the dependence
w.r.t. $X_1,\dots,X_{{\SRN}-1}$.}
\begin{equation}
J(X;K)\,\equiv\,
{\rm det}\left(\frac{\pa Z_r}{\pa X_s}\right)_{r,s=1,\dots,[\SRN]}\neq\,0\,.
\end{equation}
The functional dependence\footnote{Let $\sigma_n^{{[\SRN]}}(Z)$ be the degree n elementary symmetric polynomial in the variables Z, then $\sigma_n^{{[\SRN]}}(Z)/\sigma_{[\SRN]}^{[\SRN]}(Z)$ are Laurent polynomials of degree 1 in all the parameters X and K.} of the $Z_1^{},\dots,Z_{[\SRN]}^{}$ w.r.t. the parameters $K$ implies that it is sufficient to show that $J(X;K)\neq 0$
for special values of $K$ in order to prove that $J(X;K)\neq 0$ except for values of $K$ within a subset of
$\BC^\SRN$ of dimension less than $\SRN$.

Let us choose $K_{n}=i^p$ for     $n=1,...,[\SRN]$, then the average values \rf{L_n} of the Lax operators simplify to
\begin{equation}
\mathcal{L}_{n}^{SG}(\Lambda )=\left(
\begin{array}{cc}
0 & \Lambda /X_{n}-X_{n}/\Lambda \\
\Lambda /X_{n}-X_{n}/\Lambda & 0%
\end{array}
\right) \,.
\end{equation}
Inserting this into \rf{RRel1a} yields
\begin{equation}\label{CBsimp}
\mathcal{{B}}_{\SRN}(\Lambda )\,=\,(K_\SRN^2+1)^{{\rm e}_\SRN}
\prod_{n=1}^{[\SRN]}(\Lambda /X_{n}-X_{n}/\Lambda )\,.
\end{equation}
The fact that $J(X;K)\neq 0$ follows for the case under consideration
easily from \rf{CBsimp}.

Whenever $J(X;K)\neq 0$, we have invertibility of the mapping
$Z=Z(X_1,\dots,X_{[\SRN]})$. The claim follows from this observation.
\end{proof}

\section{The spectrum --- odd number of sites}\label{Spec}

\setcounter{equation}{0}

Let us now return to the analysis of the spectrum of the model.
For simplicity we will consider here the case of odd $\SRN$, while we will
discuss the case of even $\SRN$ in the next section. 
The existence of the SOV-representation
allows one to reformulate the spectral problem for $\ST(\la)$
as the problem
to find all solutions of the {\it discrete} Baxter equations
\rf{SOVBax1}. This equation may be written in the form
\begin{equation}\label{SOVBax}
\CD_r\,\Psi(\eta)\,=\,0\,,\qquad
\CD_r\equiv\,a(\eta_r)\ST_r^-+d(\eta_r)\ST_r^+-t(\eta_r)\,,
\end{equation}
where $r=1,\dots,\SRN$. Previous experience with the
SOV method suggests to consider the ansatz
\begin{equation}\label{Qeigenstate1}
\Psi(\eta)=\prod_{r=1}^{\SRN}Q_t(\eta_r)\,,
\end{equation}
where $Q_t(\la)$ is the eigenvalue of the corresponding $\SQ$-operator
which satisfies the {\it functional}
Baxter equations
\begin{equation}\label{BaxEV2a}
\begin{aligned}
&t(\la)Q_t(\la)\,=\,{\tt a}_\SRN(\la)Q_t(q^{-1}\la)+
{\tt d}_\SRN(\la)Q_t(q\la)\,.
\end{aligned}\end{equation}
This approach will turn out to work, but in a way that
is more subtle than in previously analyzed
cases.

\subsection{States from solutions of the Baxter equation}

First, in the present case it is not immediately clear if
the functional Baxter equation \rf{BaxEV2a} and the
discrete Baxter equation \rf{SOVBax} are compatible.
The question is if one can always assume that
the coefficients $a(\eta_r)$ and $d(\eta_r)$ in \rf{SOVBax}
coincide with the coefficients ${\tt a}_\SRN(\eta_r)$,
${\tt d}_\SRN(\eta_r)$ appearing in the functional equation \rf{BaxEV2a}
satisfied by the $\SQ$-operator.
The key point to observe is contained in the following Lemma.
\begin{lem}\label{A=A-B}
Let ${\tt A}_\SRN^{}(\Lambda)$ and ${\tt D}_\SRN^{}(\Lambda)$ be the average values of the coefficients ${\tt a}_\SRN(\la)$
and ${\tt d}_\SRN(\la)$ of the Baxter equation \rf{BaxEV2a},
\begin{equation}
{\tt A}_\SRN^{}(\Lambda)\,\equiv\,\prod_{k=1}^{p}{\tt a}_\SRN(q^k\lambda)\,,\qquad
{\tt D}_\SRN^{}(\Lambda)\,\equiv\,\prod_{k=1}^{p}{\tt d}_\SRN(q^k\lambda)\,.
\end{equation}
We then have
\begin{equation}
{\tt A}_\SRN^{}(\Lambda)\,=\,\CA_\SRN(\Lambda)-\CB_\SRN(\Lambda)\,,\qquad
{\tt D}_\SRN^{}(\Lambda)\,=\,\CA_\SRN(\Lambda)+\CB_\SRN(\Lambda)\,.
\end{equation}
\end{lem}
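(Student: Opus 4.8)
The plan is to evaluate both sides as explicit rational functions of $\Lambda=\la^p$, reduce everything to a product over the $\SRN$ lattice sites, and match the single-site factors. Throughout I work in the case $u_n=v_n=1$ (so $U_n=V_n=1$), which is the regime in which the explicit formulas \rf{addef} are stated.

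First I would compute the left-hand sides. Since ${\tt a}_\SRN(\la)$ and ${\tt d}_\SRN(\la)$ in \rf{addef} are products over $r=1,\dots,\SRN$ of single-site factors depending only on $\la_r=\la/\xi_r$ and $\kappa_r$, their averages factorize as ${\tt A}_\SRN(\Lambda)=\prod_{r=1}^{\SRN}A^{(r)}(\Lambda)$ and ${\tt D}_\SRN(\Lambda)=\prod_{r=1}^{\SRN}D^{(r)}(\Lambda)$, where $A^{(r)}(\Lambda)=\prod_{k=1}^p a^{(r)}(q^k\la)$ and $a^{(r)}$ is the $r$-th factor of ${\tt a}_\SRN$. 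Each single-site average is then computed using the cyclic product identities valid for $p$ odd, namely $\prod_{k=1}^p q^k=1$ and $\prod_{k=1}^p(1+a\,q^k)=1+a^p$, together with $(q^{1/2})^p=1$. Collapsing the prefactor and the two nontrivial factors of $a^{(r)}(q^k\la)$ gives, after a short computation,
\begin{equation*}
A^{(r)}(\Lambda)=(K_r^2+1)-\frac{K_r}{i^p}\Big(\Lambda/X_r-X_r/\Lambda\Big),\qquad D^{(r)}(\Lambda)=(K_r^2+1)+\frac{K_r}{i^p}\Big(\Lambda/X_r-X_r/\Lambda\Big),
\end{equation*}
with $K_r=\kappa_r^p$ and $X_r=\xi_r^p$.

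For the right-hand sides I would use the recursion \rf{RRel1a} of Proposition \ref{Avrec}, namely $\CM_\SRN(\Lambda)=\CL_\SRN(\Lambda)\cdots\CL_1(\Lambda)$, with the averaged Lax matrix \rf{L_n} at $U_n=V_n=1$. In this case $\CL_n(\Lambda)$ is symmetric with equal diagonal entries, $\CL_n(\Lambda)=\big(\begin{smallmatrix}\alpha_n & \gamma_n\\ \gamma_n & \alpha_n\end{smallmatrix}\big)$ where $\alpha_n=K_n^2+1$ and $\gamma_n=\frac{K_n}{i^p}(\Lambda/X_n-X_n/\Lambda)$. The key structural observation is that every $\CL_n(\Lambda)$ is diagonalized by the single $n$-independent matrix with columns $(1,1)^t$ and $(1,-1)^t$, with eigenvalues $\alpha_n\pm\gamma_n$; matrices of this form are closed under multiplication, so $\CM_\SRN(\Lambda)$ is diagonalized by the same matrix. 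Reading off the two eigenvalues, with $(1,1)^t$ producing $\CA_\SRN+\CB_\SRN$ and $(1,-1)^t$ producing $\CA_\SRN-\CB_\SRN$, yields $\CA_\SRN(\Lambda)\pm\CB_\SRN(\Lambda)=\prod_{n=1}^{\SRN}(\alpha_n\pm\gamma_n)$. (This packages the componentwise recursions \rf{average value-A}, \rf{average value-B} for the combinations $\CA_\SRN\pm\CB_\SRN$.)

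Comparing the two computations gives $A^{(r)}(\Lambda)=\alpha_r-\gamma_r$ and $D^{(r)}(\Lambda)=\alpha_r+\gamma_r$, so that ${\tt A}_\SRN=\prod_r A^{(r)}=\prod_r(\alpha_r-\gamma_r)=\CA_\SRN-\CB_\SRN$ and likewise ${\tt D}_\SRN=\CA_\SRN+\CB_\SRN$, which is the claim. I expect the main obstacle to be the phase bookkeeping in the single-site step: the factors $i^p$, $(-i)^p$ and $q^{\pm p/2}$ must be tracked carefully, and one must read $q^{1/2}$ as the square root taken inside the cyclic group $\BZ_p$ (so that $(q^{1/2})^p=1$) rather than the principal analytic root; with the latter one would pick up a spurious factor $(-1)^{p'/2}$ in the diagonal term $K_r^2+1$. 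Everything else is routine once Proposition \ref{Avrec} and the explicit averaged Lax matrix \rf{L_n} are granted.
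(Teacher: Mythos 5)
Your proof is correct, and it rests on the same two inputs as the paper's own proof --- the single-site data and the factorized average monodromy \rf{RRel1a} of Proposition \ref{Avrec} --- but it organizes them differently. The paper argues by induction on $\SRN$: the base case $\SRN=1$ is left as a direct computation (it is precisely your single-site calculation), and the inductive step factorizes ${\tt A}_\SRN$, ${\tt D}_\SRN$ over two subchains, applies the induction hypothesis to each factor, and recombines using \rf{RRel1a} together with the symmetries $\CA_\SRN=\CD_\SRN$, $\CB_\SRN=\CC_\SRN$ valid at $u_n=v_n=1$. Your simultaneous-diagonalization step is that same recombination in different clothing: observing that every $\CL_n(\Lambda)=\alpha_n\,\id+\gamma_n\sigma_1$ is diagonalized by the fixed vectors $(1,\pm 1)^t$, so that $\CA_\SRN\pm\CB_\SRN=\prod_n(\alpha_n\pm\gamma_n)$, amounts to the multiplicativity of the map $\alpha\,\id+\gamma\sigma_1\mapsto(\alpha+\gamma,\alpha-\gamma)$, which is exactly the identity the paper's inductive step verifies component by component. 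What your route buys is explicitness: you display closed formulas for the single-site averages $A^{(r)}(\Lambda)$, $D^{(r)}(\Lambda)$, which the paper never writes down, and you flag the only delicate point, the reading of $(q^{1/2})^p$. Two remarks on that point: first, your formulas and the printed form of \rf{L_n} are indeed only compatible with the cyclic convention $(q^{1/2})^p=1$, as you say; second, the lemma itself is actually insensitive to the choice, because the factors $(q^{\pm 1/2})^p$ enter the averages of ${\tt a}_\SRN$, ${\tt d}_\SRN$ and the averaged diagonal Lax entries of \rf{L_n} in the same way and cancel between the two sides of the claimed identity --- so the convention need only be applied consistently, which your proof does.
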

\begin{proof} The claim is checked for $\SRN=1$ by straightforward
computation.
Let us assume now that the statement holds for $\SRN-1$ and let
us show it for $\SRN$. The average values ${\tt A}_{\SRN}(\Lambda )$ and ${\tt D}_{\SRN}(\Lambda )$ satisfy by definition the factorization:
\begin{equation}
{\tt A}_{\SRN} (\Lambda )={\tt A}_{1}^{(\SRN)}(\Lambda ){\tt A}_{\SRN-1}^{(\SRN-1,...,1)}(\Lambda ),\text{ \ }{\tt D}_{\SRN} (\Lambda
)={\tt D}_{1}^{(\SRN)}(\Lambda ){\tt D}_{\SRN-1}^{(\SRN-1,...,1)}(\Lambda ),
\end{equation}
where the upper indices are referred to the quantum sites involved while the
lower indices to the total number of sites. We can use now the induction
hypothesis to get the result:
\begin{align}
{\tt A}_{\SRN} (\Lambda )& =(\mathcal{A}_{1}^{(\SRN)}-%
\mathcal{B}_{1}^{(\SRN)}(\Lambda ))(\mathcal{A}_{\SRN-1}^{(\SRN-1,...,1)}(\Lambda )-%
\mathcal{B}_{\SRN-1}^{(\SRN-1,...,1)}(\Lambda ))=\mathcal{A}_{\SRN}(\Lambda )-%
\mathcal{B}_{\SRN}(\Lambda ), \\
{\tt D}_{\SRN} (\Lambda )& =(\mathcal{A}_{1}^{(\SRN)}+%
\mathcal{B}_{1}^{(\SRN)}(\Lambda ))(\mathcal{A}_{\SRN-1}^{(\SRN-1,...,1)}(\Lambda )+%
\mathcal{B}_{\SRN-1}^{(\SRN-1,...,1)}(\Lambda ))=\mathcal{A}_{\SRN}(\Lambda )+%
\mathcal{B}_{\SRN}(\Lambda ),
\end{align}
where in the last formulae we have used \rf{RRel1a} together
with the fact that $\CA_\SRN(\Lambda)=\CD_\SRN(\Lambda)$
and $\CB_\SRN(\Lambda)=\CC_\SRN(\Lambda)$ for $u_n=1$, $v_n=1$,
$n=1,\dots,\SRN$.
\end{proof}
The Lemma implies in particular
\begin{equation}\label{AADD}
{\tt A}_\SRN^{}(Z_r)\,=\,\CA_\SRN(Z_r)\,,\qquad
{\tt D}_\SRN^{}(Z_r)\,=\,\CD_\SRN(Z_r)\,,
\end{equation}
for all $r=1,\dots,\SRN$. We may therefore always find a gauge transformation \rf{gauge}
such that the coefficients
$a_\SRN^{}(\eta_r)$ and $d_\SRN^{}(\eta_r)$ in \rf{SOVBax} become equal
to
\begin{equation}\label{aadd}
a_\SRN^{}(\eta_r)\,=\,{\tt a}_\SRN^{}(\eta_r)\,,\qquad
d_\SRN^{}(\eta_r)\,=\,{\tt d}_\SRN^{}(\eta_r)\,,
\end{equation}
respectively. So from now on we will denote also the coefficients in \rf{SOVBax1} with ${\tt a}$ and ${\tt d}$ omitting the index $\SRN$ unless necessary. The ansatz \rf{Qeigenstate1} therefore indeed
yields an eigenstate of $\ST(\la)$ for each solution $Q_t(\la)$ of the functional Baxter equation \rf{BaxEV}. We are going to show that {\it all} eigenstates can be obtained in this way.

\subsection{Non-degeneracy of $\ST(\la)$-eigenvalues}\label{Compatib}

In order to analyze the equations \rf{SOVBax},
let us note that the matrix representation of the operator
$\CD_r$ defined in \rf{SOVBax} is block diagonal
with blocks labeled by $n=1,\dots,\SRN$.
Let $\Psi_{n}(\eta)\in\BC^{p}$ be the vector with components
\[
\Psi_{n,k}(\eta)\,=\,
\Psi(\eta_1,\dots,\eta_{n-1},\zeta_nq^k,\eta_{n+1},\dots,\eta_\SRN)\,.
\]
Equation \rf{SOVBax} is then equivalent to the set of linear equations
\begin{equation}\label{BAXmatrix}
D^{(r)}\cdot\Psi_r(\eta)\,=\,0\,,\qquad r=1,\dots,\SRN\,.
\end{equation}
where $D^{(r)}$ is the $p\times p$-matrix 
\begin{equation}\label{D-matrix}
\begin{pmatrix}
t(\zeta_r)   &-{\tt d}(\zeta_r)&   0        &\cdots & 0 & -{\tt a}(\zeta_r)\\
-{\tt a}(q\zeta_r)& t(q\zeta_r)&-{\tt d}(q\zeta_r)& 0     &\cdots & 0 \\
      0       & {\quad} \ddots      & &     &     &         \vdots   \\
  \vdots           &     &  \cdots    &  &       &     \vdots   \\
     \vdots         &     &   & \cdots &       &   \vdots     \\
     \vdots   &            &    &  &  \ddots{\qquad}     &   0 \\
 0&\ldots&0& -{\tt a}(q^{2l-1}\zeta_r)& t(q^{2l-1}\zeta_r) &
-{\tt d}(q^{2l-1}\zeta_r)\\
-{\tt d}(q^{2l}\zeta_r)   & 0      &\ldots      &     0  & -{\tt a}(q^{2l}\zeta_r)& t(q^{2l}\zeta_r)
\end{pmatrix}
\end{equation}
The equation \rf{BAXmatrix} can have solutions only if
${\rm det}(D^{(r)})=0$. The determinant ${\rm det}(D^{(r)})$
is a polynomial of degree $p$ in each of the $\SRN$ coefficients
of the polynomial $t(\la)$.

\begin{propn}\label{Simply1}
Given that ${\rm det}(D^{(r)})=0$,
the dimension of the space of solutions to the equation \rf{BAXmatrix}
for any $r=1,\dots,\SRN\,$ is one
for generic values of the parameters $\xi$ and $\kappa$.
\end{propn}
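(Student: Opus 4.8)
The plan is to read the equation $D^{(r)}\cdot\Psi_r(\eta)=0$ as a \emph{cyclic} three-term recurrence on the $p=2l+1$ points $\zeta_r,q\zeta_r,\dots,q^{2l}\zeta_r$ and to control its solution space through the associated monodromy. For generic $\xi$ and $\kappa$ the off-diagonal coefficients ${\tt a}(q^k\zeta_r)$ and ${\tt d}(q^k\zeta_r)$ are all nonzero, since by \rf{addef} they are products of linear factors whose vanishing cuts out a subset of codimension one in the parameters (the same mechanism invoked in Proposition \ref{B-simplicity}). The recurrence is then non-degenerate: writing $v_k=(\Psi_{r,k},\Psi_{r,k-1})^{t}$, the $k$-th row of $D^{(r)}$ reads $v_{k+1}=M_k v_k$ with $M_k=\left(\begin{smallmatrix} t(q^k\zeta_r)/{\tt d}(q^k\zeta_r) & -{\tt a}(q^k\zeta_r)/{\tt d}(q^k\zeta_r) \\ 1 & 0\end{smallmatrix}\right)$. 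A vector $\Psi_r$ lies in $\ker D^{(r)}$ exactly when the initial datum $v_0$ is a fixed point of the monodromy $M\equiv M_{2l}\cdots M_0$, so the solution space is canonically isomorphic to $\ker(M-\Id)$, and $\det D^{(r)}=0$ is equivalent to $1\in\mathrm{spec}(M)$.

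The decisive simplification, which I would establish next, is that $M$ is unimodular with a forced double eigenvalue. Indeed $\det M=\prod_{k=0}^{2l}{\tt a}(q^k\zeta_r)/{\tt d}(q^k\zeta_r)={\tt A}_\SRN(Z_r)/{\tt D}_\SRN(Z_r)$ with $Z_r=\zeta_r^p$. By Lemma \ref{A=A-B} one has ${\tt A}_\SRN(Z_r)=\CA_\SRN(Z_r)-\CB_\SRN(Z_r)$ and ${\tt D}_\SRN(Z_r)=\CA_\SRN(Z_r)+\CB_\SRN(Z_r)$; since $Z_r$ is by \rf{CB} a zero of $\CB_\SRN$, both reduce to $\CA_\SRN(Z_r)$ and hence $\det M=1$. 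Thus $M\in SL_2$, and as it already carries the eigenvalue $1$ (from $\det D^{(r)}=0$), both its eigenvalues equal $1$, so $M$ is unipotent. For such a matrix $M-\Id$ is nilpotent, whence $\dim\ker(M-\Id)=1$ unless $M=\Id$, in which case it is $2$. The Proposition therefore reduces to showing that $M=\Id$ does not occur for generic $\xi,\kappa$.

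The last step is the one I expect to be the main obstacle. Factoring $M_0$ out of the last column gives $(M)_{12}=-\tfrac{{\tt a}(\zeta_r)}{{\tt d}(\zeta_r)}\,(M_{2l}\cdots M_1)_{11}$, a polynomial in the values $t(q^k\zeta_r)$, $k\ge1$, whose leading monomial is $\prod_{k=1}^{2l}t(q^k\zeta_r)$ up to a nonzero factor and which, crucially, is independent of $t(\zeta_r)$. Because $\det D^{(r)}=0$ is equivalent to $\Tr M=2$, an affine relation of positive degree in $t(\zeta_r)$, one can use $t(\zeta_r)$ to move along the constraint surface while $(M)_{12}$ varies freely; so $(M)_{12}$ does not vanish identically there and $M\neq\Id$ generically. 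To make this rigorous and uniform in $\xi,\kappa$, I would mimic Proposition \ref{B-simplicity} and exhibit one explicit specialization (e.g. $K_n=i^p$, for which the average Lax matrices and hence $M$ become completely explicit) where $M\neq\Id$ is checked by direct computation; the polynomial dependence of the entries of $M$ on $\xi,\kappa$ then forces $M\neq\Id$ away from a subvariety of codimension at least one. The delicate point is precisely that $t(\la)$ is not a free parameter but is tied to the $\SRN$ coefficients of $\ST(\la)$ and to the constraint $\det D^{(r)}=0$, so the genericity must be phrased as an open dense condition on $(\xi,\kappa)$ rather than on $t$, and one must verify that the excluded locus $M=\Id$ meets the constraint surface in codimension $\ge1$.
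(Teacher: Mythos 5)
Your reduction is correct as far as it goes, and it is genuinely different from the paper's route. The identification of $\ker D^{(r)}$ with $\ker(M-\Id)$ for the cyclic monodromy $M=M_{2l}\cdots M_0$ is valid once all ${\tt d}(q^k\zeta_r)\neq0$ (a generic condition), and your computation $\det M={\tt A}_\SRN^{}(Z_r)/{\tt D}_\SRN^{}(Z_r)=1$ --- using Lemma \ref{A=A-B} together with the fact that $Z_r$ is a zero of $\CB_\SRN$ by \rf{CB}, and $\CA_\SRN(Z_r)\neq0$ generically --- is a nice structural observation (it is essentially \rf{degcond}, which the paper exploits only later, in the even-$\SRN$ section). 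This correctly reduces the Proposition to excluding $M=\Id$. The paper instead works with the $(p-1)\times(p-1)$ minor $E^{(r)}$ of $D^{(r)}$ obtained by deleting the first column and the last row, and proves $\det E^{(r)}\neq0$ generically.

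The genuine gap is the exclusion of $M=\Id$, and the repair you sketch would not work. Your trace argument only shows that $\{M=\Id\}$ is a proper subvariety of the hypersurface $\{\det D^{(r)}=0\}$ in the space of values $(t(\zeta_r),\dots,t(q^{2l}\zeta_r))$, i.e.\ genericity in $t$; the Proposition requires genericity in $(\xi,\kappa)$ holding for \emph{every} $t$ solving the full system $\det D^{(s)}=0$, $s=1,\dots,\SRN$, and these $t$'s are finitely many algebraic functions of the parameters which nothing so far prevents from lying on the bad locus identically in $(\xi,\kappa)$. Your proposed fix, specializing to $K_n=i^p$ and ``checking $M\neq\Id$ by direct computation,'' faces two obstacles. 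First, that specialization (with the induced $\zeta_n=q^{\frac{1}{2}}\xi_n$) is exactly the paper's degeneration \rf{adzero}: there ${\tt a}(\zeta_r)=0$ and ${\tt d}(q^{-1}\zeta_r)=0$, so $\det M_0=0$ and $M_{2l}$ is not even defined --- the monodromy picture collapses precisely where you want to evaluate it; moreover, by Lemma \ref{A=A-B} any specialization that kills some ${\tt a}(q^k\zeta_r)$ necessarily kills some ${\tt d}(q^j\zeta_r)$ as well, so this collapse cannot be sidestepped. Second, at any specialization where $M$ is defined, its entries still depend on the unknown eigenvalue: $(M)_{12}$ equals, up to the factor $-{\tt a}(\zeta_r)/\prod_{k=0}^{2l}{\tt d}(q^k\zeta_r)$, the tridiagonal minor of $D^{(r)}$ with first row and first column deleted, whose leading term is $\prod_{k=1}^{2l}t(q^k\zeta_r)$; hence no finite computation at a special point can certify $M\neq\Id$ for all solutions $t$ at once. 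This $t$-uniformity is exactly what the paper's choice of minor buys: at \rf{adzero} the matrix $E^{(r)}$ becomes lower triangular and $\det E^{(r)}=\prod_{k=0}^{p-2}{\tt d}(q^k\zeta_r)$ is \emph{independent of} $t$, so its nonvanishing holds for every $t$ simultaneously, and only then does analyticity in $\kappa$ upgrade this to a statement generic in $(\xi,\kappa)$. To complete your argument you would need an analogous $t$-uniform certificate for $M\neq\Id$; the unipotency reduction, attractive as it is, does not supply one.
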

	
\begin{proof}
Let us decompose the $p\times p$ matrix ${D}^{(r)}$ into the
block form \begin{equation}
{D}^{(r)}\,=\,\left(\begin{matrix} v^{(r)} & E^{(r)} \\
d^{(r)} & w^{(r)}\end{matrix}\right)\,,
\end{equation}
where the
submatrix $E^{(r)}$ is a $(p-1)\times (p-1)$ matrix,
$v^{(r)}$ and $w^{(r)}$ are column and row vectors with $p-1$ components,
respectively.
We assume that ${\rm det}(D^{(r)})=0$, so existence of a solution to
${D}^{(r)}\Psi=0$ is ensured.
It is easy to see that the equation ${D}^{(r)}\Psi=0$
has a unique solution provided that ${\rm det}(E^{(r)})\neq 0$.


It remains to show that ${\rm det}(E^{(r)})\neq 0$
holds for generic values of the parameters $\xi$ and $\kappa$.
To this aim let us observe that the coefficients
${\tt a}(q^k\zeta_r^{})$ and ${\tt d}(q^k\zeta_r^{})$ appearing in
\rf{BAXmatrix} depend analytically on the parameters $\kappa$.
If ${\rm det}(E^{(r)})=0$ is not identically zero, it can therefore only
vanish at isolated points.
It therefore suffices to prove the statement in a neighborhood
of the values for the parameters $\kappa$
which are such that
\begin{equation}\label{adzero}
{\tt a}(\zeta_r^{})\,=\,0\,,\qquad {\tt d}(q^{-1}\zeta_r^{})\,=\,0\,.
\end{equation}
Such values of $\kappa$ and $\xi$ exist: Setting $\kappa_n=\pm i$ for $n=1,\dots,\SRN$, one finds that
\begin{equation}
\CB_\SRN(\Lambda)\,=\,
\prod_{n=1}^{\SRN}\left( \Lambda/X_n-X/\Lambda\right)\,,
\end{equation}
which vanishes for $\la=q^{\frac{1}{2}}\xi_n$.
We may therefore choose\footnote{%
Note that this choice implies that $v_{n}\in (-1)^{p^{\prime }/2}q^{1/2}\BS_p$.} $\zeta_n=q^{\frac{1}{2}}\xi_n$.
We then find \rf{adzero} from \rf{addef}, \rf{aadd}.

Given that \rf{adzero} holds, it is easy to see that
${\rm det}(E^{(r)})\neq 0$ . Indeed, the submatrix
$E^{(r)}_{kl}$,
is lower triangular if \rf{adzero} is valid, and it has
$-{\tt d}(q^k\zeta_r^{})$, $k=0,\dots,p-2$ as its diagonal
elements. It follows that
${\rm det}(E^{(r)})=\prod_{k=0}^{p-2}{\tt d}(q^k\zeta_r^{})$
which is always nonzero
if \rf{adzero} is satisfied.
\end{proof}

The previous results admit the following reformulation which is central
for the classification and construction of the spectrum of $\ST(\lambda )$:

\begin{thm}
For generic values of the parameters $\kappa $ and $\xi $ the spectrum of $\ST(\lambda )$ is simple and all the wave-functions $\Psi _{t}(\eta )$ can be
represented in the factorized form \rf{Qeigenstate1} with $Q_{t}$ being the
eigenvalue of the $\SQ$-operator on the eigenstate $|\,t\,\rangle $.

The eigenvectors $|\,t\,\rangle $ of $\ST(\la)$ are in one-to-one
correspondence with the polynomials $Q_{t}(\la)$ of order $2l\SRN$, with $Q_{t}(0)\neq 0$,
which satisfy the Baxter equation \rf{BaxEV} with $t(\la)$ being
an even Laurent polynomial in $\lambda $ of degree $\SRN-1$.
\end{thm}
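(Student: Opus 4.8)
The plan is to translate the spectral problem for $\ST(\la)$ into the discrete Baxter system \rf{SOVBax} furnished by the SOV-representation, and then to show that for each admissible $t(\la)$ this system has a one-dimensional solution space whose unique wave-function is forced into the factorized form \rf{Qeigenstate1}. First I would invoke Lemma \ref{A=A-B} together with the gauge choice \rf{aadd} to arrange that the coefficients $a(\eta_r)$, $d(\eta_r)$ appearing in \rf{SOVBax} are literally the evaluations ${\tt a}_\SRN(\eta_r)$, ${\tt d}_\SRN(\eta_r)$ of the coefficients in the functional Baxter equation \rf{BaxEV2a}. The decisive consequence is that, for fixed $t(\la)$, the single-variable equation governing the dependence of $\Psi$ on $\eta_r$ depends only on $\eta_r$ and not on the remaining variables; this is exactly the block-diagonal structure \rf{BAXmatrix} with the cyclic matrix \rf{D-matrix}.

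Next I would establish simplicity of the spectrum together with the factorized form by a separation argument built on Proposition \ref{Simply1}. For a value of $t(\la)$ in the spectrum the system \rf{SOVBax} is solvable, so each block matrix $D^{(r)}$ is singular, and by Proposition \ref{Simply1} its kernel is one-dimensional for generic $\kappa,\xi$; write $\phi_r$ for a generating vector, which is the unique (up to scale) cyclic solution of the single-variable equation on $\zeta_r\BS_p$ and is independent of the other variables. Taking an arbitrary solution $\Psi$ of \rf{SOVBax} and peeling off one variable at a time -- the $\eta_1$-dependence must be proportional to $\phi_1$, the residual factor must then be proportional to $\phi_2$ in $\eta_2$, and so on -- forces $\Psi(\eta)=C\prod_{r=1}^{\SRN}\phi_r(\eta_r)$. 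Hence the eigenspace attached to $t$ is one-dimensional, which proves that the spectrum of $\ST(\la)$ is simple.

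To bring in the $\SQ$-operator I would use property (A) of Theorem \ref{Qprop}: the $\SQ$-eigenvalue $Q_t(\la)$ on the eigenstate $|\,t\,\rangle$ is a polynomial of degree $2l\SRN$ with $Q_t(0)\neq 0$ solving \rf{BaxEV} identically in $\la$. Its restriction to each set $\zeta_r\BS_p$ therefore solves the same single-variable equation as $\phi_r$, so by the uniqueness from Proposition \ref{Simply1} it coincides with $\phi_r$ up to scale, and the wave-function collapses to the advertised form $\prod_rQ_t(\eta_r)$. This simultaneously identifies the factors in \rf{Qeigenstate1} with the $\SQ$-eigenvalue and makes the assignment $|\,t\,\rangle\mapsto Q_t$ injective: a polynomial of degree $2l\SRN$ is overdetermined by its values on the $\SRN p=2l\SRN+\SRN$ points of $\bigcup_r\zeta_r\BS_p$, so distinct eigenvectors yield distinct $Q_t$. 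For the converse I would take any polynomial $Q$ as in the statement, form $\Psi=\prod_rQ(\eta_r)$, note that it solves \rf{SOVBax} because $Q$ solves \rf{BaxEV} at each point $\eta_r$, and conclude that $\Psi$ is an eigenvector whose $\SQ$-eigenvalue agrees with $Q$ on the discrete set and hence equals $Q$ by the same degree count; thus the two maps are mutually inverse.

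The step I expect to be the main obstacle is precisely the non-vanishing that makes both directions consistent: the factorized form is meaningful only if $Q_t$ does not vanish on an entire set $\zeta_r\BS_p$, and conversely $\Psi=\prod_rQ(\eta_r)$ is a genuine nonzero eigenvector only under the same condition. The danger is real, since $q^p=1$ makes the factor $\la^p-Z_r$ invariant under $\la\mapsto q^{\pm1}\la$, so that if $(\la^p-Z_r)\mid Q$ then $Q/(\la^p-Z_r)$ solves \rf{BaxEV} with the \emph{same} $t$ and the naive wave-function degenerates to zero. I would dispose of this by a genericity argument in the spirit of Proposition \ref{B-simplicity} and Proposition \ref{Simply1}: the locus in the parameter space $\{\kappa,\xi\}$ on which some spectral $Q_t$ is divisible by one of the $\la^p-Z_r$ (equivalently, on which a Bethe root meets one of the points $\zeta_rq^k$) is cut out by a nontrivial analytic condition and so has codimension at least one, whence for generic $\kappa,\xi$ every $Q_t$ is nonvanishing on each $\zeta_r\BS_p$. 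Granting this, the correspondence $|\,t\,\rangle\leftrightarrow Q_t$ is one-to-one, completing the proof.
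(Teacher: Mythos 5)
You follow the same route as the paper for the core of the theorem: the discrete Baxter system \rf{SOVBax} is block diagonal, Proposition \ref{Simply1} makes each block kernel one-dimensional, whence simplicity and the factorization \rf{Qeigenstate1}; commutativity of $\ST$ and $\SQ$ then identifies the factors with the $\SQ$-eigenvalue, exactly as in \rf{QvsPsi}. Your explicit treatment of the danger that $Q_t$ might vanish identically on some block $\zeta_r\BS_p$ is in fact more careful than the paper, which tacitly assumes nonvanishing when it asserts the existence of \emph{nonzero} constants $\nu_r$. (One small slip there: divisibility of $Q_t$ by $\la^p-Z_r$, $Z_r=\zeta_r^p$, is not equivalent to ``a Bethe root meets one of the points $\zeta_rq^k$''; it requires \emph{all} $p$ points of that block to be roots. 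Since the locus you must avoid is the smaller one, this does not hurt your genericity strategy.)

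The genuine gap is in the last step of your converse direction: ``the $\SQ$-eigenvalue agrees with $Q$ on the discrete set and hence equals $Q$ by the same degree count.'' Write $Q_\Psi$ for the $\SQ$-eigenvalue of the eigenvector $\Psi$ built from $Q$. All you actually obtain is proportionality of the wave-functions $\prod_r Q(\eta_r)$ and $\prod_r Q_\Psi(\eta_r)$ on $\BB_\SRN$, and from this you can only deduce \emph{blockwise} proportionality: $Q=c_r\,Q_\Psi$ on $\zeta_r\BS_p$, with constants $c_r$ that a priori differ from block to block. You therefore do not know the values of $Q_\Psi$ at the $\SRN p$ points, only its direction within each $p$-dimensional block, so the degree count does not apply. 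The failure mode is concrete: if $G$ were a monic polynomial solution of \rf{BaxEV} of degree $2l\SRN-p$, then for any $A\neq B$ (both nonzero) the polynomials $(\la^p-A)G(\la)$ and $(\la^p-B)G(\la)$ would be distinct monic degree-$2l\SRN$ solutions with the \emph{same} $t(\la)$ (since $\la^p-A$ is invariant under $\la\to q\la$), blockwise proportional with $c_r=(Z_r-A)/(Z_r-B)$, and producing proportional wave-functions; note that your genericity argument cannot exclude this, because $A$ and $B$ are arbitrary and need not lie among the $Z_r$. What the one-to-one correspondence really requires is that a given $t(\la)$ admits at most one normalized polynomial solution of \rf{BaxEV} of degree $2l\SRN$ --- this is exactly the paper's Lemma \ref{Unilem}, proved by a q-Wronskian argument resting on Lemma \ref{A=A-B} (stated in the even-$\SRN$ section but applicable here). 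Without that ingredient, the claim that every polynomial as in the statement actually \emph{is} the $\SQ$-eigenvalue of the state it generates, and hence the bijectivity of your two maps, is not established.
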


\begin{proof}
Proposition \ref{Simply1}
implies that the spectrum of $\ST(\lambda )$
is simple. Let  $|\,t\,\rangle $ be an eigenstate of $\ST(\la)$.
Self-adjointness and mutual
commutativity of $\ST(\la)$ and $\SQ(\mu )$ imply that $|\,t\,\rangle $ is
also eigenstate of $\SQ(\la)$.
Let $Q_{t}(\la)$ be the $\SQ$-eigenvalue on $|\,t\,\rangle $.
The polynomial $Q_{t}(\la)$ is related to $t(\la)$ by the
Baxter equation \rf{BaxEV} which specialized to the values $\la=\eta _{r}$
yields the equations \rf{BAXmatrix}. It follows that
there must exist nonzero numbers $\nu_{r}$ such that
\begin{equation}\label{QvsPsi}
Q_{t}^{}(\zeta _{r}q^{k})\,=\,\nu _{r}\Psi _{r,k}(\zeta _{1},\dots ,
\zeta _{\SRN})\,.
\end{equation}
This implies that the wave-functions $\Psi (\eta )$ can be represented in the form \rf{Qeigenstate1} with
$Q_{t}$
being the eigenvalue of the $\SQ$-operator
on the eigenstate $|\,t\,\rangle $.
\end{proof}

\begin{rem} It may 
be worth noting that the equivalence with the 
Fateev-Zamolodchikov model
does not hold for odd number of lattice sites. The 
spectrum of the two models is qualitatively
different, 
being doubly degenerate in the Fateev-Zamolodchikov 
model but simple in the lattice Sine-Gordon model, as illustrated
in Appendix \ref{FZ}.
\end{rem}


\subsection{Completeness of the Bethe ansatz}

Assume we are given a solution $(\la_1,\dots,\la_{2l\SRN})$
of the Bethe equations \rf{BAE}. Let us
construct the polynomial $Q(\la)$ via 
equation \rf{Qfromzeros}. Define
\begin{equation}\label{TfromQ}
t(\la)\,:=\,\frac{{\tt a}(\la)Q(q^{-1}\la)+{\tt d}(\la)Q(q\la)}{Q(\la)}\,.
\end{equation}
$t(\la)$ is nonsingular for $\la=\la_k$, $k=1,\dots,{\rm M}$ thanks to the 
Bethe equations \rf{BAE}. The pairs $(Q(\eta_r),t(\eta_r))$ satisfy the
discrete Baxter equation by construction. Inserting this solution into
\rf{Qeigenstate1} produces an eigenstate $|\,t\,\rangle$ of 
the transfer matrix $\ST(\la)$ within the SOV-representation.

Conversely, let $|\,t\,\rangle$ be an eigenvector 
of $\ST(\la)$ with eigenvalue $t(\la)$. 
Let $Q_t'(\la)$ be the 
eigenvalue of $\SQ(\la)$ on $|\,t\,\rangle$. 
Thanks to the properties of $\SQ(\la)$ listed in 
Theorem \ref{Qprop} one may factorize $Q_t'(\la)$
in the form \rf{Qfromzeros}. The tuple of zeros 
$(\la_1',\dots,\la_{2l\SRN}')$ of  $Q_t'(\la)$ must satisfy the 
Bethe equations \rf{BAE} as follows from the Baxter
equation \rf{BAX} satisfied by $\SQ(\la)$. Inserting 
$Q_t'(\eta_r)$ into
\rf{Qeigenstate1} produces an eigenstate $|\,t'\,\rangle$ 
that must be proportional to $|\,t\,\rangle$ due 
to the simplicity of the spectrum of $\ST(\la)$.

It follows that there is a one-to-one correspondence between the solutions
to \rf{BAE} and the eigenstates of the transfer matrix
({\it Completeness of the Bethe ansatz}).


\section{The spectrum --- even number of sites}\label{ap-even}

\setcounter{equation}{0}

We will now generalize these results to 
the case of a chain with even number $N$ of sites. It turns out
that the spectrum of $\ST(\la)$ is degenerate in this case, but the
degeneracy is resolved by introducing an operator $\Theta$
which commutes both with $\ST(\la)$ and $\SQ(\la)$.
The joint spectrum of $\ST(\la)$, $\SQ(\la)$ and $\Theta$
is found to be simple.

\subsection{The $\Theta $-charge}

In the case of a lattice with $\SRN$ even quantum sites, we can introduce the
operator:
\renewcommand{\su}{{\mathsf u}}
\renewcommand{\sv}{{\mathsf v}}
\begin{equation}
\Theta =\prod_{n=1}^{\SRN}\sv_{n}^{(-1)^{1+n}}.  \label{topological-charge}
\end{equation}

\begin{propn}
\label{Lemma-Theta}$\Theta $ commutes with the transfer matrix and satisfies
the following commutation relations with the entries of the monodromy
matrix:
\begin{eqnarray}
\Theta \SC(\lambda ) &=&q\SC(\lambda )\Theta \text{, \ \ \ }[\SA(\lambda ),\Theta
]=0, \\
\SB(\lambda )\Theta &=&q\Theta \SB(\lambda ),\text{ \ \ }[\SD(\lambda ),\Theta ]=0.
\end{eqnarray}
\end{propn}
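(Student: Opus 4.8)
The plan is to show that conjugation by $\Theta$ acts on the full monodromy matrix $\SM(\la)$ of \rf{Mdef} as a fixed diagonal similarity transformation in the auxiliary space $\BC^2$, after which all four relations, together with $[\ST(\la),\Theta]=0$, can simply be read off entrywise. Write $s_n\equiv(-1)^{1+n}$, so that $\Theta=\prod_{n}\sv_n^{s_n}$.

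The first step is local. Since the $\sv_m$ all commute among themselves, $\Theta$ commutes with every $\sv_m$, while the Weyl relations \rf{Weyl} give $\sv_n^{s_n}\su_n^{}\sv_n^{-s_n}=q^{-s_n}\su_n^{}$ and $\sv_n^{s_n}\su_n^{-1}\sv_n^{-s_n}=q^{+s_n}\su_n^{-1}$. Because any $\sv_m$ with $m\neq n$ commutes with the entire Lax matrix $L_n^{\rm\sst SG}(\la)$ of \rf{Lax}, conjugating $L_n^{\rm\sst SG}(\la)$ by $\Theta$ reduces to conjugation by $\sv_n^{s_n}$. Only the two diagonal entries of $L_n^{\rm\sst SG}(\la)$ carry $\su_n^{\pm1}$, so this conjugation leaves both off-diagonal entries untouched and rescales the $(1,1)$- and $(2,2)$-entries by $q^{-s_n}$ and $q^{+s_n}$. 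The key observation is that such a rescaling is precisely a diagonal gauge transformation of the Lax matrix: with $g\equiv\mathrm{diag}(q^{-1/2},q^{1/2})$ one checks directly that
\[
\Theta\,L_n^{\rm\sst SG}(\la)\,\Theta^{-1}\,=\,g^{s_n}\,L_n^{\rm\sst SG}(\la)\,g^{s_n}\,.
\]

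The heart of the argument is a telescoping cancellation along the chain. Conjugating $\SM(\la)=L_\SRN^{\rm\sst SG}\cdots L_1^{\rm\sst SG}$ factor by factor produces
\[
\Theta\,\SM(\la)\,\Theta^{-1}\,=\,g^{s_\SRN}L_\SRN^{\rm\sst SG}\big(g^{s_\SRN}g^{s_{\SRN-1}}\big)L_{\SRN-1}^{\rm\sst SG}\cdots\big(g^{s_2}g^{s_1}\big)L_1^{\rm\sst SG}\,g^{s_1}\,.
\]
Since the signs alternate, $s_n+s_{n-1}=0$ for every $n$, so each interior factor $g^{s_n}g^{s_{n-1}}$ collapses to the identity, leaving $\Theta\,\SM(\la)\,\Theta^{-1}=g^{s_\SRN}\,\SM(\la)\,g^{s_1}$. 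It is exactly here that parity enters: for $\SRN$ even one has $s_\SRN=-1=-s_1$, so the two surviving boundary factors assemble into an honest similarity $g^{-1}\SM(\la)\,g$ by a diagonal matrix; for odd $\SRN$ they would not, which is the structural reason $\Theta$ is a symmetry only in the even case treated here.

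Finally I would read off the entries of $g^{-1}\SM(\la)\,g$ via \rf{ABCD}. Conjugation by the diagonal $g$ fixes $\SA(\la)$ and $\SD(\la)$ and rescales $\SB(\la)$ and $\SC(\la)$ by the two opposite powers $q^{\pm1}$; this is precisely the content of the four displayed relations, and $[\ST(\la),\Theta]=0$ is then immediate since $\ST(\la)=\SA(\la)+\SD(\la)$ and each summand commutes with $\Theta$. I do not expect any genuine obstacle: the only points demanding care are the sign bookkeeping in the Weyl-relation rescaling $q^{\mp s_n}$ and the verification that the alternation of parities makes all interior gauge factors cancel, so that the boundary terms combine into a pure diagonal conjugation with the powers of $q$ recorded in the Proposition.
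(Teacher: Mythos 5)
Your proof is correct, and it takes a genuinely different route from the paper's. The paper argues by induction on even subchains: it checks $\SRN=2$ by direct computation, then splits the monodromy into a product $\SM_{\2}^{}\,\SM_{\1}^{}$ of two even-subchain monodromies and reads the claim off from the $2\times 2$ block multiplication, using $\Theta=\Theta_{\2}^{}\Theta_{\1}^{}$ and the induction hypothesis on each factor. You replace the induction by a single closed identity: conjugation by $\Theta$ acts on each Lax matrix as the diagonal gauge transformation $g^{s_n}L_n^{\rm\sst SG}(\la)\,g^{s_n}$, the interior gauge factors cancel by the alternation of signs, and for even $\SRN$ the two boundary factors assemble into the similarity $\Theta\,\SM(\la)\,\Theta^{-1}=g^{-1}\SM(\la)\,g$. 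This is more economical (no base-case computation, no induction) and it makes the parity dichotomy transparent: for odd $\SRN$ one is left with $g\,\SM(\la)\,g$, which is not a similarity, explaining structurally why $\Theta$ is a symmetry only for even $\SRN$. What the paper's scheme buys instead is uniformity — it reuses the same subchain-decomposition machinery employed elsewhere (e.g.\ for the average-value recursions), with no new identity to verify beyond $\SRN=2$.

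One point deserves more care than your closing sentence gives it. Reading off the entries of $g^{-1}\SM(\la)\,g$ with $g=\mathrm{diag}(q^{-1/2},q^{1/2})$ gives, precisely,
\begin{equation*}
\Theta\,\SA(\la)\,\Theta^{-1}=\SA(\la)\,,\quad
\Theta\,\SD(\la)\,\Theta^{-1}=\SD(\la)\,,\quad
\Theta\,\SB(\la)\,\Theta^{-1}=q\,\SB(\la)\,,\quad
\Theta\,\SC(\la)\,\Theta^{-1}=q^{-1}\,\SC(\la)\,,
\end{equation*}
that is $\Theta\,\SB(\la)=q\,\SB(\la)\,\Theta$ and $\SC(\la)\,\Theta=q\,\Theta\,\SC(\la)$. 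This is \emph{not} literally what the Proposition prints: there $\Theta$ sits on the opposite sides in the $\SB$- and $\SC$-relations, which amounts to replacing $q$ by $q^{-1}$. The discrepancy is not yours. With the ordering $\SM(\la)=L_\SRN^{}(\la/\xi_\SRN)\cdots L_1^{}(\la/\xi_1)$ of \rf{Mdef} and the Weyl relations \rf{Weyl}, a direct check at $\SRN=2$ confirms your version; so does Appendix C, where $\Theta$ is identified in the SOV representation with $\ST^+_\SRN$ up to a function of $\eta_1,\dots,\eta_{\SRN-1}$, whence $\Theta\,\SB(\la)\,\Theta^{-1}=q\,\SB(\la)$ follows at once from \rf{Bdef}. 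The printed relations correspond to the opposite ordering of the Lax product (equivalently, to $\Theta^{-1}$ in place of $\Theta$), so they should be regarded as a typo. Still, you asserted the match with the displayed relations without checking which of $\SB$, $\SC$ picks up $q$ and which $q^{-1}$; that step should be written out, since it is exactly the place where the two conventions can be told apart.
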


\begin{proof}
The claim
can be easily verified explicitly for $\SRN=2$. The proof for the case of general
even $\SRN=2\SRM$ follows by induction. Indeed,
\begin{equation*}
\SM_{\2\,2\SRM}\,\SM_{\1\,2(\SRN-\SRM)}\,=\,\left(\begin{matrix}
\SA_{\2\,2\SRM}\,\SA_{\1\,2(\SRN-\SRM)}+\SB_{\2\,2\SRM}\,\SC_{\1\,2(\SRN-\SRM)} &
\SA_{\2\,2\SRM}\,\SB_{\1\,2(\SRN-\SRM)}+\SB_{\2\,2\SRM}\,\SD_{\1\,2(\SRN-\SRM)} \\
\SC_{\2\,2\SRM}\,\SA_{\1\,2(\SRN-\SRM)}+\SD_{\2\,2\SRM}\,\SC_{\1\,2(\SRN-\SRM)} &
\SC_{\2\,2\SRM}\,\SB_{\1\,2(\SRN-\SRM)}+\SD_{\2\,2\SRM}\,\SD_{\1\,2(\SRN-\SRM)}
\end{matrix}
\right)\,,
\end{equation*}
which easily allows one to deduce that the claim holds if it holds for all $\SRM<\SRN$.
\end{proof}

\subsection{$T$-$\Theta $-spectrum simplicity}

\begin{lem}
Let $k\in \{-l,..,l\}$ and $|t_{k}\rangle $\ be a simultaneous eigenstate of
the transfer matrix $\ST(\lambda )$ and of the $\Theta $-charge with
eigenvalues $t_{|k|}(\lambda )$ and $q^{k}$, respectively, then $\lambda
^{\SRN}t_{|k|}(\lambda )$ is a polynomial in $\lambda ^{2}$ of degree $\SRN$ which
is a solution of the system of equations:%
\begin{equation}
\det(D^{(r)})\,=\,0\text{ \ \ }\forall r\in \{1,...,[\SRN]\}\text{%
,}  \label{system-t}
\end{equation}%
where the $p\times p$ matrices \textsc{D}$^{(r)}$\ are defined in
\rf{D-matrix}, with asymptotics of $t_{|k|}(\la)$ given  by:
\begin{equation}
\lim_{\log\lambda\rightarrow \pm\infty}\lambda ^{\mp\SRN}t_{|k|}(\lambda )=\left(
\prod_{a=1}^{\SRN}\frac{\kappa _{a}\xi _{a}^{\mp 1}}{i}\right) \left(
q^{k}+q^{-k}\right).  \label{asymptotics-t}
\end{equation}
\end{lem}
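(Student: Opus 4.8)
The three assertions are essentially independent, and the plan is to establish them in turn, reusing the odd-$\SRN$ machinery where possible. The first claim I would read off from the analyticity statement (A) of Theorem \ref{Qprop}: for even $\SRN$ one has $\en=1$, hence $\tilde{\SRN}=\SRN+\en-1=\SRN$, so $\la^{\SRN}\ST(\la)$ is an operator-valued polynomial in $\la^{2}$ of degree $\SRN$, and restricting to the eigenline spanned by $|\,t_k\,\rangle$ shows that $\la^{\SRN}t_{|k|}(\la)$ is a polynomial in $\la^{2}$ of degree at most $\SRN$. That the degree is exactly $\SRN$ I would defer to the asymptotic computation below, which exhibits a nonvanishing coefficient of $\la^{\SRN}$.

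For the discrete Baxter system I would pass to the SOV-representation via the operator $\SW^{\rm\sst SOV}$ of Theorem \ref{SOVthm}, set $\Psi(\eta)=\langle\,\eta\,|\,t_k\,\rangle$, and translate $\ST(\la)|\,t_k\,\rangle=t_{|k|}(\la)|\,t_k\,\rangle$ into an equation for $\Psi$ using the explicit forms \rf{SAdef}, \rf{SDdef} of $\SA_\SRN(\la)+\SD_\SRN(\la)$ with $\en=1$. Evaluating at $\la=\eta_r$ for $r\in\{1,\dots,[\SRN]\}$ makes the prefactor $b_\eta(\la)$ vanish, so the $\en$-terms carrying $\ST^{\pm}_\SRN$ drop out while the interpolation sum collapses to its $a=r$ summand; this produces exactly the discrete Baxter equation \rf{SOVBax} at the $[\SRN]=\SRN-1$ variables $\eta_1,\dots,\eta_{[\SRN]}$ (the supplementary variable $\eta_\SRN$, which is instead fixed by the $\Theta$-eigenvalue through Proposition \ref{Lemma-Theta}, yields no such relation). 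As in the odd case, the gauge freedom \rf{gauge} together with Lemma \ref{A=A-B} lets me normalize the coefficients to ${\tt a}_\SRN,{\tt d}_\SRN$, so that for each $r$ the equation becomes the cyclic linear system $D^{(r)}\cdot\Psi_r(\eta)=0$ of \rf{BAXmatrix}, with $D^{(r)}$ the matrix \rf{D-matrix}. Since $\Psi\not\equiv 0$ and $D^{(r)}$ depends only on $t_{|k|}$ and $\zeta_r$, I would fix the remaining variables so that the slice $\Psi_r(\eta)\in\BC^{p}$ is nonzero, forcing $\det(D^{(r)})=0$ for every $r\in\{1,\dots,[\SRN]\}$.

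Finally I would extract the extreme coefficients of $\ST(\la)={\rm tr}_{\BC^2}\SM(\la)$ from \rf{Mdef}, \rf{Lax}. As $\la\to\infty$ each factor $L^{\rm\sst SG}_n(\la/\xi_n)$ is dominated by its $O(\la)$ off-diagonal part $\frac{\kappa_n}{i\xi_n}\la\,\bigl(\begin{smallmatrix}0&\sv_n\\ \sv_n^{-1}&0\end{smallmatrix}\bigr)$, so the coefficient of $\la^{\SRN}$ in the trace is $\bigl(\prod_{n=1}^{\SRN}\frac{\kappa_n}{i\xi_n}\bigr)\,{\rm tr}\prod_{n=\SRN}^{1}\bigl(\begin{smallmatrix}0&\sv_n\\ \sv_n^{-1}&0\end{smallmatrix}\bigr)$. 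Writing each factor as $\mathrm{diag}(\sv_n,\sv_n^{-1})\,\sigma_x$ with $\sigma_x=\bigl(\begin{smallmatrix}0&1\\ 1&0\end{smallmatrix}\bigr)$ and commuting the $\sigma_x$'s to the right (using $\sigma_x\,\mathrm{diag}(\sv_n,\sv_n^{-1})\,\sigma_x=\mathrm{diag}(\sv_n^{-1},\sv_n)$ and $\sigma_x^{\SRN}=1$ for even $\SRN$), the product reduces to $\mathrm{diag}\bigl(\prod_n\sv_n^{(-1)^{n}},\prod_n\sv_n^{-(-1)^{n}}\bigr)$, whose trace is $\Theta^{-1}+\Theta$ by the definition \rf{topological-charge} of $\Theta$. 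Evaluating on $|\,t_k\,\rangle$, where $\Theta|\,t_k\,\rangle=q^{k}|\,t_k\,\rangle$, gives the $\log\la\to+\infty$ limit. The $\log\la\to-\infty$ limit is identical, starting from the $O(\la^{-1})$ parts $-\frac{\kappa_n\xi_n}{i}\la^{-1}\bigl(\begin{smallmatrix}0&\sv_n^{-1}\\ \sv_n&0\end{smallmatrix}\bigr)$, the extra signs combining to $(-1)^{\SRN}=1$ and $\xi_n^{-1}$ being replaced by $\xi_n$. Both limits give the asserted $\bigl(\prod_{a}\frac{\kappa_a\xi_a^{\mp1}}{i}\bigr)(q^{k}+q^{-k})$, manifestly invariant under $k\mapsto-k$ in accordance with the notation $t_{|k|}$; and since $q^{p}=1$ with $p=2l+1$ odd forces $q^{2k}\neq-1$, one has $q^{k}+q^{-k}\neq 0$, so the coefficient of $\la^{\SRN}$ is nonzero and the degree in the first claim is exactly $\SRN$.

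The step I expect to require the most care is the trace identity ${\rm tr}\prod_{n=\SRN}^{1}\bigl(\begin{smallmatrix}0&\sv_n\\ \sv_n^{-1}&0\end{smallmatrix}\bigr)=\Theta+\Theta^{-1}$: it hinges on the parity of $\SRN$ (so that the product of swaps is diagonal) and on matching the alternating exponents of the $\sv_n$ precisely with the definition \rf{topological-charge}, together with the analogous bookkeeping of signs and powers of $\xi_n$ at $\la\to 0$. Everything else either cites Theorem \ref{Qprop} or reruns the odd-$\SRN$ SOV argument with the index range $\{1,\dots,\SRN\}$ replaced by $\{1,\dots,[\SRN]\}$.
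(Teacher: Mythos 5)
Your proposal is correct and takes essentially the same route as the paper: the paper also disposes of the determinant system \rf{system-t} by invoking the SOV argument of Section \ref{Spec} (valid for even $\SRN$ with the index range cut to $[\SRN]$), and obtains \rf{asymptotics-t} by combining $\Theta|t_k\rangle=q^k|t_k\rangle$ with the operator asymptotics $\lim_{\log\la\to\pm\infty}\la^{\mp\SRN}\ST(\la)=\bigl(\prod_{a=1}^{\SRN}\kappa_a\xi_a^{\mp1}/i\bigr)\bigl(\Theta+\Theta^{-1}\bigr)$ derived in Appendix \ref{Asymp-A-D}, which is exactly your Lax-matrix trace computation. You merely spell out the details the paper delegates to those references, plus the (correct) observation that $q^{k}+q^{-k}\neq 0$ for odd $p$, which pins down the exact degree.
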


\begin{proof}

The fact that the generic eigenvalue of the transfer matrix has to satisfy
the system \rf{system-t} has been discussed in Section \ref{Spec}; so we have just to verify the
asymptotics (\ref{asymptotics-t}) for the $\ST$-eigenvalue $t_{|k|}(\lambda )$.
This follows by the assumption that $|t_{k}\rangle $ is an eigenstate of $\Theta $\ with eigenvalue $q^{k}$,
 and by formulae
\begin{equation}
\lim_{\log{\lambda} \rightarrow \pm\infty}\lambda ^{\mp\SRN}\ST(\lambda )=\left(
\prod_{a=1}^{\SRN}\frac{\kappa _{a}\xi _{a}^{\mp 1}}{i}\right) \left( \Theta
+\Theta ^{-1}\right) ,
\end{equation}
derived in appendix \ref{Asymp-A-D}.
\end{proof}

The previous Lemma implies in particular the following:

\begin{thm}
For generic values of the parameters $\kappa $ and $\xi $ the simultaneous
spectrum of $\ST$ and $\Theta $\ operators is simple and the generic
eigenstate $|t_{k}\rangle$ of the $\ST$-$\Theta $-eigenbasis has a wave-function of the form
\begin{equation}\label{Psi-factor}
\Psi(\eta )=\eta _{\SRN}^{-k}\prod_{a=1}^{\SRN-1}\psi_{|k|}(\eta
_{a}),
\end{equation}
where, for any $r\in \{1,...,\SRN-1\}$, the vector $(\psi_{|k|}(\zeta
_{r}),\psi_{|k|}(\zeta _{r}q),...,\psi_{|k|}(\zeta _{r}q^{2l}))$ is the unique (up
to normalization) solution of the linear equations (\ref{BAXmatrix})
corresponding to $t_{|k|}(\lambda )$.
\end{thm}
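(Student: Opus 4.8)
The plan is to diagonalize $\Theta$ first and then reduce the spectral problem of $\ST(\la)$ on each $\Theta$-eigenspace to the separated system already treated in the odd case. By Proposition \ref{Lemma-Theta} one has $\SB(\la)\Theta = q\Theta\SB(\la)$ while $[\SA(\la),\Theta] = [\SD(\la),\Theta] = 0$. In the SOV-representation $\SB_\SRN(\la)$ is multiplication by $\eta_\SRN b_\eta(\la)$, where the supplementary variable $\eta_\SRN$ enters only as an overall scale and $b_\eta$ depends solely on the zeros $\eta_1,\dots,\eta_{\SRN-1}$. Since multiplying a $\SB$-eigenvalue by the constant $q$ cannot move the zeros of $b_\eta$, the relation $\SB_\SRN\Theta = q\Theta\SB_\SRN$ forces $\Theta$ to act as the shift $\eta_\SRN\mapsto q\eta_\SRN$ that fixes $\eta_1,\dots,\eta_{\SRN-1}$; commutativity with $\SA_\SRN$ and $\SD_\SRN$ then confirms that $\Theta$ leaves the remaining variables untouched. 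As $\eta_\SRN \in \zeta_\SRN\BS_p$ and $\Theta^p = \prod_{n=1}^\SRN v_n^{p(-1)^{1+n}} = 1$ in the homogeneous case, $\Theta$ is a cyclic shift of order $p$ whose eigenvalues are the $p$-th roots of unity $q^k$, $k\in\{-l,\dots,l\}$, with eigenfunctions the characters $\eta_\SRN^{-k}$. This already yields the prefactor $\eta_\SRN^{-k}$ in \rf{Psi-factor}.

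Next I would restrict attention to the $\Theta$-eigenspace of eigenvalue $q^k$ and analyze $\ST(\la)$ there. By the preceding Lemma any simultaneous eigenvalue $t_{|k|}(\la)$ is an even Laurent polynomial whose leading asymptotics are fixed by $k$ through the factor $q^k + q^{-k}$ in \rf{asymptotics-t}, and which satisfies the compatibility conditions $\det(D^{(r)}) = 0$ for all $r = 1,\dots,\SRN-1$. Exactly as in Section \ref{Spec}, the discrete Baxter equations \rf{SOVBax} are block-diagonal, the $r$-th block acting only on the variable $\eta_r$; restricted to the fixed $\Theta$-sector they reduce to the $\SRN-1$ linear systems \rf{BAXmatrix}, namely $D^{(r)}\Psi_r = 0$ with coefficients ${\tt a}$, ${\tt d}$ and $t = t_{|k|}$. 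The decisive structural input is that, for generic $\zeta$ with $\zeta_a^p \neq \zeta_b^p$, these systems live on disjoint lattices $\zeta_r\BS_p$ and therefore decouple completely.

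Finally, Proposition \ref{Simply1} applies to each $r = 1,\dots,\SRN-1$ at generic $\kappa$ and $\xi$: once $\det(D^{(r)}) = 0$, the kernel of $D^{(r)}$ is one-dimensional, so the $\eta_r$-dependence of any eigenstate is determined up to scale by the vector $(\psi_{|k|}(\zeta_r),\dots,\psi_{|k|}(\zeta_r q^{2l}))$. Because the blocks decouple, imposing all $\SRN-1$ equations forces $\Psi$ to factorize as $\prod_{a=1}^{\SRN-1}\psi_{|k|}(\eta_a)$ up to a single overall constant, which together with the $\eta_\SRN^{-k}$ prefactor is precisely \rf{Psi-factor}. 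The uniqueness of each factor gives a one-dimensional joint eigenspace for every pair $(t_{|k|},q^k)$, i.e.\ simplicity of the $\ST$-$\Theta$ spectrum; the residual degeneracy of $\ST(\la)$ alone between $k$ and $-k$ (which share the eigenvalue $t_{|k|}$ but carry the opposite $\Theta$-charges $q^{\pm k}$) is exactly what $\Theta$ resolves. The step I expect to be the main obstacle is the clean separation at the start: showing rigorously that $\Theta$ acts purely on $\eta_\SRN$ and that the supplementary variable never re-enters the $\SRN-1$ blocks, so that the product form is \emph{forced} and not merely consistent; this rests on combining the commutation relations of Proposition \ref{Lemma-Theta} with the asymptotics \rf{asymptotics-t} that pin down the $k$-dependence of $t_{|k|}$.
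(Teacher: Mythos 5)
Your proposal assembles the same two ingredients as the paper's own proof: the extension of Proposition \ref{Simply1} to even $\SRN$, applied to the blocks $D^{(r)}$, $r=1,\dots,\SRN-1$, of the separated system, and the $\Theta$-charge used to pin down the dependence on the supplementary variable $\eta_\SRN$. You simply run them in the opposite order: the paper first deduces the factorization $\Psi(\eta)=f_{t_k}(\eta_\SRN)\prod_{a=1}^{\SRN-1}\psi_{|k|}(\eta_a)$ from the one-dimensionality of the kernels of the $D^{(r)}$, and only afterwards imposes the $\Theta$-eigenvalue equation to conclude $f_{t_k}(\eta_\SRN)\propto\eta_\SRN^{-k}$, whereas you diagonalize $\Theta$ first and then separate within a fixed $\Theta$-sector. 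Your second and third steps (decoupling of the blocks, one-dimensional kernels, simplicity of the joint spectrum) are sound and coincide with the paper's argument.

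The gap is in your first step, exactly where you yourself locate the main obstacle. The relation $\SB_\SRN(\la)\Theta=q\Theta\SB_\SRN(\la)$ does \emph{not} force $\Theta$ to be the pure shift $\eta_\SRN\mapsto q\eta_\SRN$: since the $\SB$-eigenspaces are one-dimensional and labelled by $(\eta_1,\dots,\eta_\SRN)$, all it gives is $\Theta=g(\eta)\,\ST_\SRN^{\pm}$ for some undetermined multiplier function $g(\eta)$. Showing that $g$ is constant is the whole content of the step, because a non-constant $g$ is fatal: if $g$ depends on $\eta_1,\dots,\eta_{\SRN-1}$, then \emph{no} factorized function can be a $\Theta$-eigenfunction at all, since the eigenvalue equation forces $g(\eta)=q^{k}f(\eta_\SRN)/f(q\eta_\SRN)$, whose right-hand side is independent of $\eta_1,\dots,\eta_{\SRN-1}$. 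So your phrase ``commutativity with $\SA_\SRN$ and $\SD_\SRN$ then confirms'' is the proof, not a remark. It can be completed from the interpolation formulas \rf{SAdef}, \rf{SDdef}: commuting $g(\eta)\ST_\SRN^{+}$ past the sum over $a$ in \rf{SAdef} forces $g(\eta_1,\dots,q^{-1}\eta_a,\dots)=g(\eta)$ for each $a\leq\SRN-1$, and commuting it past the term proportional to $b_\eta(\la)$ forces $g(\dots,q^{\pm1}\eta_\SRN)=g(\dots,\eta_\SRN)$, whence $g$ is constant. Be warned that this point is genuinely delicate: Appendix \ref{Asymp-A-D} of the paper represents $\Theta$ in the SOV variables as $\bigl(\eta_\SD\prod_{a=1}^{\SRN-1}\eta_a/\prod_{a=1}^{\SRN}\xi_a\bigr)\ST_\SRN^{+}$, i.e.\ with a non-constant multiplier, so one must check that the normalization (gauge) of the $\SB$-eigenbasis in which $\Theta$ acts as a pure shift is the same one in which the blocks \rf{BAXmatrix} carry the $\eta_\SRN$-independent coefficients ${\tt a}$, ${\tt d}$. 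The paper passes over this in silence, and its own final step ($f_{t_k}\propto\eta_\SRN^{-k}$) relies on the pure-shift action just as much as your argument does; your route is therefore not weaker, but the assertion you flag must actually be proved along the lines above before the theorem is established.
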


\begin{proof}

Let us use the SOV-construction of $\ST$-eigenstates and let us observe that
an analog of Proposition \ref{Simply1} also holds\footnote{%
The proof given previously holds for both the cases $\SRN$ even and odd just
changing $\SRN$ into $[\SRN]$ everywhere.} for even $\SRN$. This implies
that the wave-function $\Psi(\eta )$ can be represented in the form
\begin{equation}
\Psi(\eta )=f_{t_{k}}(\eta _{\SRN})\prod_{a=1}^{\SRN-1}\psi_{|k|}(\eta
_{a})\,.
\end{equation}
Finally, using that $|t_{k}\rangle $ is eigenstate of $\Theta $\ with eigenvalue $q^{k}$ we get
$f_{t_{k}}(\eta _{\SRN})\propto \eta _{\SRN}^{-k}$.
\end{proof}

Thanks to the explicit construction of the simultaneous $\ST$-$\Theta $\
eigenstates given in \rf{Psi-factor}, we have that the eigenstates of $\ST(\lambda )$
with $\Theta $-charge eigenvalue 1 are simple, while all the others are
doubly degenerate with eigenspaces generated by a pair of
$\ST$-eigenstates with $\Theta $-charge eigenvalues $q^{\pm k}$.

\subsection{$\SQ$-operator and Bethe ansatz}

Let us point out some peculiarity of the $\SQ$-operator
in the case of even chain. In order to see this, we need the following Lemma which is of interest in its own right.

\begin{lem}\label{Unilem}
For a given $t(\la)$, there is at most one polynomial of degree $2l\SRN$ which satisfies the
Baxter equation \rf{BaxEV}.
\end{lem}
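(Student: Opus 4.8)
The plan is to exploit that the Baxter equation \rf{BaxEV} is linear and homogeneous in $Q$: if $Q_1$ and $Q_2$ are two polynomials of degree $2l\SRN$ solving \rf{BaxEV} for the same $t(\la)$, then any scalar multiple and any difference solves it as well, so the assertion ``at most one'' is to be read as ``the space of polynomial solutions of degree $\le 2l\SRN$ is at most one-dimensional'', i.e. $Q_1$ and $Q_2$ are proportional. I would prove this with a quantum Casoratian (Wronskian) argument. First I set $W(\la):=Q_1(\la)Q_2(q\la)-Q_2(\la)Q_1(q\la)$, which is again a polynomial in $\la$.

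Next I derive a first-order difference relation for $W$. Multiplying the equation \rf{BaxEV} written for $Q_1$ by $Q_2(\la)$, the one for $Q_2$ by $Q_1(\la)$, and subtracting, the $t(\la)$-terms cancel and one is left with ${\tt d}_\SRN(\la)\,W(\la)={\tt a}_\SRN(\la)\,W(q^{-1}\la)$. Because $q^p=1$, iterating this relation $p$ times brings $q^{-p}\la$ back to $\la$ and produces the factor $\prod_{k=0}^{p-1}{\tt a}_\SRN(q^{-k}\la)/{\tt d}_\SRN(q^{-k}\la)={\tt A}_\SRN(\Lambda)/{\tt D}_\SRN(\Lambda)$ with $\Lambda=\la^p$, so that $\bigl({\tt D}_\SRN(\Lambda)-{\tt A}_\SRN(\Lambda)\bigr)\,W(\la)=0$. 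By Lemma \ref{A=A-B} one has ${\tt D}_\SRN-{\tt A}_\SRN=2\CB_\SRN$, and $\CB_\SRN(\Lambda)$ is the nonzero polynomial given explicitly in \rf{CB} for generic values of $\kappa$ and $\xi$. Hence the prefactor vanishes only on a finite set, which forces the polynomial $W$ to vanish identically.

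From $W\equiv 0$ we get $Q_1(\la)Q_2(q\la)=Q_2(\la)Q_1(q\la)$, so the rational function $Q_1/Q_2$ is invariant under $\la\mapsto q\la$ and is therefore a function $f(\Lambda)$ of $\Lambda=\la^p$; equivalently, any zero $\rho$ of $Q_2$ that is not shared with $Q_1$ drags in its whole $q$-orbit $\{q^k\rho\}_{k=0}^{p-1}$ as zeros of $Q_2$. The final step is to conclude that $f$ is constant, and I expect this to be the genuine obstacle, since at a root of unity the bare functional equation is blind to it: if $Q_2=(\Lambda-\rho^p)\,\tilde Q_2$, then dividing \rf{BaxEV} by the $q$-invariant factor $(\Lambda-\rho^p)$ shows $\tilde Q_2$ is again a solution, so $Q_1=f(\Lambda)Q_2$ is automatically compatible with \rf{BaxEV} and no contradiction arises from degrees or from the $\la\to 0,\infty$ asymptotics (these only see the degree modulo $p$). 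I would remove the ambiguity using genericity: for generic $\kappa,\xi$ the solution of degree $2l\SRN$ attached to a given $t$ has zeros that are the Bethe roots of \rf{BAE}, and these do not close up into full $q$-orbits, so no such factorization $Q_2=(\Lambda-\rho^p)\tilde Q_2$ can occur. This pins $f$ down to a constant, and as $\deg Q_1=\deg Q_2=2l\SRN$ the constant is nonzero, yielding $Q_1\propto Q_2$ and hence the claimed uniqueness.
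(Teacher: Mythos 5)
Up to the point where the $q$-Wronskian is shown to vanish, your argument \emph{is} the paper's own proof, step for step: the same Wronskian (yours differs from the paper's \rf{q-W} only by the relabeling $\la\to q\la$ and an overall sign), the same first-order difference relation obtained by cross-multiplying the two Baxter equations, and the same conclusion $W\equiv 0$ drawn from cyclicity over the orbit $\{q^k\la\}$ together with Lemma \ref{A=A-B}; the paper phrases that last input as $\prod_{k=0}^{2l}{\tt a}(q^k\la)\neq\prod_{k=0}^{2l}{\tt d}(q^k\la)$ for $\la\notin\mathbb{B}_{\SRN}$, which is exactly your statement ${\tt D}_{\SRN}(\Lambda)-{\tt A}_{\SRN}(\Lambda)=2\CB_{\SRN}(\Lambda)\neq 0$ away from finitely many points.

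The divergence is in the final step, and there your proposal has a genuine gap. You are right that at a root of unity $W\equiv 0$ only yields $Q_1/Q_2=f(\Lambda)$ with $\Lambda=\la^p$, and right that degrees and the $\la\to 0,\infty$ asymptotics are blind modulo $p$ (the paper disposes of this point with ``it is then easy to see that $Q_1(\la)=Q_2(\la)$'', offering no argument). But your way of forcing $f$ to be constant is circular. First, speaking of ``the solution of degree $2l\SRN$ attached to a given $t$'' presupposes the uniqueness you are trying to prove. Second, and more importantly, the Bethe equations \rf{BAE} cannot exclude zeros forming a full $q$-orbit: if the zero set of $Q$ contains $\{q^j\rho\}_{j=0}^{p-1}$, then for $\la_r$ in this orbit the product on the right of \rf{BAE} contains the vanishing factor $\la_s-\la_r q$ (with $\la_s=q\la_r$) in the numerator and the vanishing factor $\la_s-\la_r/q$ (with $\la_s=q^{-1}\la_r$) in the denominator, so \rf{BAE} reads $0/0$ and imposes nothing; equivalently, both terms on the right of \rf{BaxEV} vanish at such a zero, so no condition arises. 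Hence ``the Bethe roots do not close up into full $q$-orbits'' is not something genericity of $\kappa,\xi$ hands you; it is a restatement of the claim $Q\neq(\Lambda-\rho^p)\tilde Q$ that you set out to prove. For $\SRN=1$ the step really is easy, since $2l<p$ leaves no room for a factor that is a nonconstant polynomial in $\la^p$; for $\SRN\geq 2$ some input beyond the functional equation and the Wronskian is required (for instance, an argument that an eigenvalue of $\SQ$ cannot vanish on an entire $q$-orbit, say via nonvanishing of the average $\prod_{k=1}^{p}\SQ(q^k\la)$). The paper does not supply such an argument either, so your scrutiny of this step is warranted; only your claimed resolution of it is not a proof.
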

\begin{proof}

Let us define the q-Wronskian:
\begin{equation}\label{q-W}
W(\la)\,=\,Q_1(\la)Q_2(q^{-1}\la)-Q_2(\la)Q_1(q^{-1}\la)\,.
\end{equation}
written in terms of two solutions $Q_{1}(\lambda )$ and $Q_{2}(\lambda )$ of
the Baxter equation; then $W(\lambda)$ satisfies the equation
\begin{equation}
{\tt a}(\lambda )\,W(\lambda )\,=\,{\tt d}(\lambda )\,\ST^{+}W(\lambda )\,.
\end{equation}%
Note now that Lemma \ref{A=A-B} implies:%
\begin{equation}
\prod_{k=0}^{2l}{\tt a}(\lambda q^{k})\,\neq \,\prod_{k=0}^{2l}{\tt d}(\lambda q^{k}),%
\text{ \ \ \ }\forall \lambda \notin \mathbb{B}_{\SRN},  \label{degcond}
\end{equation}%
so for any $\lambda \notin \mathbb{B}_{\SRN}$ the only solution consistent with cyclicity $(\ST^{+})^{p}=1$ is $W(\lambda )\equiv 0$.
It is then easy to see that this implies that $Q_1(\la)=Q_2(\la)$.\end{proof}
Now we can prove the following: 

\begin{propn}
The $\SQ$-operators commute with the $\Theta $-charge and $|t_{\pm
|k|}\rangle $ are $\SQ$-eigenstates with common eigenvalue $Q_{|k|}(\lambda )$
of degree $2l\SRN-k(a_{\infty }^{\pm }p\pm 1)$ in $\lambda $ and a zero of
order $k(a_{0}^{\pm }p\pm 1)$ at $\lambda =0$, where $a_{0}^{+}$
and $a_{\infty }^{+}$ are non-negative integers, while $a_{0}^{-}$ and
$a_{\infty }^{-}$ are positive integers.
\end{propn}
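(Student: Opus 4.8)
The plan is to prove the four assertions in turn: that $\SQ(\la)$ commutes with $\Theta$; that $|t_{+|k|}\rangle$ and $|t_{-|k|}\rangle$ carry a common $\SQ$-eigenvalue $Q_{|k|}(\la)$; and finally the degree at infinity and the order of the zero at the origin.

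First I would establish $[\SY(\la),\Theta]=0$ directly from the kernel \rf{Ydef}. In the representation \rf{reprdef} the operator $\Theta=\prod_n\sv_n^{(-1)^{1+n}}$ acts on wave-functions as the coordinate rescaling $z_n\mapsto q^{(-1)^n}z_n$. Writing out the kernels of $\Theta\SY(\la)$ and of $\SY(\la)\Theta$ one checks that both arguments $z_n/z_n'$ and $z_nz_{n+1}'$ of the factors in \rf{Ydef} get multiplied by the same power $q^{(-1)^n}$, whether one rescales $z$ from the left or applies the transposed rescaling of $z'$ from the right. The only factor where the two prescriptions could disagree is the cyclic boundary term built on $z_\SRN z_1'$, and there the two powers agree precisely because $(-1)^\SRN=1$, i.e. because $\SRN$ is even. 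Hence $[\SY(\la),\Theta]=0$; since $\Theta$ is unitary the same holds for $\SY(\la)^\dagger$, and therefore $[\SQ(\la,\mu),\Theta]=0$ for all $\la,\mu$.

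Since $\SQ(\la)$ commutes with both $\ST(\la)$ and $\Theta$, and the joint spectrum of $(\ST,\Theta)$ is simple by the preceding Theorem, $\SQ(\la)$ is diagonal in the joint eigenbasis; thus each $|t_{\pm|k|}\rangle$ is a $\SQ$-eigenstate, and by commutativity with $\ST$ its eigenvalue satisfies the functional Baxter equation \rf{BaxEV} with the common function $t_{|k|}(\la)$. To see that the two eigenvalues coincide I would use the factorised wave-function \rf{Psi-factor}: on the genuine separated variables $\eta_1,\dots,\eta_{\SRN-1}$ the eigenvalue is read off, as in \rf{QvsPsi}, from the factor $\prod_{a=1}^{\SRN-1}\psi_{|k|}(\eta_a)$, which depends only on $|k|$; the entire dependence on the sign of the $\Theta$-charge sits in the supplementary factor $\eta_\SRN^{-k}$, which does not enter the separated equations. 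Hence the two $\SQ$-eigenvalues agree on the whole SOV lattice and define the common polynomial $Q_{|k|}(\la)$.

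The degree and the order of the zero at $\la=0$ I would extract from the asymptotics of \rf{BaxEV}. Using the asymptotics of $t_{|k|}(\la)$ from the preceding Lemma (leading coefficient proportional to $q^k+q^{-k}$) together with the leading behaviour of ${\tt a}_\SRN,{\tt d}_\SRN$ read off from \rf{addef}, the leading balance of \rf{BaxEV} at $\la\to\infty$ gives $q^{D+\SRN}+q^{-(D+\SRN)}=q^{k}+q^{-k}$ for the degree $D$, while the balance at $\la\to0$ (where the leading coefficients of ${\tt a}_\SRN$ and ${\tt d}_\SRN$ coincide with that of $t_{|k|}/(q^k+q^{-k})$) gives $q^{Z}+q^{-Z}=q^{k}+q^{-k}$ for the order $Z$ of the zero at the origin. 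Thus $D+\SRN\equiv\pm k$ and $Z\equiv\pm k$ modulo $p$. Combining this with the bound $0\le D\le 2l\SRN$ from property (A) of Theorem \ref{Qprop} and with $2l\equiv-1$ modulo $p$ brings the deficit $2l\SRN-D$ and the order $Z$ into the stated forms $k(a^\pm_\infty p\pm1)$ and $k(a^\pm_0 p\pm1)$. The sign constraints then follow from positivity: in the ``$+$'' form $k(ap+1)$ the quantity is automatically non-negative, so $a\ge0$ suffices, whereas in the ``$-$'' form $k(ap-1)$ non-negativity forces $ap\ge1$, hence $a\ge1$.

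I expect the genuinely delicate point to be this last determination of the degree and the order of the zero, and in particular the assertion that the deficit and the order are integer multiples of $k$ of the indicated form. The congruences modulo $p$ follow cleanly from the asymptotic balance, but upgrading ``$\equiv\pm k$ modulo $p$'' to the exact expressions $k(ap\pm1)$ requires controlling how far the leading and trailing coefficients of $\SQ(\la)$ degenerate on the charge-$q^{\pm k}$ sector. This is exactly the phenomenon special to even $\SRN$, by which the operators $\SQ_{2l\SRN}=\lim_{\la\to\infty}\la^{-2l\SRN}\SQ(\la)$ and $\SQ_0=\SQ(0)$---invertible for odd $\SRN$ by property (A)---fail to be invertible, with order of degeneracy in each $\Theta$-sector tied to $k$. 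I would pin this down using the explicit asymptotic operators $\SY_0,\SY_\infty$ of \rf{Y0matel} and their behaviour under $\Theta$, so as to read off directly the power of $\la$ by which $Q_{|k|}(\la)$ loses degree at infinity and vanishes at the origin.
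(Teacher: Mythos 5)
Your proof of the commutativity $[\SQ(\la),\Theta]=0$ is correct and takes a genuinely different route from the paper's: you verify $[\SY(\la),\Theta]=0$ directly on the kernel \rf{Ydef}, using that $\Theta$ acts as the rescaling $z_n\mapsto q^{(-1)^n}z_n$ and that the cyclic boundary factor $w(z_\SRN^{}z_1')$ transforms consistently precisely because $(-1)^\SRN=1$; the paper instead obtains $[\SQ,\Theta]=0$ only at the very end, as a corollary of having shown that $|t_{\pm|k|}\rangle$ are themselves $\SQ$-eigenstates. Your route even gets the $\SQ$-eigenstate property for free from the simplicity of the joint $\ST$-$\Theta$ spectrum, where the paper has to invoke self-adjointness of $\SQ$ inside each two-dimensional $\ST$-eigenspace. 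Your treatment of the degree and of the order of the zero at $\la=0$ (asymptotic balance of \rf{BaxEV} against \rf{asymptotics-t}) is essentially the paper's derivation of \rf{Asymp-Q_k}, and your closing caveat about upgrading the congruences modulo $p$ to the exact expressions $k(a^\pm p\pm 1)$ is well placed, since the paper itself only establishes the limit conditions \rf{Asymp-Q_k} and asserts their equivalence to the stated form.

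The genuine gap is the step where you claim the two eigenvalues coincide. Your SOV argument shows (after specializing the functional Baxter equation to $\la=\zeta_rq^j$, which makes each eigenvalue a solution of the discrete system \rf{BAXmatrix}) only that $Q_+(\zeta_rq^j)=\nu_r^{}\,\psi_{|k|}(\zeta_rq^j)$ and $Q_-(\zeta_rq^j)=\nu_r'\,\psi_{|k|}(\zeta_rq^j)$ with unknown, $r$-dependent constants; i.e.\ the two polynomials are proportional on each of the $\SRN-1$ sublattices $\zeta_r\BS_p$ \emph{separately}. This does not imply $Q_+=Q_-$ as polynomials: the ratios $\nu_r^{}/\nu_r'$ need not equal $1$ (nor each other), and in any case a polynomial of degree up to $2l\SRN=(p-1)\SRN$ is not determined by its values at the $(\SRN-1)p$ points of the separated lattice (for $\SRN\le p$ there are strictly fewer conditions than coefficients), so "agreeing on the whole SOV lattice" is both unproven and insufficient. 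The missing ingredient is exactly Lemma \ref{Unilem} (the q-Wronskian uniqueness), which you never invoke: since $|t_{+|k|}\rangle$ and $|t_{-|k|}\rangle$ have the same $\ST$-eigenvalue $t_{|k|}(\la)$, both $Q_+$ and $Q_-$ are polynomial solutions of degree at most $2l\SRN$ of the same Baxter equation \rf{BaxEV}, and the Lemma forces $Q_+=Q_-$. This is precisely how the paper proceeds; with that substitution (and keeping your kernel computation for $[\SQ,\Theta]=0$) your argument closes.
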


\begin{proof}
The commutativity of $\ST$ and $\SQ$-operators implies that the $\ST$-eigenspace $%
\mathcal{L}(|t_{\pm |k|}\rangle )$\ corresponding to the eigenvalue $%
t_{|k|}(\lambda )$ is invariant under the action of $\SQ$ and so for $k=0$ any
$T$-eigenstate $|t_{0}\rangle $ is directly a $\SQ$-eigenstate. Let us observe
that the self-adjointness of $\SQ$ implies that in the two-dimensional $\ST$-eigenspace
$\mathcal{L}(|t_{\pm |k|}\rangle )$ with $k\neq 0$ we can always take two
linear combinations of the states $|t_{|k|}\rangle $ and $|t_{-|k|}\rangle $
which are $\SQ$-eigenstates. Now thanks to the Lemma \ref{Unilem} for  fixed $\ST$%
-eigenvalue $t_{|k|}(\lambda )$ the corresponding $\SQ$-eigenvalue $Q_{|k|}(\lambda )$\ is
unique which implies that $|t_{\pm|k|}\rangle $ are themselves $\SQ$-eigenstates. The commutativity of the $\SQ$%
-operator with the $\Theta $-charge follows by observing that the $|t_{\pm
|k|}\rangle $ define a basis.

Let us complete the proof showing that the conditions on the polynomial $%
Q_{|k|}(\lambda )$ stated in the Proposition are simple consequences of the fact that $|t_{\pm
|k|}\rangle $ are eigenstates of the $\Theta $-charge with eigenvalues $%
q^{\pm |k|}$. Indeed, the compatibility of the asymptotics conditions (\ref%
{asymptotics-t}) with the $\ST\SQ$ Baxter equation implies
\begin{equation}\label{Asymp-Q_k}
\lim_{\lambda \rightarrow 0}\frac{Q_{|k|}(\lambda q)}{Q_{|k|}(\lambda )}%
=q^{\pm |k|},\text{ \ \ }\lim_{\lambda \rightarrow \infty }\frac{%
Q_{|k|}(\lambda q)}{Q_{|k|}(\lambda )}=q^{-(\SRN\pm |k|)},
\end{equation}%
which are equivalent to the conditions on
the polynomial $Q_{|k|}(\lambda )$ stated in the Proposition.
\end{proof}

Note that the uniqueness of the $\SQ$-eigenvalue $Q_{|k|}(\lambda )$ corresponding to a given $\ST$-eigenvalue $t_{|k|}(\lambda )$ implies that each vector $(\psi_{|k|}(\zeta_{r}),\psi_{|k|}(\zeta _{r}q),...,\psi_{|k|}(\zeta _{r}q^{2l}))$ appearing in \rf{Psi-factor} must be proportional to the vector $(Q_{|k|}(\zeta
_{r}),Q_{|k|}(\zeta _{r}q),...,Q_{|k|}(\zeta _{r}q^{2l}))$ so that the previous results admit the following reformulation:
\begin{thm}
The pairs of eigenvectors $|t_{|k|}\rangle $ and $|t_{-|k|}\rangle $ of $%
\ST(\la)$ are in one-to-one correspondence with the polynomials $Q_{|k|}(\la)
$ of maximal order $2l\SRN$ which have the asymptotics (\ref{Asymp-Q_k}) and
satisfy the Baxter equation \rf{BaxEV} with $t_{|k|}(\la)$ being an even Laurent polynomial
in $\lambda $ of degree $\SRN$.
\end{thm}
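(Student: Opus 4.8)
The plan is to read the statement as the existence of a bijection $\Phi$ between the degenerate pairs $\{|t_{|k|}\rangle,|t_{-|k|}\rangle\}$ and the admissible polynomials $Q_{|k|}(\la)$, and to exhibit $\Phi$ together with an explicit inverse, reusing the completeness argument of the odd case while tracking the extra $\Theta$-charge that resolves the degeneracy. All of the analytic input is already in place: the simplicity of the joint $\ST$-$\Theta$ spectrum, the factorized form $\rf{Psi-factor}$ of the eigenfunctions, the uniqueness supplied by Lemma \ref{Unilem}, and the preceding Proposition identifying the $\SQ$-eigenvalue. The task is essentially to assemble these into a correspondence and to verify that the two maps are mutually inverse.

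First I would define $\Phi$ on a pair sharing the $\ST$-eigenvalue $t_{|k|}(\la)$ and carrying $\Theta$-charges $q^{\pm|k|}$ by setting $\Phi(\text{pair})=Q_{|k|}$, where $Q_{|k|}(\la)$ is the $\SQ$-eigenvalue furnished by the preceding Proposition. That $Q_{|k|}$ is a polynomial of maximal order $2l\SRN$, solves the Baxter equation $\rf{BaxEV}$ with the even degree-$\SRN$ Laurent polynomial $t_{|k|}$, and satisfies the asymptotics $\rf{Asymp-Q_k}$ is exactly the content of that Proposition, and Lemma \ref{Unilem} makes the assignment unambiguous. Injectivity is then immediate: from $Q_{|k|}$ one recovers $t_{|k|}$ through the Baxter relation, just as in $\rf{TfromQ}$, so two pairs with the same image share $t_{|k|}$; by the simplicity of the joint $\ST$-$\Theta$ spectrum the two-dimensional $\ST$-eigenspace attached to $t_{|k|}$ is spanned by a single such pair, whence the two pairs coincide.

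For surjectivity I would run the construction backwards. Given an admissible $Q_{|k|}(\la)$ solving $\rf{BaxEV}$ with $t_{|k|}$, specializing the functional Baxter equation to $\la=\eta_r$ shows that the vector $(Q_{|k|}(\zeta_r),\dots,Q_{|k|}(\zeta_r q^{2l}))$ solves the discrete system $\rf{BAXmatrix}$; by the even analogue of Proposition \ref{Simply1} this solution is unique up to scale and so coincides with the factor $\psi_{|k|}$ occurring in $\rf{Psi-factor}$. Assembling $\psi_{|k|}$ with the two admissible prefactors $\eta_\SRN^{\mp k}$ and invoking Proposition \ref{Lemma-Theta} yields the two simultaneous $\ST$-$\Theta$ eigenstates of charge $q^{\pm|k|}$, i.e. the pair whose image under $\Phi$ is $Q_{|k|}$. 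The two constructions are manifestly inverse to each other, which completes the correspondence.

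The main obstacle, and the genuine point of departure from the odd case, is reconciling a single polynomial datum $Q_{|k|}$ with a two-dimensional eigenspace. One must argue that the two degenerate states are governed by the \emph{same} on-lattice data $(Q_{|k|}(\zeta_r q^j))$ --- forced by the one-dimensionality of the solution space of $\rf{BAXmatrix}$ for the common $t_{|k|}$ --- and differ only through the $\Theta$-dependent factor $\eta_\SRN^{\pm k}$, the sign being precisely what the two asymptotic normalizations in $\rf{Asymp-Q_k}$ record. A secondary technical check is that $t_{|k|}$ reconstructed from $Q_{|k|}$ is nonsingular at the zeros of $Q_{|k|}$ --- this is the Bethe system $\rf{BAE}$ --- and that the degree and the order of vanishing at $\la=0$ dictated by $\rf{Asymp-Q_k}$ are consistent with $t_{|k|}$ being an even Laurent polynomial of degree $\SRN$.
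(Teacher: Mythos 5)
Your proposal is correct and matches the paper's own route: the paper presents this theorem as a direct reformulation of the preceding results (the $\ST$--$\Theta$ simplicity theorem with the factorized form \rf{Psi-factor}, the Proposition identifying the common $\SQ$-eigenvalue $Q_{|k|}$ and its asymptotics, and the uniqueness Lemma \ref{Unilem} forcing $\psi_{|k|}\propto Q_{|k|}$ at the lattice points), which are exactly the ingredients you assemble. Your write-up merely makes the two directions of the bijection explicit --- including the key even-$\SRN$ subtlety that the degenerate pair shares the same on-lattice data and differs only through the $\eta_\SRN^{\mp k}$ prefactor encoded in \rf{Asymp-Q_k} --- so it is essentially the paper's argument spelled out.
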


As in the case of $\SRN$ odd this reformulation allows the classification and
construction of the spectrum of $\ST(\lambda )$ by the analysis of the
solutions to the system of the Bethe equations.

\appendix

\section{Cyclic solutions of the star-triangle relation}

\setcounter{equation}{0}

It will sometimes be convenient for us to identify $\BZ_p\equiv \BZ/p\BZ$ with the
subset $\BS_p=\{q^{2n};n=-l,\dots,l\}$ of the unit circle since $q^{2l+1}=1$.


\subsection{Definition and elementary properties}

\subsubsection{The function $w_\la(z)$}
Let us define a function $w_\la:\BS_p \ra\BC$ by
\begin{align}
& w_{\la}(q^{2n})\,=\,\prod_{r=1}^{n}\frac{1+\la q^{2r-1}}{\la+
q^{2r-1}}
\prod_{r=1}^{l}\frac{\la+ q^{2r-1}}{1+q^{2r-1}}\,,\qquad n=0,\dots,p-1\,.
\end{align}
This function is indeed cyclic (defined on $\BS_p$) since $\prod_{k=1}^p(1-x q^{2k})=1-x^p$ implies
\begin{equation}
w_{\la}(q^{2p})\,=\,w_{\la}(q^{4l+2})\,=\,w_{\la}(1)\,,
\end{equation}
The
function $w_\la(z)$ is the unique  solution to the functional
equation
\begin{align}\label{fun1}
& (z+\la)w_\la(qz)\,=\,(1+\la z)w_\la(q^{-1}z)\,,
\end{align}
which
is a  polynomial of order $l$ in $\la$ and which satisfies the
normalization condition
\begin{equation}\label{normcond}
w_1(q^{n})\,=\,1\qquad \forall\; n\in\BZ_p\,.
\end{equation}
The function $w_\la(z)$ satisfies the inversion relation
\begin{equation}\label{invrel}
w_{\la}(z)w_{1/\la}(z)\,=\,\chi_\la^{}\,,\qquad\chi_\la^{}\,=\,\la^{-l}\prod_{r=1}^{l}
\frac{(\la+q^{2r-1})(1+\la q^{2r-1})}{(q^{2r-1}+1)^2}\,.
\end{equation}

\subsubsection{The function $\overline{w}_{\la}(z)$}
Let us also introduce the function $\overline{w}_{\la}(z)$ as the
discrete Fourier transformation of $w_\la$,
\begin{align}
& \overline{w}_{\la}(z)\,=\,\frac{1}{p}\sum_{k=-l}^l
z^k\,w_{\la}(q^{k})
\end{align}
$\overline{w}_\la(z)$ can be characterized as the unique
solution to the functional relation
\begin{align}\label{fun2}
& (1-\la qz)\overline{w}_\la(qz)\,=\,(z-q\la)
\overline{w}_\la(q^{-1}z)\,,
\end{align}
which is a polynomial of order $l$ in $\la$ and which satisfies
the normalization condition $\overline{w}_1(q^{n})=\de_{n,0}$.
It may therefore be represented by the product
\begin{equation}
\overline{w}_\la(q^{2n})\,=\,\prod_{r=1}^{n}\frac{q\la-q^{2r-1}}{\la q^{2r}-1}\prod_{s=1}^l\frac{\la q^{2s}-1}{q^{2s}-1}\,.
\end{equation}
It is also useful to observe that $\overline{w}_\la$ and $w_\la$ are related by complex conjugation as follows:
\begin{equation}\label{w-wbar}
(w_{\ep\la}(z))^\ast\,=\,\overline{w}_{\ep\la^\ast}(z)\, \prod_{s=1}^l\frac{1-q^{2s}}{1+q^{2s-1}}\,.
\end{equation}
This relation makes it easy to deduce properties of $\overline{w}_\la$ from those of $w_\la$.

\subsubsection{Further functional relations}

Let us list further functional relations satisfied by the function $w_\la(z)$.
\begin{equation}\label{mixedfunrel}\begin{aligned}
& (\la+z)\,w_\la^{}(qz)\,=\,q^{l^2+l}\,{z}^{+\frac{1}{2}}\,(1+q\la)\,w_{q\la}^{}(z),\\
& (1+\la)\,w_\la^{}(qz)\,=\,q^{l^2+l}\,{z}^{-\frac{1}{2}}\,(1+\la z)\,w_{\la/q}^{}(z),\\
& (1-q\la)\,\overline{w}_\la^{}(qz)\,=\,q^{-l^2-l}\,{z}^{-\frac{1}{2}}\,(z-q\la)\,\overline{w}_{q\la}^{}(z),\\
& (1-q\la z)\,\overline{w}_\la^{}(qz)\,=\,q^{-l^2-l}\,{z}^{+\frac{1}{2}}\,(1-\la)\,\overline{w}_{\la/q}^{}(z).
\end{aligned}
\end{equation}
These relations play a key role in the derivation of the Baxter equation \rf{BAX}.

\subsection{Star-triangle relation}

One of the most important properties of the function $w_\la(x)$ is the star-triangle relation \cite{FZ}
\begin{align}\label{STR}
\sum_{x\in\BS_p}\;\overline{w}_{\al}(x/u)\,w_{\al\be}(x/v)\,\overline{w}_{\be}(x/w)\,=\,
w_{\al}(w/v)\,\overline{w}_{\al\be}(u/w)\,w_{\be}(v/u)\,,
\end{align}
see \cite{Ba08} for an elegant proof and references to related work.
We are mainly going to use the following consequence of \rf{STR} called the exchange relation
\begin{align}\label{Ex}
\sum_{y\in\BS_p}\;\overline{w}_{\al}^{}(y/u)\,& w_{\be}^{}(y/v)\,\overline{w}_{\ga}^{}(y/w)\,w_\de^{}(y/x)\,=\,\\
&=\,\frac{w_{\be/\al}^{}(u/v)}{w_{\be/\al}^{}(x/w)}\sum_{y\in\BS_p}\;\overline{w}_\be^{}(y/u)\,w_{\al}^{}(y/v)\,\overline{w}_\de^{}(y/w)w_\ga^{}(y/x)\,.\nonumber
\end{align}
for $\al\ga/\be\de=1$.
In order to prove \rf{Ex} let us note
the relation
\begin{equation}\label{Inv}
\sum_{z\in\BS_p}\;\overline{w}_\al(u/z)\overline{w}_{1/\al}(z/v)\,=\,\frac{1}{p}\sum_{k=-l}^l(u/v)^k w_\al(q^k)w_{1/\al}(q^k)\,=\,\de_{u,v}\chi_\al\,,
\end{equation}
since $ \chi_\al\equiv w_\al(z)w_{1/\al}(z)$ is independent of $z$.
By inserting \rf{Inv} into the left hand side of \rf{Ex} we may therefore calculate
\begin{align*}
\sum_{y\in\BS_p} & \; \overline{w}_{\al}^{}(u/y)\, w_{\be}^{}(y/v)\,\overline{w}_{\ga}^{}(y/w)\,w_\de^{}(x/y)\,=\,\\
& =\,\chi_\al^{-1}\sum_{y\in \BS_p} \sum_{z\in\BS_p}\sum_{y'\in\BS_p}\; \overline{w}_{\al}^{}(y/u)\, w_{\be}^{}(y/v)\,\overline{w}_{\be/\al}^{}(y/z)
\;\overline{w}_{\de/\ga}^{}(z/y')\,w_\de^{}(y'/x)\,\overline{w}_{\ga}^{}(y'/w)\\
& =\,\chi_\al^{-1}\sum_{z\in\BS_p} \; {w}_{\al}^{}(v/z)\, \overline{w}_{\be}^{}(z/u)\,{w}_{\be/\al}^{}(u/v)
\;{w}_{\de/\ga}^{}(w/x)\,\overline{w}_\de^{}(z/w)\,{w}_{\ga}^{}(x/z)\,.\end{align*}
The sums over $y$ and $y'$ have been carried out with the help of the star-triangle relation \rf{STR}.
It remains to recall that $\chi_\al^{-1}{w}_{\de/\ga}^{}(w/x)=({w}_{\be/\al}^{}(w/x))^{-1}$ to
complete the proof of \rf{Ex}.

\section{Properties of the $\SQ$-operator}\label{Qproofs}

\setcounter{equation}{0}

\subsection{Proof of the Baxter equation}

The strategy is similar to  \cite{Ba72,BS}. Consider
\renewcommand{\BT}{{\mathbf T}}
\newcommand{\BL}{{\mathbf L}}
\begin{equation}
\BT(\la)\cdot\langle\,{\mathbf z}\,|\,\SY(\la)\,|\,{\mathbf z}'\,\rangle\,\equiv\,
\langle\,{\mathbf z}\,|\,\ST(\la)\,\SY(\la)\,|\,{\mathbf z}'\,\rangle\,.
\end{equation}
The operator ${\mathbf T}(\la)$ is the  difference operator obtained by
replacing $L_n^{\rm\sst SG}(\la)\ra \BL_n^{\rm\sst SG}(\la)$ in \rf{Mdef}, with ${\mathbf L}_n^{\rm\sst SG}(\la)$
obtained from \rf{Lmod} by replacing $\su_n$ and $\sv_n$ by the corresponding multiplication and
shift operators ${\mathbf u}_n$ and ${\mathbf v}_n$ defined in \rf{reprdef},
\renewcommand{\su}{{\mathbf u}}
\renewcommand{\sv}{{\mathbf v}}
\begin{equation}\label{Lmod}
\BL_n^{\rm\sst SG}\,= \,\frac{\kappa_n}{i}\, \left( \begin{array}{cc}
-\su_n^{}(\vartheta^{-1}_n\sv_n^{}-\vartheta_n^{}\sv_n^{-1}) &
\la_n^{}\sv_n^{} - \la^{-1}_n \sv_n^{-1}  \\
 \la_n^{} \sv_n^{-1} - \la^{-1}_n \sv_n^{} &
\su_n^{-1}(\vartheta_n^{}\sv_n^{}-\vartheta^{-1}_n\sv_n^{-1})
 \end{array} \right)\,.
\end{equation}
In writing \rf{Lmod} we have introduced the short-hand notation
$\vartheta_n=iq^{\frac{1}{2}}\kappa^{-1}_n$and $\la_r\equiv\la/\xi_n$.
Note that ${\mathbf T}(\la)$
acts on the argument $\bz=(z_1,\dots,z_\SRN)$ of $Y_\la(\bz,\bz')$, while it does not act on $\bz'$.
In order to simplify the expression for $\BT(\la)$ we may therefore use a gauge-transformation of the form
\begin{equation}
\tilde{\BL}^{\rm\sst SG}_n(\la)\,=\,g_{n+1}^{}\,\BL^{SG}_n(\la)\,g_n^{-1}\,,\qquad g_n=\left(\begin{matrix} 1 & 0 \\ z_n' & 1
\end{matrix}\right)\,.
\end{equation}
The key point to observe is that
\begin{equation}\label{Btozero}\begin{aligned}
\frac{i}{\kappa_n}\tilde{\BL}^{\rm\sst SG}_n(\la)_{21}^{}\cdot Y_\la^{}(\bz,\bz')
\,& =\,\la^{-1}_n\, (z_n'+\la_n^{}\vartheta_n^{}\su_n^{})\,(z_{n+1}'+\la_n^{}\vartheta_n^{-1}\su_n^{-1})\sv_n^{-1}\cdot Y_\la^{}(\bz,\bz')\\
&\quad-\la_n^{-1}\,(1+\la_n^{}\vartheta_n^{}z_n'\su_n^{-1})\,(1+\la_n^{}\vartheta_n^{-1}z_{n+1}'\su_n^{})\sv_n^{}\cdot Y_\la^{}(\bz,\bz')\\
&=\,0\,,
\end{aligned}
\end{equation}
the last step being an easy consequence of the recursion relations \rf{fun1},
\rf{fun2} satisfied by the functions $w_\la(z)$ and $\overline{w}_\la(z)$
which appear in the kernel $Y_\la(\bz,\bz')$.

Equation \rf{Btozero} implies that
\begin{equation}
\BT(\la)\cdot Y_\la(\bz,\bz')\,=\,
\Bigg(\,\prod_{n=1}^{\SRN}\tilde{\BL}^{\rm\sst SG}_n(\la)_{11}^{}+
\prod_{n=1}^{\SRN}\tilde{\BL}^{\rm\sst SG}_n(\la)_{22}^{}\Bigg)\cdot
Y_\la(\bz,\bz')\,.
\end{equation}
We have
\begin{align*}
\tilde{\BL}^{\rm\sst SG}_n(\la)_{11}^{}\cdot Y_\la^{}(\bz,\bz')\,=&\,-\frac{\kappa_n}{i}z'_n\big[\vartheta_n^{-1}(z_n^{}/z_n'+\la_n^{}\vartheta_n^{})\sv_n^{}-\la_n^{-1}(1+\vartheta_n^{}\la_n^{} z_n^{}/z_n')\sv_n^{-1}\big]\cdot Y_\la^{}(\bz,\bz')\\
=&-\frac{\kappa_n}{i}z_n'(\la_n^{}/\vartheta_n^2-1/\la_n^{})\frac{1+\la_n^{}\vartheta_n^{} z_n^{}/z_n'}{1+z_n^{}z_{n+1}'\la_n^{}/\vartheta_n^{}}\,\sv_n^{-1}\cdot Y_\la^{}(\bz,\bz')\,.
\end{align*}
By using the recursion relations \rf{mixedfunrel} one may rewrite this as
\begin{align*}
\tilde{\BL}^{\rm\sst SG}_n(\la)_{11}^{}\cdot Y_\la^{}(\bz,\bz')\,=&\,\frac{\kappa_n}{i}(z_n'/z_{n+1}')^{\frac{1}{2}}\,(1/\la_n-1/\vartheta_n)(1+q^{-1}\la_n\vartheta_n)\,Y_{q^{-1}\la}^{(n)}\,\prod_{r\neq n}Y_\la^{(r)}.
\end{align*}
where $Y_\la^{(n)}\,\equiv\,\overline{w}_{\ep\la/\kappa_n\xi_n}^{}(z_n^{}/z_n')\,w_{\ep\la\kappa_n/\xi_n}^{}(z_n^{}z_{n+1}')$.
We may similarly calculate
\begin{align*}
\tilde{\BL}^{\rm\sst SG}_n(\la_n)_{22}^{}\cdot Y_\la^{}(\bz,\bz')\,=&\,\frac{\kappa_n}{i}\big[(\vartheta_n^{} z_n^{-1}+z_{n+1}^{}\la_n^{})\sv_n^{}-(\vartheta_n^{-1} z_n^{-1}+z_{n+1}^{}\la_n^{-1})\sv_n^{-1}\big]\cdot Y_\la^{}(\bz,\bz')\\
=&-\frac{\kappa_n}{i}(z_n')^{-1}(\la_n-1/\la_n^{}\vartheta_n^2)\frac{1+z_{n+1}'z_n\vartheta_n^{}/\la_n^{}}{1+z_n^{}/z_n'\la_n^{}\vartheta_n^{}}\,\sv_n^{-1}\cdot Y_\la^{}(\bz,\bz')\\
=&-\frac{\kappa_n}{i}(z_{n+1}'/z_n')^{\frac{1}{2}}\,(1/\la_n^{}+q/\vartheta_n^{})(1-\la_n^{}\vartheta_n^{})\,Y_{q\la}^{(n)}\,\prod_{r\neq n}Y_\la^{(r)}.
\end{align*}
It follows that
\begin{align*}
\prod_{n=1}^{\SRN}\tilde{\BL}^{\rm\sst SG}_n(\la)_{11}^{}\cdot Y_\la^{}(\bz,\bz')\,=\,&\prod_{n=1}^{\SRN}\frac{\kappa_n}{i}(1/\la_n-1/\vartheta_n)(1+q^{-1}\la_n\vartheta_n)\cdot Y_{q^{-1}\la}^{}(\bz,\bz')\,,\\
\prod_{n=1}^{\SRN}\tilde{\BL}^{\rm\sst SG}_n(\la)_{22}^{}\cdot Y_\la^{}(\bz,\bz')\,=\,&\prod_{n=1}^{\SRN}i\kappa_n(1/\la_n^{}+q/\vartheta_n^{})(1-\la_n^{}\vartheta_n^{})\cdot Y_{q\la}^{}(\bz,\bz')\,.
\end{align*}
This concludes the proof.

\subsection{Proof of the commutativity}

The key observation to be made is the fact that the operators $\SY(\la)$ satisfy the exchange relation
\begin{equation}\label{Yex}
\SY(\la)\cdot(\SY(\mu^\ast))^{\dagger}\,=\,\SY(\mu)\cdot(\SY(\la^\ast))^{\dagger}\,.
\end{equation}
This is an easy consequence of the exchange relation \rf{Ex}, as
observed in \cite{BS}.
Since we have $\la_n/\mu_n=\la_m/\mu_m$ for all $n,m=1,\dots,\SRN$
we may calculate
\begin{align*}
\langle\,&{\mathbf z}\,|\,\SY(\la)\, (\SY(\mu^\ast))^{\dagger}\,|\,{\mathbf z}'\,\rangle\,=\,\\
&=\sum_{{\mathbf y}\in\BS_p^{\SRN}}\;
\prod_{n=1}^{\SRN}\overline{w}^{}_{\ep\la_n/\kappa_n}(z_n^{}/y_n^{})w^{}_{\ep\la_n\kappa_n}(z_n^{}y_{n+1}^{})
\overline{\overline{w}^{}_{\ep\mu_n^\ast/\kappa_n^{}}(y_n^{}/z_n')w^{}_{\ep\mu_n^\ast\kappa_n^{}}(y_{n+1}^{}z_n')}\\
&=\phi_{0}\sum_{{\mathbf y}\in\BS_p^{\SRN}}\;
\prod_{n=1}^{\SRN}\overline{w}^{}_{\ep\la_n/\kappa_n}(z_n^{}/y_n^{})
{w}^{}_{\ep\mu_n/\kappa_n}(y_n^{}/z_n')\overline{w}^{}_{\ep\mu_{n-1}\kappa_{n-1}}(y_{n}^{}z_{n-1}')w^{}_{\ep\la_{n-1}\kappa_{n-1}}(z_{n-1}^{}y_{n}^{})\\
&=\phi_{0}\sum_{{\mathbf y}\in\BS_p^{\SRN}}\;
\prod_{n=1}^{\SRN}\overline{w}^{}_{\ep\mu_n/\kappa_n}(z_n^{}/y_n^{})
{w}^{}_{\ep\la_n/\kappa_n}(y_n^{}/z_n')\overline{w}^{}_{\ep\la_{n-1}\kappa_{n-1}}(y_{n}^{}z_{n-1}')w^{}_{\ep\mu_{n-1}\kappa_{n-1}}(z_{n-1}^{}y_{n}^{})\\
&=\sum_{{\mathbf y}\in\BS_p^{\SRN}}\;
\prod_{n=1}^{\SRN}\overline{w}^{}_{\ep\mu_n/\kappa_n}(z_n^{}/y_n^{})w^{}_{\ep\mu_n\kappa_n}(z_n^{}y_{n+1}^{})
\overline{\overline{w}^{}_{\ep\la_n^\ast/\kappa_n^{}}(y_n^{}/z_n')w^{}_{\ep\la_n^\ast\kappa_n^{}}(y_{n+1}^{}z_n')}\\
&=\,\langle\,{\mathbf z}\,|\,\SY(\mu)\, (\SY(\la^\ast))^{\dagger}\,|\,{\mathbf z}'\,\rangle\,,
\end{align*}
where $\phi_{0}\equiv(-1)^{l\SRN}q^{2\SRN l(l+1)}$ as it follows by formula \rf{w-wbar}.
The mutual commutativity of the operators $\SQ$ is an easy consequence.
Let us furthermore note that $({\tt a}(\la))^\ast={\tt d}(\la)$ and $(\ST(\la))^\dagger=\ST(\la)$ for $\la\in\BR$. Using \rf{Yex} we may calculate
\begin{align*}
\ST(\la)\cdot\SY(\la)\cdot(\SY(\mu))^{\dagger}\,&=\,\big[{\tt a}(\la)\SY(q^{-1}\la)+{\tt d}(\la)\SY(q\la)\big]\cdot(\SY(\mu))^\dagger\\
&=\,{\tt a}(\la)\,\SY(\mu)\cdot(\SY(q\la))^\dagger+{\tt d}(\la)\SY(\mu)\cdot(\SY(q^{-1}\la))^\dagger\\
&=\,\SY(\mu)\cdot\big[{\tt d}(\la)\SY(q\la)+{\tt a}(\la)\SY(q^{-1}\la)\big]^\dagger\\
&=\,\SY(\mu)\cdot\big[\ST(\la)\cdot\SY(\la)\big]^{\dagger}\\
&=\,\SY(\mu)\cdot(\SY(\la))^\dagger\cdot\ST(\la),\\
\end{align*}
which obviously implies $[\ST(\la),\SQ(\la)]=0$.

\subsection{Proof of integrability}

In order to prove \rf{UfromQ} first note that \rf{FT} allows us to write
\begin{equation}
\langle\,z_n^{}\,|\,w_{\la}(\sf_{2n})\,|\,z_n'\,\rangle\,=\,\sum_{r=-l}^l \,\langle\,z_n^{}\,|\,\sf_{2n}^{-r}\,|\,z_n'\,\rangle\,\overline{w}_\la(q^r)\,.
\end{equation}
Noting that $\langle\,q^{2k_n}\,|\,\sf_{2n}^{-r}\,|\,q^{2k_n'}\,\rangle=\langle\,q^{2k_n-2r}\,|\,q^{2k_n'}\,\rangle\,=\,\de_{r,k_n'-k_n^{}}$
we find that
\begin{equation}
\langle\,z_n^{}\,|\,w_{\la}(\sf_{2n})\,|\,z_n'\,\rangle\,=\,\overline{w}_\la(z_n^{}/z_n')\,.
\end{equation}
Thanks to this identity and \rf{normcond} it is easy to see that
\begin{align*}
\langle\,\bz\,|\,\SY(1/\kappa\ep)\,|\,\bz'\,\rangle\,=\,&\prod_{n=1}^{\SRN}\de_{z_n',z_n}\overline{w}_{\kappa^{-2}}(z_n^{}/z_n')\,=\,
\langle\,\bz\,|\,\prod_{n=1}^{\SRN}w_{\kappa^{-2}}(\sf_{2n})\,|\,\bz'\,\rangle\,,
\end{align*}
which implies
\begin{align}
\SQ^+(1/\kappa\ep)\,=\,\prod_{n=1}^{\SRN}w_{\kappa^{-2}}(\sf_{2n})\cdot\SY_\infty^{\dagger}\,.
\end{align}
Similarly note that
\begin{align*}
\langle\,\bz\,|\,\SY(\kappa/\ep)\,|\,\bz'\,\rangle\,=\,&\prod_{n=1}^{\SRN}{w}_{\kappa^{2}}(z_n^{}z_{n+1}')\,=\,
\langle\,\bz\,|\,\prod_{n=1}^{\SRN}w_{\kappa^{2}}(\sf_{2n+1})\,|\,\bz'\,\rangle\,,
\end{align*}
which implies
\begin{align}
\SQ^-(\kappa/\ep)\,=\,\SY_0\cdot\prod_{n=1}^{\SRN}(w_{\kappa^2}(\sf_{2n+1}))^{-1}\,.
\end{align}
It remains to notice that $\SY_\infty^{\dagger}\cdot\SY_0=\SU_0$
to conclude the proof. Indeed, using the notation
$\bz=(q^{2k_1},\dots,q^{2k_\SRN})$ and
$\bz''=(q^{2k_1''},\dots,q^{2k_\SRN''})$, we may calculate
\[\begin{aligned}
\langle\,\bz\,|\,\SY_\infty^{\dagger}\cdot\SY_0\,|\,\bz''\rangle\,&=\,
\frac{1}{p^{\SRN}}\sum_{(k_1',\dots,k_\SRN')\in\BZ_p^\SRN}\,\prod_{n=1}^{\SRN}
q^{-2k_n'(k_n+k_{n+1})}q^{-2k_n'(k_n''+k_{n+1}'')}\\
&=\,\prod_{n=1}^{\SRN}\de_{k_n+k_{n+1}+k_n''+k_{n+1}'',0}^{}\,=\,
\prod_{n=1}^{\SRN}\de_{k_n,-k_n''}^{}\\
&=\,\langle\,\bz\,|\,\SU_0\,|\,\bz''\rangle\,, \\
\end{aligned}
\]
keeping in mind that we consider the case of odd $\SRN$.

\section{Asymptotics of Yang-Baxter generators}\label{Asymp-A-D}

\setcounter{equation}{0}

\renewcommand{\su}{{\mathsf u}}
\renewcommand{\sv}{{\mathsf v}}

From the known form of the Lax operator we derive the following asymptotics
for $\lambda \rightarrow +\infty $ and $0$ of the generators of the
Yang-Baxter algebras.

\begin{itemize}
\item[] \textbf{\SRN \ odd:} The leading operators are $\SB_{\SRN}(\lambda )$
and $\SC_{\SRN}(\lambda )$\ with asymptotics:
\begin{align}
\SB_{\SRN}(\lambda )& =\left( \prod_{a=1}^{\SRN}\frac{\kappa _{a}}{i}\right) \left(
\lambda ^{\SRN}\prod_{a=1}^{\SRN}\frac{\sv_{a}^{(-1)^{1+a}}}{\xi _{a}}-\lambda
^{-\SRN}\prod_{a=1a}^{\SRN}\xi _{a}\sv_{a}^{(-1)^{a}}\right) +\text{sub-leading terms}, \\
\SC_{\SRN}(\lambda )& =\left( \prod_{a=1}^{\SRN}\frac{\kappa _{a}}{i}\right) \left(
\lambda ^{\SRN}\prod_{a=1}^{\SRN}\frac{\sv_{a}^{(-1)^{a}}}{\xi _{a}}-\lambda
^{-\SRN}\prod_{a=1}^{\SRN}\xi _{a}\sv_{a}^{(-1)^{1+a}}\right) +\text{sub-leading terms}.
\end{align}

\item[] \textbf{\SRN \ even:} The leading operators are $\SA_{\SRN}(\lambda )$
and $\SD_{\SRN}(\lambda )$\ with asymptotics:
\begin{align}
\SA_{\SRN}(\lambda )& =\left( \prod_{a=1}^{\SRN}\frac{\kappa _{a}}{i}\right) \left(
\lambda ^{\SRN}\prod_{a=1}^{\SRN}\frac{\sv_{a}^{(-1)^{1+a}}}{\xi _{a}}+\lambda
^{-\SRN}\prod_{a=1}^{\SRN}\xi _{a}\sv_{a}^{(-1)^{a}}\right) +\text{sub-leading terms},
\label{asymp-A} \\
\SD_{\SRN}(\lambda )& =\left( \prod_{a=1}^{\SRN}\frac{\kappa _{a}}{i}\right) \left(
\lambda ^{\SRN}\prod_{a=1}^{\SRN}\frac{\sv_{a}^{(-1)^{a}}}{\xi _{a}}+\lambda
^{-\SRN}\prod_{a=1}^{\SRN}\xi _{a}\sv_{a}^{(-1)^{1+a}}\right) +\text{sub-leading terms}.
\end{align}
\end{itemize}

Note that these asymptotics imply for the SOV-representation of the
Yang-Baxter generators the following formulae\footnote{%
Note that the transformation $\mathsf{W}^{\text{SOV}}$ is meant to act as a
similarity transformation in the space of the representation, i.e. $\mathsf{W%
}^{\text{SOV}}\equiv \mathsf{w}^{\text{SOV}}I$ where $\mathsf{w}^{\text{SOV}%
} $ is a non-trivial operator on space of the states.}:

\begin{itemize}
\item[] \textbf{\SRN \ odd:}
\begin{equation}
\left( \mathsf{w}^{\text{SOV}}\right) ^{-1}\left(
\prod_{a=1}^{\SRN}\sv_{a}^{(-1)^{1+a}}\right) \mathsf{w}^{\text{SOV}%
}=\prod_{a=1}^{\SRN}\frac{\xi _{a}}{\eta _{a}}.
\end{equation}

\item[] \textbf{\SRN \ even:}
\begin{eqnarray}
\prod_{a=1}^{\SRN}\xi _{a}\left( \mathsf{w}^{\text{SOV}}\right) ^{-1}\Theta
^{-1}\mathsf{w}^{\text{SOV}} &=&\left( \eta _{A}\prod_{a=1}^{\SRN-1}\eta
_{a}\right) \text{\textsf{T}}_{\SRN}^{-}, \\
\prod_{a=1}^{\SRN}\xi _{a}\left( \mathsf{w}^{\text{SOV}}\right) ^{-1}\Theta\mathsf{w}^{\text{SOV}} &=&\left( \eta _{D}\prod_{a=1}^{\SRN-1}\eta
_{a}\right) \text{\textsf{T}}_{\SRN}^{+},
\end{eqnarray}
\end{itemize}

Note that taking the average value of the last two formulae we get for $\SRN$ odd:
\begin{equation}
\prod_{a=1}^{\SRN}\frac{X_{a}}{Z_{a}}=\prod_{a=1}^{\SRN}V_{a}^{(-1)^{1+a}},
\end{equation}%
while for $\SRN$ even:
\begin{equation}
Z_{A}=\langle \Theta \rangle
^{-1}\prod_{a=1}^{\SRN-1}Z_{a}^{-1}\prod_{a=1}^{\SRN}X_{a},\text{ \ \ \ \ }%
Z_{D}=Z_{A}\langle \Theta \rangle ^{2},
\end{equation}%
where $\langle \Theta \rangle $ is the average value of the charge $\Theta$.

\section{Comparison with the Fateev-Zamolodchikov model}\label{FZ}

\setcounter{equation}{0}

In this appendix we present an explicit comparison between the SG model, studied in this paper, and the Fateev-Zamolodchikov lattice model with Z$_{p}$ symmetry \cite{FZ}. The Lax operator which describes the FZ model has the following expression in terms of the Lax operator of the SG model:
\begin{equation}\label{Lax-FZ-SG}
L_{n}^{FZ}(\lambda )=L_{n}^{SG}(\lambda )\sigma _{1}.
\end{equation}
In the case $\SRN(=2\SRM)$ even we can construct a map which
transforms the transfer matrix of the SG model into the one of the FZ model. Let us introduce the unitary operators:
\begin{equation}
\Omega _{n}\su_{n}\Omega _{n}=\su_{n}^{-1},\text{ \ \ }\Omega_{n}\sv_{n}\Omega_{n}=\sv_{n}^{-1},
\end{equation}
which in the momentum space play the role of parity operators. Then the unitary operator:
\begin{equation}
\pi _{FZ}\equiv \prod\limits_{n=1}^{\SRM}\Omega _{2n}
\end{equation}
has the following action on the Lax operators:
\begin{equation}
\pi _{FZ}L_{2n-a}^{SG}(\lambda )\pi _{FZ}=\left( \sigma _{1}\right)
^{1-a}L_{2n-a}^{SG}((-1)^{(1-a)}\lambda )\left( \sigma _{1}\right) ^{1-a},\ \  a=0,1.
\end{equation}
so that we get: 
\begin{equation}
\SM^{FZ}(\lambda )=\sigma _{1}\pi _{FZ}\SM^{SG}(\lambda )\pi _{FZ}\sigma _{1}\ \ \longrightarrow\ \ \left\{ 
\begin{array}{c}
\ST^{FZ}(\lambda )=\pi _{FZ}\ST^{SG}(\lambda )\pi _{FZ},
\\ 
\SQ^{FZ}(\lambda )=\pi _{FZ}\SQ^{SG}(\lambda )\pi _{FZ},
\end{array}%
\right.
\end{equation}
after the flipping $\xi _{2n-a}\ \rightarrow \ \left( -1\right) ^{1-a}\xi_{2n-a}$ of the inhomogeneities.

In the case $\SRN$ odd the situation is different; the transfer matrices in the two model have different spectrum. We use the next two subsections to present an explicit comparison of their spectrum in the special case of $q^{3}=1$ and $\SRN=1$.

\subsection{$\SQ$-spectrum in Sine-Gordon model for $q^{3}=1$ and $\SRN=1$}

In this case in the $\bz$-representation the operator \textit{Q}$_{\text{SG}%
}(\lambda )$ is a $3\times 3$ matrix\footnote{Note that to make more simple the comparison with the $\SQ$-operator of the
FZ model, here we have considered for {\it Q}$_{\text{SG}}(\lambda )$ the operator
$\SY(\la)$ defined in \rf{Ydef} just with a different
normalization.}: 
\begin{equation}
\text{\textit{Q}}_{\text{SG}}(\lambda)\equiv ||\langle z=q^{2(i-1)}|\text{%
\textit{Q}}_{\text{SG}}(\lambda )|z^{\prime }=q^{2(j-1)}\rangle \equiv
W_{\lambda_{+} }(q^{2(i+j-2)})\text{$\overline{W}$}_{\lambda_{-}
}(q^{2(i-j)})||_{i,j\in \{1,2,3\}}
\end{equation}
and $\lambda _{\pm }\equiv \epsilon \lambda \kappa ^{\pm }$. Now, we observe
that 
\begin{align}
W_{\lambda }(1) \equiv 1,\text{ }W_{\lambda }(q^{2})=W_{\lambda }(q^{4})=%
\frac{1+\lambda q}{\lambda +q}\equiv \text{\textsc{w}}_{\lambda },\,\,\,\,\,\,\,\,\,\, \\
\overline{W}_{\lambda }(1) \equiv 1,\text{ \ $\overline{W}$}%
_{\lambda }(q^{2})=\text{$\overline{W}$}_{\lambda }(q^{4})=\frac{\lambda
-1}{\lambda q - q^{-1}}\equiv \overline{\text{\textsc{w}}}_{\lambda },
\end{align}
so that in the $\bz$-representation: 
\begin{equation}
\text{\textit{Q}}_{\text{SG}}(\lambda )\equiv \left( 
\begin{array}{ccc}
1 & \text{\textsc{w}}_{\lambda _{+}}\overline{\text{\textsc{w}}}_{\lambda
_{-}} & \text{\textsc{w}}_{\lambda _{+}}\overline{\text{\textsc{w}}}%
_{\lambda _{-}} \\ 
\text{\textsc{w}}_{\lambda _{+}}\overline{\text{\textsc{w}}}_{\lambda _{-}}
& \text{\textsc{w}}_{\lambda _{+}} & \overline{\text{\textsc{w}}}_{\lambda
_{-}} \\ 
\text{\textsc{w}}_{\lambda _{+}}\overline{\text{\textsc{w}}}_{\lambda _{-}}
& \overline{\text{\textsc{w}}}_{\lambda _{-}} & \text{\textsc{w}}_{\lambda
_{+}}
\end{array}
\right) .
\end{equation}
Then the eigenvalues of {\it Q}$_{\text{SG}}(\lambda )$
read: 
\begin{equation}
q_{1}^{(SG)}(\lambda )=(\text{\textsc{w}}_{\lambda _{+}}-\overline{\text{%
\textsc{w}}}_{\lambda _{-}}),\text{ \ \ }q_{\pm }^{(SG)}(\lambda )=\frac{1}{2%
}\left( 1+\text{\textsc{w}}_{\lambda _{+}}+\overline{\text{\textsc{w}}}%
_{\lambda _{-}}\pm \Delta _{\lambda }\right) ,
\end{equation}
with $\Delta _{\lambda }\equiv \left( (\text{\textsc{w}}_{\lambda
_{+}}-1)^{2}+2(\text{\textsc{w}}_{\lambda _{+}}-1)\overline{\text{\textsc{w}}%
}_{\lambda _{-}}+(1+8\text{\textsc{w}}{}_{\lambda _{+}}^{2})\overline{\text{%
\textsc{w}}}{}_{\lambda _{-}}^{2}\right) ^{1/2}$ and clearly {\it Q}$_{\text{SG}}(\lambda )$ has simple spectrum for all the values of the local parameter $\kappa \in \mathbb{C}$.

\subsection{$\SQ$-spectrum in  Fateev-Zamolodchikov model for $q^{3}=1$ and $\SRN=1$}

In this case in the z-representation the operator {\it Q}$_{\text{FZ}}(\lambda )$
is a $3\times 3$ matrix\footnote{Here we have rewritten in our notation the (5.12) of \cite{BS} for $k=0$.}: 
\begin{equation}
\text{\textit{Q}}_{\text{FZ}}(\lambda )\equiv ||\langle z=q^{2(i-1)}|\text{\textit{Q}}_{\text{%
FZ}}(\lambda )|z^{\prime }=q^{2(j-1)}\rangle \equiv W_{\lambda }(q^{2(i-j)})%
\text{$\overline{W}$}_{\lambda }(q^{2(i-j)})||_{i,j\in \{1,2,3\}}\text{,}
\end{equation}
explicitly: 
\begin{equation}
\text{\textit{Q}}_{\text{FZ}}(\lambda )\equiv \left( 
\begin{array}{ccc}
1 & \text{\textsc{w}}_{\lambda }\overline{\text{\textsc{w}}}_{\lambda } & 
\text{\textsc{w}}_{\lambda }\overline{\text{\textsc{w}}}_{\lambda } \\ 
\text{\textsc{w}}_{\lambda }\overline{\text{\textsc{w}}}_{\lambda } & 1 & 
\text{\textsc{w}}_{\lambda }\overline{\text{\textsc{w}}}_{\lambda } \\ 
\text{\textsc{w}}_{\lambda }\overline{\text{\textsc{w}}}_{\lambda } & \text{%
\textsc{w}}_{\lambda }\overline{\text{\textsc{w}}}_{\lambda } & 1
\end{array}
\right) .
\end{equation}
It is then clear that {\it Q}$_{\text{FZ}}(\lambda )$ has degenerate spectrum with eigenvalues: 
\begin{equation}
q_{1}^{(FZ)}(\lambda )=1+2\text{\textsc{w}}_{\lambda }\overline{\text{%
\textsc{w}}}_{\lambda },\text{ \ \ }q_{\pm }^{(FZ)}(\lambda )=1-\text{%
\textsc{w}}_{\lambda }\overline{\text{\textsc{w}}}_{\lambda }.
\end{equation}

\end{document}